\pgfplotsset{compat=1.18}     
\newcommand{\diag}{\mathrm{diag}}
\newcommand*\diff{\mathop{}\!\mathrm{d}}
\def\ep{\varepsilon}
\def\al{\alpha}
\renewcommand{\P}{\mathbb{P}}
\newcommand{\R}{\mathbb{R}}
\newcommand{\supp}{\mathrm{supp}}
\definecolor{ForestGreen}{rgb}{.13,.54,.13}
\definecolor{BrickRed}{rgb}{.80,.26,.33}
\newcommand{\fed}[1]{\textcolor{BrickRed}{(\textbf{Fedor:} #1)}}
\newcommand{\fede}[1]{\textcolor{ForestGreen}{(\textbf{FE:} #1)}}
\newcommand{\joe}[1]{\textcolor{cyan}{(\textbf{Joe:} #1)}}
\newcommand{\fed}[1]{}
\newcommand{\fede}[1]{}
\newcommand{\joe}[1]{}
\theoremstyle{plain}
\newtheorem{theorem}{Theorem}
\newtheorem{lemma}{Lemma}
\newtheorem{corollary}{Corollary}
\newtheorem{proposition}{Proposition}
\theoremstyle{definition}
\newtheorem{definition}{Definition}
\theoremstyle{remark}
\newtheorem*{remark*}{Remark}
\newtheorem{example}{Example}
\title{Stable Matching as Transport:\\
{\large a Welfarist Perspective on Market Design}\thanks{
The paper has benefited from suggestions and comments by Dmitry Arkhangelsky, Sophie Bade, Peter Biro, Mark Braverman, Ben Brooks, Modibo Camara, Rodrigo Crousillat Rayter, Ryota Iijima, David Kempe, Axel Niemeyer, Afshin Nikzad, Piyush Panigrahi, Marek Pycia, Doron Ravid, Phil Reny, Kirill Rudov, Utku Unver, Rakesh Vohra, Matt Weinberg, Leeat Yariv, and audiences at ASU, U.\ of Arizona, Berkeley, Caltech, CMU, Columbia, Emory, Harvard, NYU, MIT, MSRI, Princeton, Simons Institute, UPenn, Washington University, and Yale.
}}
\author{\normalsize Federico Echenique\thanks{Department of Economics, University of California, Berkeley} \and \normalsize Joseph Root\thanks{Department of Economics, University of Chicago} \and \normalsize Fedor Sandomirskiy\thanks{Department of Economics, Princeton University}}
\date{\normalsize \today \\ the latest  version: \href{https://fedors.info/papers/2024stabletransport/matching_as_transport.pdf}{\bf link}
}
\date{}
\begin{document}

\maketitle
\begin{abstract}
This paper links matching markets with aligned preferences to optimal transport theory. We show that stability, efficiency, and fairness emerge as solutions to a parametric family of optimal transport problems. The parameter indexes a planner's attitude towards inequality. This link offers insights into structural properties of matchings and trade-offs between objectives, showing how stability can lead to welfare inequalities, even among similar agents. Our model captures supply-demand imbalances in contexts like spatial markets, school choice, and ride-sharing. We also show that large markets with idiosyncratic preferences can be well approximated by aligned preferences, expanding the applicability of our results.
\phantom{aaaaaa}\\
\vskip 0.2cm
\phantom{aaaaaa}
\end{abstract}

\newpage
\ifdefined\DRAFT

\thispagestyle{empty}
\tableofcontents

\clearpage
\setcounter{page}{1}

\fi

\section{Introduction}
Stability has become the central organizing principle in the design of matching markets. It is usually defended as a fairness requirement and imposed as a game-theoretic equilibrium constraint on feasible matchings. Framed in this way, stability appears fundamentally different from the explicit objective-maximizing criteria that dominate mechanism design. Our paper offers an alternative perspective. We show that stable matchings can be obtained as limits of solutions to well-defined optimization problems arising in the theory of optimal transport. This reveals an implicit objective optimized by stable matchings, and unifies stability, welfare, and equity in a single framework. Once expressed in this common language, the objectives that these criteria seek to maximize can be directly compared.

We prove that, in a wide range of economically relevant matching markets, the stable matching maximizes a particular social welfare function. Unlike standard utilitarian welfare, this objective places greater weight on high-utility matches and lower weight on low-utility ones. In other words, the notion of social welfare built into stability is far more tolerant of inequality than the usual utilitarian benchmark.

We begin with markets in which preferences are aligned, meaning that a potential function $U$ provides a common index of match quality for both sides of the market. To fix ideas, consider a school district where students prefer nearby schools (say, because school quality is roughly uniform) and schools assign priority to nearby students (a neighborhood-priority rule). Both sides’ preferences are then generated by a common match-quality function
\[
U(x,y) = -\mathrm{distance}(x,y),
\]
so each student $x$ ranks schools by $U(x\,,\,\cdot)$ and each school $y$ ranks students by $U(\,\cdot\,,y)$.

In aligned markets, we show that the stable matching optimizes a version of \emph{Atkinson's inequality index} (Theorem~\ref{th_equivalence}). This index aggregates transformed match qualities $\varphi_{\alpha}(U(x,y))$, where the parameter $\alpha \in \mathbb{R}$ governs the implicit attitude toward inequality. Varying $\alpha$ traces a continuous frontier between canonical benchmarks: $\alpha = 0$ yields utilitarian welfare maximization; $\alpha \to -\infty$ approaches an egalitarian criterion that favors the lower tail of the welfare distribution; and $\alpha \to +\infty$ favors inequality by increasing the weight on high-utility matches at the expense of low-utility ones. We show that the latter case corresponds precisely to stable matchings.

Our results redirect attention to a form of inequality that the matching literature has largely overlooked. Instead of the classic conflict of interest between the two sides of the market (student- or worker-optimal versus school- or firm-optimal), we expose the inequality that stability itself creates \emph{within} each side. By design, stability favors high-utility matches and ignores the negative externalities that those matches, through supply-and-demand imbalances, impose on the rest of the market. The resulting within-side inequality can be stark: tiny differences in types or locations can translate into large welfare gaps, even though more equitable outcomes are feasible. 

At the same time, the use of Atkinson's index with a large positive $\alpha$ does not render stability arbitrarily inefficient or unfair. In aligned environments, we prove that a stable matching achieves at least one-half of the optimal utilitarian welfare and attains an egalitarian outcome at least one-half as good as the optimal egalitarian benchmark (Theorem~\ref{th_welfare}). These bounds provide worst-case performance guarantees for stability under alignment.

The results we have described so far apply to markets with aligned preferences, which may seem like a restrictive assumption. In practice, preferences depend on many dimensions: for example, school quality, student achievement, subject specialization, and sibling priorities. These need not line up across the two sides of the market. Nevertheless, we demonstrate that alignment is more relevant, and our results more broadly applicable, than they might at first appear. 

Many of the dimensions that drive preferences represent vertical quality differentiation and, as we discuss in Section~\ref{sec:motivation}, can be absorbed into the potential $U$. Still, some forms of heterogeneity cannot be captured by a common potential. For example, horizontal preferences, such as variation in individual taste, can lead to misalignment. We therefore extend our analysis to markets with both a common aligned component and horizontal taste shocks.  We show that, if the market is sufficiently large, every matching lies near another matching in which each agent receives an almost perfect partner with respect to the idiosyncratic component  (Theorem~\ref{th_idiosyncractic}). In this sense, large markets are well approximated by an aligned benchmark and the allocation over the aligned component can serve as a useful benchmark for comparing stability, fairness, and welfare in sufficiently large markets. Importantly, the approximation error admits an explicit bound applicable to finite markets.

To leverage the power of optimal transport to understand stable matchings, we further study spatial markets where agents are embedded in $\R^d$, and match quality is determined by distance. The optimal transport formulation yields novel structural insights into stable matchings, with particularly clear results in the one-dimensional case. We identify when there is a conflict between stability, welfare, and fairness, yielding clear benchmarks for real-world settings like school choice and ride-sharing.

Together, our results recast stability as embodying concrete (if likely unintended) policy objectives. In doing so, the paper contributes to a more transparent account of what stability delivers and what it sacrifices, and it provides a tractable benchmark model for analyzing the trade-offs at stake.

\paragraph{Related literature.}
The economic literature on matching markets with aligned preferences remains sparse. A systematic study was initiated by \cite*{niederle2009decentralized} and \cite*{ferdowsian2020decentralized}, who examine the convergence of decentralized matching dynamics in such markets.
\cite*{peski2011smooth} considers a continuous market with aligned preferences induced by $U(x,y)=\sum_{i=1}^d (x_i+y_i)$ for $x,y\in\R^d$, so that all agents prefer higher-coordinate matches; he shows that stable matchings exhibit a rich structure that can be characterized via solutions to partial differential equations. \cite{Clark2020} considers a matching model with one-dimensional types, transferable utility, and \(U(x,y)=f(|x-y|)\). He focuses on patterns of stable matching, and in particular on when matchings are positively or negatively assortative. \cite{Clark2025} considers these issues in a model with imperfectly transferable utility, and \cite{flanders2018} under single-peaked preferences. 
For finite populations with aligned preferences, stable matchings can be obtained via a greedy algorithm that iteratively matches the unmatched pair with the highest utility; this observation is implicit in \cite*{eeckhout2000uniqueness} and \cite*{clark2006uniqueness}, who derive conditions for the uniqueness of stable matchings; see also \cite*{romero2013acyclicity, reny2021simple, gutin2023unique}. \cite*{galichon2023stable} extend the greedy algorithm to many-to-one matching and document its unfairness in simulations, a phenomenon we explore analytically. 

Our paper contributes to the understanding of the stability-efficiency trade-off in markets with aligned preferences, complementing insights by \cite*{cantillon2022respecting} and \cite*{lee2014efficiency}. \cite*{cantillon2022respecting} 
introduce a school choice environment where preference alignment becomes a descriptive property. Under a condition generalizing alignment, 
they demonstrate that stable matchings are Pareto optimal. By contrast, our work adopts a quantitative approach to efficiency, demonstrating that Pareto optimality may entail a loss of utilitarian welfare. \cite*{lee2014efficiency} study large markets with random preferences, showing that when a common utility for each pair is drawn independently from a continuous distribution, the stable matching achieves utilitarian welfare close to the optimum. Our results reveal that this phenomenon is specific to settings with no correlation across agents' preferences: in the presence of correlation, stability may reduce welfare by up to half of the optimal level.

A broader perspective on preference alignment is provided by \cite*{pycia2012stability}, who examines a general coalition-formation model that includes many-to-one matching as a special case. His work focuses on allowing for complementarities and peer effects, proving the existence of stable outcomes under a richness assumption on feasible coalitions and preferences. \cite*{echenique2007solution} make a related point for many-to-one matching, building on the earlier ideas of \cite*{banerjee2001core} in coalition formation. Pycia's work also precedes our discussion in Section~\ref{sec:applications} on the connection between preference alignment and second-stage bargaining. 

Our study touches on several other strains of literature. In computer science, aligned preferences have appeared under the name of ``globally ranked pairs.'' \cite*{abraham2008stable} analyze 
aligned preferences in the roommates problem and \cite*{lebedev2007using} apply them to peer-to-peer networks. For an empirical use of markets with aligned preferences, see, for example, \cite*{sorensen2007smart}, \cite*{agarwal2015empirical}, and \cite{dur2022deduction}.

The connection to optimal transport established in our paper reinforces the common wisdom that matching models with a continuum of agents can be more intuitive than their finite-population counterparts; see, for example, \cite*{ashlagi2016optimal, azevedo2016supply,leshno2021cutoff,arnosti2022continuum}. 
With optimal transport methods, the cardinality of the space of agents becomes largely irrelevant for the problems we address.
Connections to optimal transport have recently been discovered and used in various areas of economic theory, including matching markets with transferable utility \citep*{gretsky1992nonatomic, galichon2022cupid}, mechanism design~\citep*{daskalakis2015strong,perez2023fraud,kolesnikov2022beckmann,mccann2023duality}, information design~\citep*{malamud2021optimal,arieli2023persuasion}, and many others; see surveys by \cite*{ekeland2010notes,carlier2012optimal,galichon2018optimal}. We find a link between stability and a particular area of optimal transport known as concave transport, pioneered by
\cite*{mccann1999exact,gangbo1996geometry}. 
To our knowledge, concave transport has not appeared in economic applications, with the exception of~\cite*{boerma2023composite}, who use it to study labor markets with transferable utility. Their work is the closest to ours in the technical dimension. Although the theory of markets with and without transfers has very few commonalities, our shared reliance on the concave transport perspective highlights its generality and importance for understanding both settings.

\section{Motivation}\label{sec:motivation}

To motivate our results, we discuss a school choice problem. Consider two populations of agents: schools and students. A generic student is referred to by $i$, and a generic school by $j$. Each agent has a fixed \textbf{type}; for a student $i$, the type is $x_i \in X$, and for a school $j$, it is $y_j \in Y$.

When a student $i$ is matched with a school $j$, the utility for $i$ is given by the sum of three components:
$$
u_{i}(j) = \underbrace{q(x_i, y_j)}_{\text{match quality}} + \underbrace{g(y_j)}_{\text{vertical quality}} + \underbrace{\xi_{i}(j).}_{\text{idiosyncratic shock}}
$$
The first term,  $q(x_i, y_j)$, represents some objective notion of match quality. An important example of objective quality is distance: if the type of each agent contains their geographical location, then $q(x_i,y_j)$ could be a decreasing function of the distance between $x_i$ and $y_j$. The role of distance in school choice has been well-documented empirically \citep{walters2018demand,laverde2022distance}. The second term,  $g(y_j)$, captures the vertical quality associated with school type $y_j$. All students agree on this measure, which could, for example, encode the past performance of students at the school. Finally, $\xi_{i}(j)$ is an idiosyncratic shock specific to the individual pair $(i, j)$. The additive specification for $u_{i}(j)$ resembles the assumptions that are common in empirical work (for example, \cite{angrist2025putting}). Similarly, the priority that school $j$ assigns to student $i$ is:
$$
v_{j}(i) = q(x_i, y_j) + h(x_i) + \eta_{j}(i),
$$
where $h(x_i)$ represents the vertical quality associated with student type $x_i$. It could, for example, reflect students' test scores or their previous educational performance. The last term,  $\eta_{j}(i)$, is an idiosyncratic shock specific to the pair $(i,j)$, but from $j$'s perspective. The idiosyncratic shocks $\eta_{j}(i)$ and $\xi_{i}(j)$ are independent.

In school choice, the deferred acceptance mechanism (DA) is commonly used to match students and schools, as it yields the student-optimal stable matching. As an alternative, one could instead choose the matching that maximizes the sum of all agents' utilities: a utilitarian criterion. We compare the DA and the utilitarian matching by simulating markets that satisfy our assumptions. The results are shown in Figure~\ref{fig:simulations_welfare}, which depicts the empirical cumulative distribution (CDF) of the agents' utilities for different market sizes ($n$ denotes the number of agents on each side of the market).

\begin{figure}[H]
    \centering
\includegraphics[width=15cm]{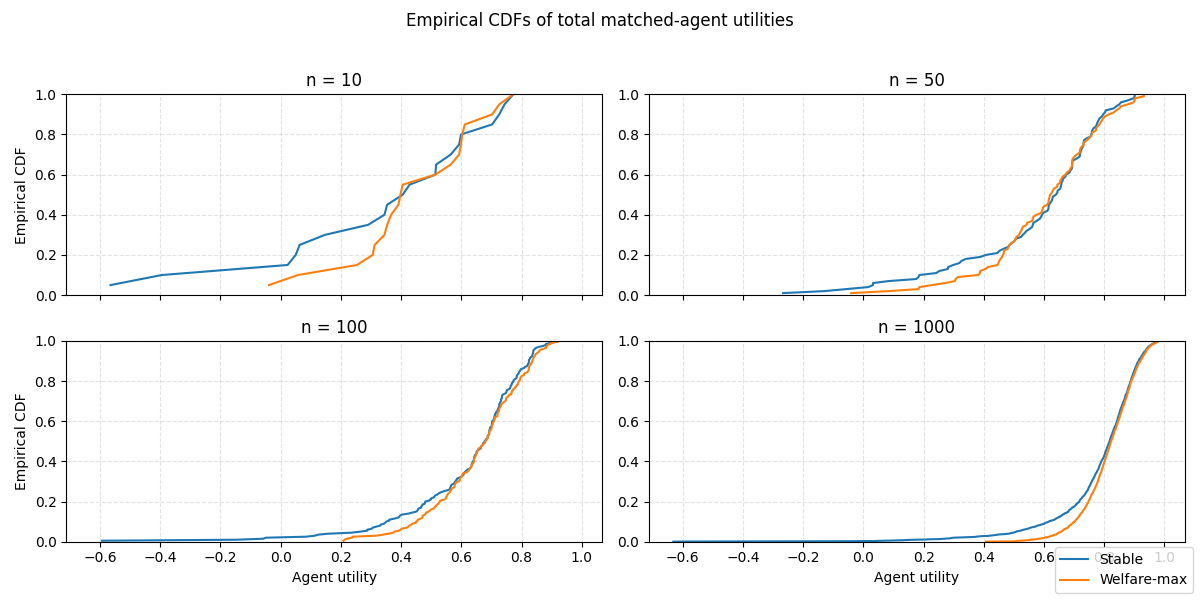}
    \caption{CDFs of matched-agent utilities under two matching mechanisms: DA and utilitarian welfare maximization. Agents on both sides are placed uniformly at random on $[0,1]^2$. Match quality is given by the negative Euclidean distance, and idiosyncratic shocks are i.i.d.\ uniform on $[0,1]$, with no vertical component. For high-utility couples, DA delivers a welfare distribution close to that under welfare maximization, but it consistently generates a long tail of low utilities across markets of all sizes.}
    \label{fig:simulations_welfare}
\end{figure}

Figure~\ref{fig:simulations_welfare} shows that, while the distributions of utility from both methods are similar, the DA generates significantly more unequal outcomes than utilitarian maximization does. The DA distribution features a long left tail of low-utility matches. This pattern is striking and perhaps surprising because utilitarian welfare maximization is neutral with respect to inequality, while the DA is often motivated by fairness.

\smallskip
Our paper helps to rationalize the observations in Figure~\ref{fig:simulations_welfare} and provides a tractable framework, using optimal transport, to model the underlying forces. The tractability of our framework stems from two insights: under certain conditions, \emph{the vertical quality terms can be absorbed into the aligned potential, while idiosyncratic shock terms can be approximated away in large markets.}

The possibility of ignoring the idiosyncratic components is a crucial step in our analysis. When a market is large, we can ensure that each agent effectively obtains the best available match in the idiosyncratic dimension, making that dimension essentially irrelevant. The result is stated in Section~\ref{sec_approx_aligned}, but the idea in a special case is easy to convey. Think of a case when $X$ and $Y$ are finite and small (two conditions that are not needed but help convey the intuition). In fact, suppose that there are only two types on each side of the market.  When the market is large, there are many agents of each type. We may then ensure for each agent $j$ that there will be an ``ideal'' partner of each type; simply because, with high probability, there will be a draw of a partner  with high idiosyncratic utility. Not only that, we may find, among those ideal partners, one agent $i$ who $j$ regards as ideal---meaning one who also derives a high idiosyncratic utility from matching with $j$. In a large market, there is a high probability that we can match agents ``ideally'' along their idiosyncratic utilities. Ideally means that the agents obtain almost all the utility that they can achieve from the idiosyncratic components of their utilities. Thus, the idiosyncratic components have little effect on the welfare distribution, and the variation that remains relevant for welfare analysis arises solely from the other components. 

Once the idiosyncratic components are removed ($\xi_{i}(j)=\eta_{j}(i)=0$ for all $i,j$), we obtain a second simplification of the analysis. Preferences of agents \emph{on both sides of the market} can be represented by a \textbf{potential}
$$
U(x_i, y_j) = q(x_i, y_j) + g(y_j) + h(x_i),
$$
so that $u_{i}(j) + h(x_i)=U(x_i, y_j)$ and $v_{j}(i)+g(y_j)=U(x_i, y_j)$. For each agent $i$, the transformed utility $u_{i}(j) + h(x_i)$ represents the same preferences as $u_{i}(j)$, because $h(x_i)$ is constant across all $j$. Similarly for $v_{j}(i)+g(y_j)$ and $v_{j}(i)$. In consequence, we can treat both $i$ and $j$ as receiving utility $U(x_i,y_j)$ from being matched. We refer to this situation as a market with \textbf{aligned preferences}---a setting in which the potential, a common utility function, describes the payoff to both sides of each matched pair.

\smallskip
The two matching mechanisms, DA and utilitarian welfare maximization, may appear fundamentally different. To analyze them within a unified framework, we embed both into a broader family of matchings. Specifically, we observe that the stable and welfare-maximizing matchings are two points on a spectrum reflecting planners' differing preferences over inequality. Formally,
let $\pi$ be a joint distribution on types $x$ and $y$ induced by a matching. Consider a planner who aims to minimize
a CARA version of Atkinson's index of inequality
\[
\text{Atkinson}_\alpha(\pi)=\int_{X \times Y} \frac{1-\exp\left(\alpha\cdot  U(x, y)\right)}{\alpha} \, \diff\pi(x, y)
\]
over all matchings. Minimizing the Atkinson index is equivalent (up to an additive constant and a sign) to maximizing the sum of CARA-transformed aligned utilities $U$. 
The parameter \( \alpha \) governs the planner’s preferences over inequality: when \( \alpha < 0 \), the objective is concave in utility, and the planner is inequality-averse; when \( \alpha > 0 \), the objective is convex, and the planner is inequality-loving. As \( \alpha \to 0 \), the objective converges to utilitarian welfare.

We show in Theorem \ref{th_equivalence} that the stable matching corresponds to the limit of the planner’s solution as \( \alpha \to +\infty \). In this limit, the planner places almost all weight on the highest-utility matches and little on the rest. This characterization explains the long left tail of low-utility matches in Figure \ref{fig:simulations_welfare}: it reflects the extreme preference for inequality of the limiting Atkinson index. At the same time, it adds to the puzzle that the overall welfare distributions under DA and utilitarian matching are so closely aligned. Theorem \ref{th_welfare} helps to explain this tension by showing that the limiting Atkinson objective nevertheless attains at least one-half of the utilitarian welfare. So the similarity is not coincidental, despite the sharp differences in the objectives.

\section{Stability and inequality}\label{sec_epsiloin_stable_general}

We study a general two-sided matching market.
Each side of the market is modeled as a population of \textbf{types}.

Let $X$ and~$Y$ be two sets of agent types. The type of an agent encodes all the relevant information about the agent. We assume that~$X$ and~$Y$ are complete separable metric spaces endowed with their Borel $\sigma$-algebras. The sets $X$ and $Y$ can be finite or infinite. The populations of agents' types over~$X$ and~$Y$ are represented by~$\mu\in \Delta(X)$ and~$\nu\in \Delta(Y)$, where $\Delta(Z)$ denotes the set of probability distributions over $Z$. Since  $\mu(X) = \nu(Y) = 1$, the total mass of agents on each side is equal; that is, the market is balanced. This simplifying assumption is not essential, and unbalanced markets can be accommodated by introducing ``dummy'' agents (see Example~\ref{ex_organ} in the Supplementary Appendix). 
Each $x\in X$ has a complete and transitive preference $\succeq_x$ over $Y$,
and each $y\in Y$ has a complete and transitive preference $\succeq_y$ over
$X$. The classical marriage market of \citet{gale1962college} arises as the
special case in which $X$ and $Y$ are finite and $\mu,\nu$ are counting measures normalized to have total mass $1$. More generally, any market with atomic $\mu$ and $\nu$ having finitely many atoms of equal mass reduces to the classical finite model.

We say that the preferences of agents in~$X$ and~$Y$ are \textbf{aligned} if there exists a function \(U\colon X\times Y\to \R\) such that  
\begin{equation}\label{eq_potential}
        \begin{split}
            U(&x,y)\geq U(x,y')\Longleftrightarrow y\succeq_x y' \\
            U(&x,y)\geq U(x',y)\Longleftrightarrow x\succeq_y x' 
        \end{split}\qquad \mbox{for all}\quad  x,x',y,y'.
    \end{equation} 
We call~$U$ the \textbf{potential} of the market. The potential is a single utility function serving as a shared cardinal representation of preferences for each side of the market. Its existence captures the idea that both sides evaluate matches according to a common ``match quality.''\footnote{For given ordinal preferences $(\succeq_x)_{x\in X}$ and $(\succeq_y)_{y\in Y}$, 
the existence of a potential is studied by 
\cite*{niederle2009decentralized} and \cite*{ferdowsian2020decentralized}, who establish its 
connection to potential games \citep*{monderer1996potential}. 
Specifically, $U$ is a potential representing both sides of the market if and only if it 
serves as a common utility function for a two-player game in which one player chooses from 
$X$, the other from $Y$, and the preference of the first player over their own actions 
given the second player's choice of $y$ is $\succeq_y$, and vice versa. 
\cite{voorneveld1997characterization} characterizes the preference families 
$(\succeq_x)_{x\in X}$ and $(\succeq_y)_{y\in Y}$ for which such a common utility exists, 
that is, for which this game is an ordinal potential game. Their condition translates directly 
to the market setting, taking the form of the absence of preference cycles.
}

For example, $X,Y\subset \R$ and $U(x,y)=-|x-y|$ represent a spatial matching market on 
the real line, where agents on both sides prefer to be matched with partners as close to themselves as 
possible. This market illustrates that aligned preferences do not imply identical preferences: two 
distinct types of agents $x,x'\in X$ will generally disagree on how they rank agents in~$Y$, even though 
their evaluations are based on the same underlying match quality~$U$. We further explore this spatial 
model and its multidimensional generalizations in Section~\ref{sec:applications}.

In what follows, we focus on aligned markets and treat the potential $U$ as a
primitive of the model, often referring to it simply as \textbf{utility}. We assume
that $U$ is a bounded measurable function and, when needed, impose additional regularity conditions.\footnote{Any strictly increasing reparameterization of the potential is again a potential. For instance, if $U$ is unbounded, then $\tilde{U}=\arctan(U)$ is a bounded potential representing the same market.} We return to the substance of the alignment assumption later in the
paper: Section~\ref{sec_approx_aligned} shows that even significantly
misaligned markets can often be approximated by aligned ones.

\paragraph{Stability and Fairness.}
Any pairing of agents in the market induces a joint distribution $\pi$ of
types. This distribution belongs to $\Delta(X\times Y)$ and has $\mu$ and $\nu$
as its marginals on $X$ and $Y$, respectively. We refer to any such distribution
as a \textbf{matching} and denote by $\Pi(\mu,\nu)$ the set of all matchings:
\[
\Pi(\mu,\nu)
  = \Big\{\,\pi \in \Delta(X\times Y)
      \ \text{with marginals}\ 
      \mu\ \text{and}\ \nu\,\Big\}.
\]
We allow for a given type $x$ to be associated with a distribution over partners
$y$, and similarly for each $y$ to be associated with a distribution over $x$.\footnote{Since we model matching at the level of types, it is natural to allow
for this randomization: even when individual agents are paired deterministically,
different agents of the same type may end up with partners of different types.}

Recall the classical notion of stability. For any two formed pairs
$(x_1,y_1)$ and $(x_2,y_2)$, stability requires that at least one of the
following holds:
\[
y_1 \succeq_{x_1} y_2
\qquad\text{or}\qquad
x_2 \succeq_{y_2} x_1.
\]
Otherwise, the pair $(x_1,y_2)$ would block the matching,
as both $x_1$ and $y_2$ would strictly prefer each other to their current
partners. Under aligned preferences, this condition is equivalent to requiring that
\begin{equation}\label{eq_stable_common}
    U(x_1,y_2)
    \leq
    \max\bigl\{ U(x_1,y_1),\, U(x_2,y_2) \bigr\}.
\end{equation}
This motivates the following definition. A matching $\pi$ is \textbf{stable} if, for $\pi\times\pi$-almost all pairs
$(x_1,y_1)$ and $(x_2,y_2)$, inequality~\eqref{eq_stable_common} holds.\footnote{For finite $X$ and $Y$ with counting measures, any stable matching according to this definition can be represented as a convex combination of deterministic stable matchings. Indeed, by the Birkhoff-von-Neumann theorem, any matching $\pi$ of such populations is a convex combination of deterministic ones. Every pair of couples $(x_1,y_1)$ and $(x_2,y_2)$ matched in one of these deterministic matchings has positive weight in $\pi$.
Hence, if $\pi$ is stable according to~\eqref{eq_stable_common}, so are all the deterministic matchings in the convex combination. 

Similar notions of stability  were introduced by \cite*{echenique2013revealed}, \cite*{kesten2015theory}, and (in full generality) \cite*{greinecker2021pairwise}.}
We also consider approximate stability. A matching $\pi$ is
$\varepsilon$-\textbf{stable} with $\varepsilon\ge0$ if for~$\pi\times \pi$-almost all pairs~$(x_1,y_1)$ and~$(x_2,y_2)$
\begin{equation}\label{eq_epsilon_stable_common}
    U(x_1,y_2)\leq \max\left\{U(x_1,y_1),\, U(x_2,y_2)\right\}+ \varepsilon.
\end{equation} 
 In other words, there is no potential blocking pair in which both agents can improve by more than $\varepsilon$.

If $U$ is continuous, our notion of stability admits a
more intuitive pointwise formulation. The support of~$\pi$, denoted
by~$\supp(\pi)$, is a minimal closed set of full measure. For continuous~$U$,
Lemma~\ref{lm_deterministic_equivalence} in Supplementary Appendix~\ref{app_lemma_pointwise_stability} shows that our definitions of stability are equivalent to requiring inequalities~\eqref{eq_stable_common} or \eqref{eq_epsilon_stable_common} to hold for all $(x_1,y_1),(x_2,y_2)\in\supp(\pi)$.

\medskip
We now turn to our notion of fairness. It reflects a concern for the agents who are worst off. To each matching~$\pi$ we assign
a number~$\mathcal{U}_{\min}(\pi)$ equal to the minimal utility of a couple matched under~$\pi$
and denote by $\mathcal{U}_{\min}^*(\mu,\nu)$ the best~$\mathcal{U}_{\min}(\pi)$ achievable over all matchings $\pi$.\footnote{Under the assumption that~$X$ and~$Y$ are compact and~$U$ is continuous,~$\mathcal{U}_{\min}$ and~$\mathcal{U}_{\min}^*$ are well-defined. Indeed, the support of~$\pi$ is a closed set, 
a closed subset of a compact space is compact, and hence the minimum is attained; the fact that the maximum is achieved is established below in Corollary~\ref{cor_existence} even for non-compact~$X$ and~$Y$. If~$U$ is discontinuous or~$X,Y$ are not compact, 
we replace the minimum with the essential infimum and the maximum with the supremum.}  
$$\mathcal{U}_{\min}(\pi)=\min_{(x,y)\in \supp(\pi)} U(x,y)\qquad\text{and}\qquad \mathcal{U}_{\min}^*(\mu,\nu)=\max_{\pi\in \Pi(\mu,\nu)} \mathcal{U}_{\min}(\pi).$$ This quantity $\mathcal{U}_{\min}^*(\mu,\nu)$ is the egalitarian lower bound in the spirit of~\cite*{rawls1971atheory}: the highest utility level feasible for every couple in the population simultaneously.

    A matching~$\pi\in \Pi(\mu,\nu)$ is~$\varepsilon$-\textbf{egalitarian} if there is a subset~$S\subset X\times Y$ with~$\pi(S)\geq 1-\varepsilon$ such that
    $$U(x,y)\geq \mathcal{U}_{\min}^*(\mu,\nu)-\varepsilon\qquad \mbox{for any }\quad (x,y)\in S.$$
In other words, for a large fraction of the population, the utility resulting from an~$\varepsilon$-egalitarian matching almost satisfies the egalitarian lower bound.  For~$\varepsilon=0$, we will refer to~$0$-egalitarian matchings simply as \textbf{egalitarian}.

\subsection{Stability, inequality, and optimal transport}\label{sec:stabilityandOT}

Our main results describe a link between stability, fairness, and pursuing a particular welfarist objective with different attitudes toward inequality.

Consider a matching market with aligned preferences represented by a potential~$U\colon X\times Y\to\R$. To each matching $\pi$ and each $\alpha\ne0$, we assign a CARA version of the index of inequality introduced by~\cite{atkinson1970measurement}:
\begin{equation}\label{eq_Atkinson}
    \text{Atkinson}_\alpha(\pi)=\int_{X \times Y} \frac{1-\exp\left(\alpha\cdot  U(x, y)\right)}{\alpha} \, \diff\pi(x, y).
\end{equation}
Up to an additive constant and a sign, this index equals the sum of CARA-transformed utilities. 
Consider a planner who wants to minimize Atkinson's inequality index:
\begin{equation}\label{eq_Atkinson_tranport}
    \min_{\pi\in \Pi(\mu,\nu)} \text{Atkinson}_\alpha(\pi).
\end{equation}
Because the integrand in~\eqref{eq_Atkinson} is strictly decreasing in
$U(x,y)$ for every $\alpha$, any minimizer is Pareto-efficient.
The distributional implications, however, differ with the sign of~$\alpha$.
When $\alpha>0$, the index is primarily driven by high-utility couples; minimizing
it therefore favors inequality by prioritizing the well-being of
high-utility couples over that of low-utility ones. By contrast, when
$\alpha<0$, the objective is driven by low-utility
couples, favoring more egalitarian outcomes. As $\alpha\to 0$, the objective
becomes neutral to the distribution of welfare: the integrand converges to
$-U(x,y)$, so we define
\begin{equation}\label{eq_welfare}
  W(\pi)=\int_{X\times Y}U(x,y)\diff\pi(x,y) \qquad\text{ and }\qquad 
  \text{Atkinson}_0(\pi)=-W(\pi).
\end{equation}
Hence, for $\alpha=0$, the planner's problem 
becomes
maximization of utilitarian social welfare.

Our first result shows that varying $\alpha$ from $-\infty$ to $+\infty$
interpolates between the \emph{fairness}, \emph{welfare}, and \emph{stability}
goals. 
\begin{theorem}\label{th_equivalence} Let~$\pi^*$ be a solution to~\eqref{eq_Atkinson_tranport}.
\begin{enumerate}
\item If~$\alpha>0$, then~$\pi^*$ is~$\varepsilon$-stable, with~$\varepsilon=\frac{\ln 2}{\alpha}$.
\item If~$\alpha=0$, then~$\pi^*$ maximizes utilitarian welfare.
\item If~$\alpha<0$, then~$\pi^*$ is~$\varepsilon$-egalitarian, with~$\varepsilon=\frac{\max\{1,\, \ln |\alpha|\}}{|\alpha|}$.
\end{enumerate}
\end{theorem}

The stability and fairness goals correspond to opposite sides of the
$\alpha$-spectrum. Stable matchings are selected by a planner with little concern for fairness, favoring high-utility matches at the expense of producing low-utility matches. 
In Section~\ref{sec:applications} we illustrate the conclusion of Theorem~\ref{th_equivalence} in a model of spatial matching, showing how the inequality obtained as a by-product of stability can be very significant and (arguably) unexpected.
While the conclusion of Theorem~\ref{th_equivalence} is specific to aligned markets, the underlying
tension is more general: as argued in Section~\ref{sec_approx_aligned},
aligned markets can approximate empirically relevant non-aligned settings.
\medskip

A technical contribution of the theorem is to establish a link between optimal transport theory and common market design goals.\footnote{See
\cite{galichon2018optimal} or  \cite{ekeland2010notes} for an introduction to
optimal transport for economists.} Minimizing Atkinson's index corresponds to a
particular optimal transport problem, and the tools we borrow from optimal
transport underlie Theorem~\ref{th_equivalence} and the subsequent analysis.

The canonical optimal transport problem assumes a measurable cost function
$c\colon X\times Y\to\R$ and marginal distributions $\mu\in\Delta(X)$ and
$\nu\in\Delta(Y)$. The problem is to find a matching $\pi$ that minimizes
total cost:
\begin{equation}\label{eq_transport}
  \min_{\pi\in\Pi(\mu,\nu)}\int_{X\times Y} c(x,y)\diff\pi(x,y).
\end{equation}
In particular, minimizing~\eqref{eq_Atkinson_tranport} is an optimal
transport problem with cost
\begin{equation}\label{eq_c_alpha}
  c_\alpha(x,y)=
  \begin{cases}
    \dfrac{1}{\alpha}\bigl(1-\exp(\alpha\cdot  U(x,y))\bigr), & \alpha\neq 0,\\[4pt]
    -\,U(x,y), & \alpha=0.
  \end{cases}
\end{equation}
The proof of Theorem~\ref{th_equivalence} appears in
Appendix~\ref{app_proof_equivalence}. Part~1 relies on the fundamental
``monotonicity'' (more precisely, $c$-\emph{cyclical monotonicity})
property of solutions to optimal transport problems. Given
$c\colon X\times Y\to\R$, a set $\Gamma\subset X\times Y$ is called
$c$-\emph{cyclically monotone} if, for all $n\ge 2$ and
$(x_i,y_i)\in\Gamma$,
\begin{equation}\label{eq_c_monotone}
  \sum_{i=1}^n c(x_i,y_i)\ \le\ \sum_{i=1}^n c(x_i,y_{i+1}),
  \qquad\text{with } y_{n+1}=y_1.
\end{equation}
Solutions to optimal transport are supported on $c$-cyclically monotone
sets.\footnote{See \cite*{kausamo202360} for a survey.} Using a result of
\cite*{beiglbock2009optimal}, we show that the stability condition
\eqref{eq_epsilon_stable_common} follows from $c$-cyclical monotonicity.
To prove part~3, we show that the contribution of low-utility couples to the
transport objective becomes dominant for negative $\alpha$, so $\pi^*$ cannot
place substantial weight on such couples.

\smallskip
The existence of the optimal matching $\pi^*$ in Theorem~\ref{th_equivalence} is guaranteed under continuity and boundedness assumptions on $U$. No compactness of $X$ or $Y$ is required: a
solution to~\eqref{eq_transport} exists whenever $c$ is lower semicontinuous
and bounded from below \citep*[Theorem~4.1]{villani2009optimal}. In particular,  it is enough that $U$ be bounded from above when~$\alpha>0$ and bounded from below when~$\alpha<0$.

Under these assumptions on $U$, Theorem~\ref{th_equivalence} implies the
existence of {exact} stable and egalitarian matchings, obtained as weak
limits as $\alpha\to\pm\infty$. Denote by $\Pi_{+\infty}^U(\mu,\nu)$ the set
of matchings $\pi$ that arise as weak limits
$\pi=\lim_{n\to+\infty}\pi_{\alpha_n}$, where each $\pi_{\alpha_n}$ solves
\eqref{eq_transport} with the cost $c_{\alpha_n}$ in~\eqref{eq_c_alpha} for
some sequence $\alpha_n\to+\infty$. Similarly, let $\Pi_{-\infty}^U(\mu,\nu)$
be the set of weak limits for some sequence $\alpha_n\to-\infty$.

\begin{corollary}\label{cor_existence}
  For continuous and bounded $U$, the sets
  $\Pi_{+\infty}^{U}(\mu,\nu)$ and $\Pi_{-\infty}^{U}(\mu,\nu)$ are
  nonempty and weakly closed. Every matching in
  $\Pi_{+\infty}^{U}(\mu,\nu)$ is stable, and every matching in
  $\Pi_{-\infty}^{U}(\mu,\nu)$ is egalitarian.
\end{corollary}

Corollary~\ref{cor_existence} provides the existence of stable and egalitarian matchings. For finite or discretized markets, Corollary~\ref{cor_existence} also suggests a computational method for finding these matchings: solve a sequence of finite linear programs for increasing values of \(|\alpha|\) and study their limiting solutions. 
The existence of stable matchings for markets with aligned preferences gives a special case of the general existence result proven in \cite*{greinecker2021pairwise}. In Supplementary Appendix~\ref{app_corr_existence}, 
we show that Corollary~\ref{cor_existence} follows from Theorem~\ref{th_equivalence} via a standard compactness argument.

For finite markets, the transportation problem in Theorem~\ref{th_equivalence} is solved by a deterministic~$\pi$. Indeed, for finite $X$ and~$Y$ with normalized counting measures, by the Birkhoff-von-Neumann theorem, the extreme points of the matching polytope are deterministic matchings. A linear objective is optimized at an extreme point, so there exist deterministic matchings that are  welfare-maximizing or egalitarian. To obtain a deterministic exact stable matching, we may use a large but finite value of~$\alpha$. Indeed, suppose that there are no payoff ties from changing partners, and denote by~$\delta>0$ the smallest utility gap arising from any such change:
$\delta=\min\left\{\min_{x,\ y\ne y'}|U(x,y)-U(x,y')|, \ \min_{y,\ x\ne x'}|U(x,y)-U(x',y)|\right\}.$
Then any~$\varepsilon$-stable matching with~$\varepsilon<\delta$ is automatically stable. Combining this observation with Theorem~\ref{th_equivalence}, we conclude that solutions of the optimal transportation problem with~$\alpha> \frac{\ln 2}{\delta}$ are stable.
\medskip

The conclusions of Theorem~\ref{th_equivalence} can be generalized in several ways. 
First, the result is not specific to the exponential objective~$c_\alpha$. As we demonstrate in Appendix~\ref{app_proof_equivalence}, it is enough to take
\begin{equation}\label{eq_general_cost}
c(x,y)=-h\left(U(x,y)\right),
\end{equation}    
where~$h$ is a monotone-increasing function with a large logarithmic derivative. 
Indeed, we show that if~$h>0$ is monotone increasing and
$\frac{h'(t)}{h(t)}\geq \alpha$
for some positive~$\alpha$ and all~$t$ in the range of~$U$, then any solution to the transportation problem with cost~\eqref{eq_general_cost} is~$\varepsilon$-stable with~$\varepsilon=\frac{\ln 2}{\alpha}$. Similarly, if~$h<0$ is monotone increasing and satisfies~$\frac{h'(t)}{h(t)}\leq\alpha$ with~$\alpha<0$, then the solution to the transportation problem is~$\varepsilon$-egalitarian with~$\varepsilon=\frac{\max\{1,\, \ln |\alpha|\}}{|\alpha|}$. 

For example, Atkinson's CARA index used in Theorem~\ref{th_equivalence} corresponds to $h(t)=\frac{1}{\alpha}e^{\alpha t}$ for $\alpha\neq 0$, in which case the above inequalities hold with equality. One can similarly consider a CRRA analogue, taking $h(t)=\frac{1}{\beta}t^{\beta}$ for $\beta\neq 0$ on the domain $t>0$. If $U$ takes values in an interval $[\underline{U},\overline{U}]$ with $\underline{U}>0$, then the above inequalities hold with $\alpha=\beta/\overline{U}$. Consequently, a matching minimizing the CRRA index is $\left(\frac{\overline{U}\,\ln 2}{\beta}\right)$-stable when $\beta>0$, and $\left(\frac{\overline{U}\,\max\{1,\,\ln|\beta/\overline{U}|\}}{|\beta|}\right)$-egalitarian when $\beta<0$. In contrast to the CARA case, the resulting approximation bound depends on the range of the potential $U$. This dependence is the primary reason we focus on the CARA-based choice of $h$ in Theorem~\ref{th_equivalence}.

Second, Theorem~\ref{th_equivalence} carries over (after only minor tweaks) to multi-sided matching markets, as we show in Supplementary Appendix~\ref{sec_multi_side}. This stands in stark contrast to the conventional wisdom that multi-sided markets are far less tractable than their two-sided counterparts.

\subsection{Welfare and fairness of stable matchings}\label{sec_welfare_and_fairness}

Theorem~\ref{th_equivalence} reveals that fairness and stability sit at opposite ends of the~$\alpha$-spectrum: exact fairness at~$-\infty$ and exact stability at~$+\infty$. At first glance, this dichotomy may suggest that stability offers no protection for individual fairness. Yet, as we show next, despite potential losses in both fairness and welfare, stability still comes with a nontrivial guarantee.

To understand why losses in welfare and fairness are bounded, we reinterpret the definition of stability. Consider a continuous utility~$U$ and a matching~$\pi\in \Delta(X\times Y)$ with marginals~$\mu$ and~$\nu$. By Lemma~\ref{lm_deterministic_equivalence}, matching~$\pi$ is~$\varepsilon$-stable with~$\varepsilon\geq 0$ if for all~$(x_1,y_1)$ and~$(x_2,y_2)$ in the support of~$\pi$,  inequality~\eqref{eq_epsilon_stable_common} holds, i.e., 
    $U(x_1,y_2)\leq \max\left\{U(x_1,y_1),\, U(x_2,y_2)\right\}+ \varepsilon.$
The couple~$(x_1,y_2)$ can serve as a generic element of the product space~$X\times Y$. This observation allows us to interpret formula~\eqref{eq_epsilon_stable_common} as follows: 
\begin{quote}
{\normalsize{$\pi$ is~$\varepsilon$-stable if, for a generic couple~$(x,y)$---not necessarily matched under $\pi$---the utility of at least one of the partners $x$ or $y$ in their respective match under $\pi$ is at least the utility of the hypothetical match $(x,y)$ minus~$\varepsilon$.}}
\end{quote}
Since the utility of a hypothetical couple~$(x,y)$ provides a lower bound to the utility of at least one of the partners in a stable match, the latter utilities cannot be too small simultaneously. This observation bounds the extent to which welfare and fairness must be sacrificed to obtain stability.

Recall that the utilitarian welfare~$W(\pi)$ of a matching~$\pi$ is given by~\eqref{eq_welfare} and 
let~$W^*(\mu,\nu)$ be the maximal welfare over all~$\pi\in\Pi(\mu,\nu)$. We prove the following result in Appendix~\ref{app_proof_equivalence}.
\begin{theorem}\label{th_welfare}
For bounded continuous utility~$U\geq 0$, any~$\varepsilon$-stable matching~$\pi$ satisfies\footnote{If a bounded continuous utility $U$ does not satisfy the non-negativity requirement, one can apply Theorem~\ref{th_welfare} to $\tilde{U}(x,y)=U(x,y)-\inf_{x',y'}U(x',y')$. For example, this argument applies to the distance-based utility of Section~\ref{sec:applications} if $\mu$ and $\nu$ have bounded support.}
$$W(\pi)\geq \frac{1}{2}\left(W^*(\mu,\nu)-\varepsilon\right).$$
Moreover,~$\pi$ is~$\varepsilon'$-egalitarian with
$\varepsilon'=\max\left\{\frac{1}{2},\, \varepsilon\right\}.$
\end{theorem}

In particular, we obtain that any stable matching guarantees~$1/2$ of the optimal welfare and is~$1/2$-egalitarian. For markets with a finite number of agents, we recover the welfare guarantee obtained by
\cite*{anshelevich2013anarchy}.
The bounds in Theorem~\ref{th_welfare} are conservative as they target~$\varepsilon$-stable matchings that have the lowest welfare or that are least egalitarian.

\subsection{Markets without aligned preferences}\label{sec_approx_aligned}

Thus far, our study of stability and fairness assumes aligned preferences. We now extend our results to certain non-aligned markets. 

First, it is worth remarking on the simple observation that our results carry over to approximately aligned markets with only minor modification. Specifically, if each $x$ and $y$'s utility from matching is within $\ep>0$ of an aligned utility $U(x,y)$, then any matching that is $\ep$-stable for the aligned market with utility $U$ is automatically $3\ep$-stable in the non-aligned market.\footnote{Let the utility of $x\in X$ be $v_x:Y\to\R$ and suppose $|v_x(y)-U(x,y)|\leq \ep$ for all $y$; assume analogously for the utilities of agents on the $Y$-side. If $x$ is matched with $y$ and considers a potential blocking pair with $y'$, then
$v_x(y')-v_x(y)\leq U(x,y')-U(x,y)+2\ep.$
Thus, whenever aligned $\ep$-stability implies \(U(x,y')\le U(x,y)+\ep\), the gain of $x$ is at most \(3\ep\); the analogous argument applies to the $Y$-side.}

Second, and more importantly, we show that in certain large finite markets, even dramatically misaligned preferences can yield a market that is well approximated by a related aligned market. 
We assume that utilities have an aligned component and an idiosyncratic component, but make no assumption on the relative importance of these two features in preferences. 

Let $X=Y\subset \R$ be intervals and consider two finite populations of $n$ agents $X_n=\{x_1,\ldots, x_n\}\subset X$ and $Y_n=\{y_1,\ldots, y_n\}\subset Y$. We assume that~$X_n$ and~$Y_n$ are i.i.d.\ samples from non-atomic distributions $\mu\in \Delta(X)$ and $\nu\in \Delta(Y)$.
If a pair $(x_i,y_j)\in X_n\times Y_n$ is formed, agents $i$ and $j$ enjoy utilities
\begin{align*}
    u_{i}(j)=q(x_i,y_j)+\xi_{i}(j)\qquad\text{and}\qquad
v_{j}(i)=q(x_i,y_j)+\eta_{j}(i).
\end{align*}
Here, $q\colon X\times Y\to \R$ is a continuous function capturing the aligned component of agents' preferences. The idiosyncratic components $\xi_{i}(j)$ and $\eta_{j}(i)$ are independent shocks with continuous distribution functions $F_i$ and $G_j$. Importantly, the idiosyncratic components may be very large relative to the aligned component~$q$. They could be the main driver of agents' preferences.
\begin{theorem}\label{th_idiosyncractic}
For $\pi \in \Pi(\mu,\nu)$, there is a sequence  $\delta_n\to 0$ such that,
with probability at least $1-\delta_n$, there exists a deterministic matching $\pi_n$ of $X_n$ and $Y_n$ with
$$\left|\frac{\left|\Big\{(x_i,y_j)\in [a,b]\times[c,d]\ \colon \text{$x_i$ and $y_j$ are matched in $\pi_n$} 
\Big\}\right|}{n}-\pi\big([a,b]\times [c,d]\big)\right|\leq \delta_n$$
for all intervals $[a,b] \subseteq X$, $[c,d] \subseteq Y$. Moreover, for all \(x_i\) and \(y_j\) matched under \(\pi_n\),
\[
F_i\big(\xi_i(j)\big)\ge 1-\delta_n
\quad\text{and}\quad
G_j\big(\eta_j(i)\big)\ge 1-\delta_n .
\]
\end{theorem}
Theorem~\ref{th_idiosyncractic} establishes that the finite non-aligned market can be approximated by a large aligned market. Specifically, for any matching $\pi$ in the large market, there exists a finite-market matching $\pi_n$ that is close to $\pi$ while ensuring that individual agents receive close to their ``maximum possible'' idiosyncratic utilities, i.e., they are matched with a partner representing a large quantile in their idiosyncratic distribution.

The theorem does not assume that $\pi$ is stable or egalitarian. It can be used to approximate a matching that optimizes any welfarist objective. When $\pi$ is stable, $\pi_n$ inherits approximate stability. Indeed, all agents in $\pi_n$ are matched in a way that guarantees them a very high idiosyncratic utility. Consequently, any blocking pair in $\pi_n$ would need to rely on exploiting the aligned component of preferences, implying the existence of a corresponding blocking pair in the original stable matching $\pi$.

In Appendix~\ref{app_proof_equivalence}, we present a stronger result, Theorem~\ref{th_idiosyncratic_appendix}, which provides explicit formulas for the approximation error and does not rely on i.i.d.\ sampling of the population. So it applies to any---possibly deterministic---set of points whose empirical distributions are close to $\mu$ and $\nu$.

The idea that idiosyncratic components can effectively be ignored in large markets originates in \cite{lee2016incentive}. He studies markets in which all agents share the same common preference (corresponding to $q(x,y)=q_1(x)+q_2(y)$) perturbed by idiosyncratic shocks, and shows that stable matchings are approximately assortative and that most agents attain nearly maximal idiosyncratic payoffs. While both results capture aspects of the same phenomenon, Theorem~\ref{th_idiosyncratic_appendix} differs in several respects.  It applies to arbitrary aligned components $q(x,y)$ and constructs a nearby deterministic finite matching in which \emph{every} rather than \emph{most} matched agent attains a near-top idiosyncratic payoff, so the aligned benchmark is relevant not only to stability but also to welfare and fairness comparisons. In addition, the explicit bounds on $\delta_n$ make the result applicable to finite markets. These differences necessitate new proof techniques, which combine the probabilistic method, Dvoretzky-Kiefer-Wolfowitz concentration, and an adaptation of arguments from the classical Erd\H{o}s-R\'{e}nyi-type perfect-matching construction.

\section{Spatial matching and other applications}\label{sec:applications}

We consider a benchmark spatial matching model with aligned preferences, where the two sides of the market $X$ and $Y$ are subsets of $\mathbb{R}^d$ with $d\geq 1$, and each agent prefers to be matched to a partner as close as possible. This is captured by the distance-based potential
\begin{equation}\label{eq_distance_based}
    U(x,y)=-\|x-y\|=-\sqrt{\sum_{i=1}^d(x_i-y_i)^2}.
\end{equation}
This setting can capture more general preferences for homophily in an abstract space of ``types,'' for example, preferences for partners with similar political views or educational backgrounds. 

The link between stability, fairness, welfare, and optimal transport in Theorem~\ref{th_equivalence} is completely dimension-free: it holds for any~$d$. But to dig deeper into the structure of spatial matchings, we must pin down a specific dimension. We begin with the one-dimensional case ($d=1$), where Theorem~\ref{th_equivalence}, combined with known results in optimal transport, enables us to explicitly construct optimal matchings. This simple setting sharpens and illustrates the core tension between stability, welfare, and fairness uncovered by the theorem.

For $d=1$, the cost function \eqref{eq_c_alpha} from the theorem specializes to
\begin{equation}\label{eq_c_alpha_R}
    c_\alpha(x,y)=\frac{1-\exp(-\alpha\cdot \vert x-y\vert)}{\alpha},
\end{equation}
which is a strictly convex function of~$\vert x-y\vert$ for $\alpha<0$ and strictly concave for $\alpha>0$. Transportation problems on the line with costs given by concave and convex functions of distance are well understood.

For a cost function~$c(x,y)=h(| x-y|)$ with strictly convex~$h$, the solution to~\eqref{eq_transport} is the assortative matching, which is uniquely optimal and does not depend on the particular form of~$h$. The assortative matching is supported on the curve $F_\mu(x)=F_\nu(y)$, where~$F_\mu$ and~$F_\nu$ are the cumulative distribution functions of type distributions~$\mu$ on $X$ and~$\nu$ on $Y$, respectively.

Thus, for any $\alpha<0$, the assortative matching is the unique solution to the optimal transportation problem. Taking weak limits and noting that as $\alpha\to 0$ the objective approaches the utilitarian objective, Theorem~\ref{th_equivalence} implies that the assortative matching is fair and welfare-maximizing.

\begin{corollary}\label{cor_no_tension}
For $X=Y=\mathbb{R}$ and non-atomic $\mu,\nu$ with bounded support, the assortative matching is egalitarian and welfare-maximizing.
\end{corollary}

For example, suppose that $X$ is the population of agents uniformly distributed on the interval $[-1,0]$, while $Y$ is the population uniformly distributed on $[0,1]$. We can visualize this market by plotting the distribution of $X$ above the horizontal axis, and the population of $Y$ below the horizontal axis.
\begin{figure}[ht]
    \centering
    \begin{subfigure}[t]{0.35\textwidth}
        \centering
        \begin{tikzpicture}[scale=0.5]
            \draw[black,very thick] (-5,0) -- (5,0);
            \filldraw[fill=lightgray, draw=black, very thick] (0,0) rectangle (-4.5,3);
            \filldraw[fill=lightgray, draw=black, very thick] (0,0) rectangle (4.5,-3);
            \draw[very thick, dashed, ->] (-0.1,0) arc (180:0:2.3);    
            \draw[very thick, dashed, ->] (-1.2,0) arc (180:0:2.3);
            \draw[very thick, dashed, ->] (-2.3,0) arc (180:0:2.3);
            \draw[very thick, dashed, ->] (-3.4,0) arc (180:0:2.3);
            \draw[very thick, dashed, ->] (-4.5,0) arc (180:0:2.3);
        \end{tikzpicture}
        \caption{Egalitarian}
        \label{fig:stablevfair2}
    \end{subfigure}
        \hfill
    \begin{subfigure}[t]{0.35\textwidth}
        \centering
        \begin{tikzpicture}[scale=0.5]
            \draw[black,very thick] (-5,0) -- (5,0);
            \filldraw[fill=lightgray, draw=black, very thick] (0,0) rectangle (-4.5,3);
            \filldraw[fill=lightgray, draw=black, very thick] (0,0) rectangle (4.5,-3);
            \draw[very thick, dashed, ->] (-0.5,0) arc (180:0:0.5);    
            \draw[very thick, dashed, ->] (-1.5,0) arc (180:0:1.5);
            \draw[very thick, dashed, ->] (-2.5,0) arc (180:0:2.5);
            \draw[very thick, dashed, ->] (-3.5,0) arc (180:0:3.5);
            \draw[very thick, dashed, ->] (-4.5,0) arc (180:0:4.5);
        \end{tikzpicture} 
        \caption{Stable}
        \label{fig:stablevfair1}
    \end{subfigure}
    \begin{subfigure}[t]{0.35\textwidth}
    \end{subfigure}
    \caption{Two solutions to the matching problem. }
    \label{fig:stablevfair}
\end{figure}

Figure~\ref{fig:stablevfair2} visualizes the assortative matching by connecting matched agents using half-circles. In this matching, all agents are matched with a partner who is one unit away. By Corollary~\ref{cor_no_tension}, this matching is both egalitarian and welfare-maximizing. Notice, however, that the assortative matching is not stable. Indeed, every pair $(-z,z)$ with $z\in(0,1/2)$ blocks this matching, as each partner prefers this match to their assortative partners $-z+1$ and $z-1$. 

Matching agents of type~$-z\in [-1,0]$ with those of type~$z\in [0,1]$ turns out to be the unique stable matching. It is illustrated in Figure~\ref{fig:stablevfair1}. 
Notice that the stable matching induces substantial inequality in the agents' utilities. Agents close to zero are paired with partners who have similar types to them, while agents far from zero are matched with partners of very different types. This stable matching leads to the same utilitarian welfare of $-1$ as the egalitarian one and, hence, in this example there is no tension between the welfare and stability goals. While the total welfare remains intact, the stability goal leads to dramatic inequality, increasing the utility of agents in $(-1/2, 1/2)$ at the expense of those in $(-1,-1/2)$ and $(1/2,1)$, thus creating substantial inequality with some couples enjoying utility of $0$ and others having utility of $-2$. As we will see, for more general markets, stability not only leads to dramatic inequality but also to overall welfare loss.

To analyze the structure of stable matchings in the general case, we note that the stable matching in Figure~\ref{fig:stablevfair1} features a key structural property: the half-circles indicating matched pairs do not cross. This no-crossing property plays a central role in the theory of one-dimensional optimal transport with concave costs \citep{mccann1999exact}.\footnote{According to \cite*{villani2009optimal}, the no-crossing property dates back to papers by Monge, who initiated the study of optimal transport; see also the discussion by~\cite*{boerma2023composite}.} In particular, the set of matchings satisfying the no-crossing property is well understood.

Formally, for any two real numbers~$z_1$ and~$z_2$, 
let~$O(z_1,z_2)$ denote the circle in~$\mathbb{R}^2$ whose diameter has endpoints~$(z_1,0)$ and~$(z_2,0)$. A matching~$\pi$ on~$\mathbb{R}$ satisfies the \textbf{no-crossing property} if, for any two pairs~$(x,y)$ and~$(x',y')$ in the support of~$\pi$, the circles~$O(x,y)$ and~$O(x',y')$ do not intersect unless~$x=x'$ or~$y=y'$.

\cite{mccann1999exact} (Theorem 3.11) showed that for any cost $c(x,y)=h(| x-y|)$ with strictly concave~$h$ and non-atomic $\mu$ and~$\nu$ on~$\mathbb{R}$, optimal matchings satisfy the no-crossing property. Specifically, the mass common to $\mu$ and $\nu$ is first eliminated (these agents are matched with their ideal partners $x=y$), and then the remaining disjoint populations are matched in a no-crossing fashion. Moreover, the set of all no-crossing matchings consists of a finite number of parametric families. 
\begin{corollary}\label{cor_no_crossing}
The optimal matchings for cost~\eqref{eq_c_alpha_R} with \(\alpha>0\) satisfy no-crossing. By taking weak limits as \(\alpha\to+\infty\), stable matchings obtained as such limits also satisfy no-crossing.\footnote{The set of no-crossing matchings with given marginals is closed under weak convergence since these matchings form a finite number of parametric families, continuous and compact.}
\end{corollary}
 In Supplementary Appendix~\ref{sec_algorithm}, Lemma~\ref{lm_stable_is_no_crossing} shows directly that every stable matching on the line satisfies no-crossing. To illustrate the power of this observation, consider the following example. Suppose~$X$ has density~$1/3$ on the interval~$[-2,-1]$ and density~$2/3$ on the interval~$[0,1]$. Assume that~$Y$ has density~$1/3$ on the intervals~$[-1,0]$ and~$[1,3]$. Figure~\ref{fig:examplemeasure} depicts the densities.
\begin{figure}[ht]
    \centering
    \begin{tikzpicture}[scale=0.55]
        \draw[black,very thick] (-6,0) -- (5,0);
        \filldraw[fill=lightgray, draw=black, very thick] (-5.5,0) rectangle (-3.5,2);
        \filldraw[fill=lightgray, draw=black, very thick] (-3.5,0) rectangle (-1.5,-2);
        \filldraw[fill=lightgray, draw=black, very thick] (-1.5,0) rectangle (0.5,4);
        \filldraw[fill=lightgray, draw=black, very thick] (0.5,0) rectangle (4.5,-2);
        \node at (-5.6,-0.5) {\footnotesize~$-2$};
        \node at (-1.2,-0.5) {\footnotesize~$0$};
        \node at (4.8,-0.5) {\footnotesize~$3$};
    \end{tikzpicture}
    \caption{The density is above the axis for $X$, and below for $Y$.}
    \label{fig:examplemeasure}
\end{figure}
No-crossing matchings in this example form a one-parameter family that can be described as follows. The parameter~$\theta\in [0,1]$ controls the proportion of agents in the region~$[-2,-1]$ who are matched with agents in the region $[1,3]$. Any value of $\theta$ between $0$ and $1$ can be chosen. Three specific values of~$\theta$ are shown in Figure~\ref{fig:runningexample}.
\begin{figure}[ht]
    \centering
    \begin{subfigure}{0.48\textwidth}
    \centering
    \begin{tikzpicture}[scale=0.55]
        \draw[black,very thick] (-6,0) -- (5,0);
        \filldraw[fill=lightgray, draw=black, very thick] (-5.5,0) rectangle (-3.5,2);
        \filldraw[fill=lightgray, draw=black, very thick] (-3.5,0) rectangle (-1.5,-2);
        \filldraw[fill=lightgray, draw=black, very thick] (-1.5,0) rectangle (0.5,4);
        \filldraw[fill=lightgray, draw=black, very thick] (0.5,0) rectangle (4.5,-2);

        \draw[very thick, dashed, ->] (-5.1,0) arc (180:0:1.6);
        \draw[very thick, dashed, ->] (-4.7,0) arc (180:0:1.2);
        \draw[very thick, dashed, ->] (-4.3,0) arc (180:0:0.8);
        \draw[very thick, dashed, ->] (-3.9,0) arc (180:0:0.4);

        \draw[very thick, dashed, ->] (-1.1,0) arc (180:0:2.7);
        \draw[very thick, dashed, ->] (-0.7,0) arc (180:0:1.9);
        \draw[very thick, dashed, ->] (-0.3,0) arc (180:0:1.1);
        \draw[very thick, dashed, ->] (0.1,0) arc (180:0:0.4);

        \node at (-5.6,-0.5) {\footnotesize~$-2$};
        \node at (-1.2,-0.5) {\footnotesize~$0$};
        \node at (4.8,-0.5) {\footnotesize~$3$};
    \end{tikzpicture}
    \caption{$\theta=0$}
    \end{subfigure}
       \begin{subfigure}{0.48\textwidth}
       \centering
     \begin{tikzpicture}[scale=0.55]
        \draw[black,very thick] (-6,0) -- (5,0);
        \filldraw[fill=lightgray, draw=black, very thick] (-5.5,0) rectangle (-3.5,2);
        \filldraw[fill=lightgray, draw=black, very thick] (-3.5,0) rectangle (-1.5,-2);
        \filldraw[fill=lightgray, draw=black, very thick] (-1.5,0) rectangle (0.5,4);
        \filldraw[fill=lightgray, draw=black, very thick] (0.5,0) rectangle (4.5,-2);
\draw[very thick, dashed, ->] (-1.215,0) arc (0:180:0.4);
\draw[very thick, dashed, ->] (-0.93,0) arc (0:180:0.85);
\draw[very thick, dashed, ->] (-0.645,0) arc (0:180:1.3);
\draw[very thick, dashed, ->] (-0.36,0) arc (180:0:1.3);
\draw[very thick, dashed, ->] (-0.07,0) arc (180:0:0.85);
\draw[very thick, dashed, ->] (0.21,0) arc (180:0:0.4);

\draw[very thick, dashed, ->] (-5.4,0) arc (180:0:4.9);
\draw[very thick, dashed, ->] (-5.05,0) arc (180:0:4.55);
\draw[very thick, dashed, ->] (-4.7,0) arc (180:0:4.2);
\draw[very thick, dashed, ->] (-4.3,0) arc (180:0:3.8);
\draw[very thick, dashed, ->] (-3.9,0) arc (180:0:3.4);
\draw[very thick, dashed, ->] (-3.55,0) arc (180:0:3.05);

        \node at (-5.6,-0.5) {\footnotesize~$-2$};
        \node at (-1.2,-0.5) {\footnotesize~$0$};
        \node at (4.8,-0.5) {\footnotesize~$3$};
    \end{tikzpicture}
        \caption{$\theta=1$}
    \end{subfigure}
       \begin{subfigure}{0.48\textwidth}
       \centering
    \begin{tikzpicture}[scale=0.55]
        \draw[black,very thick] (-6,0) -- (5,0);
        \filldraw[fill=lightgray, draw=black, very thick] (-5.5,0) rectangle (-3.5,2);
        \filldraw[fill=lightgray, draw=black, very thick] (-3.5,0) rectangle (-1.5,-2);
        \filldraw[fill=lightgray, draw=black, very thick] (-1.5,0) rectangle (0.5,4);
        \filldraw[fill=lightgray, draw=black, very thick] (0.5,0) rectangle (4.5,-2);
        
        \draw[very thick, dashed, ->] (-3.8,0) arc (180:0:0.3);
        \draw[very thick, dashed, ->] (-4.1,0) arc (180:0:0.6);
        
        \draw[very thick, dashed, ->] (-1.25,0) arc (0:180:0.35);
        \draw[very thick, dashed, ->] (-.95,0) arc (0:180:0.7);
        
        \draw[very thick, dashed, ->] (-0.4,0) arc (180:0:1.6);
        \draw[very thick, dashed, ->] (-0.1,0) arc (180:0:1);
        \draw[very thick, dashed, ->] (0.2,0) arc (180:0:0.5);
\draw[very thick, dashed, ->] (-5.4,0) arc (180:0:4.9);
\draw[very thick, dashed, ->] (-5.05,0) arc (180:0:4.55);
\draw[very thick, dashed, ->] (-4.7,0) arc (180:0:4.2);
\draw[very thick, dashed, ->] (-4.3,0) arc (180:0:3.8);

        \node at (-5.6,-0.5) {\footnotesize~$-2$};
        \node at (-1.2,-0.5) {\footnotesize~$0$};
        \node at (4.8,-0.5) {\footnotesize~$3$};
    \end{tikzpicture}
    \caption{$\theta=\frac{4}{7}$}
    \label{fig:runningexamplec}
    \end{subfigure}
    \caption{The matchings which satisfy no-crossing for three values of~$\theta$.}
    \label{fig:runningexample}
\end{figure}

The parametric structure of no-crossing matchings reduces the infinite-dimensional transportation problem with cost $c_\alpha(x,y)$ for $\alpha>0$ to determining a finite number of parameters. Unlike the case of $\alpha<0$, where the optimum is independent of $\alpha$, for $\alpha>0$ the optimum generally depends on $\alpha$ since the set of no-crossing matchings is not a singleton.
For example, Figure~\ref{fig:theta_alpha} illustrates the dependence of the optimal $\theta$ on $\alpha$ for the market from Figure~\ref{fig:runningexample}.
\begin{figure}[ht]
    \centering
   \includegraphics[width=8cm]{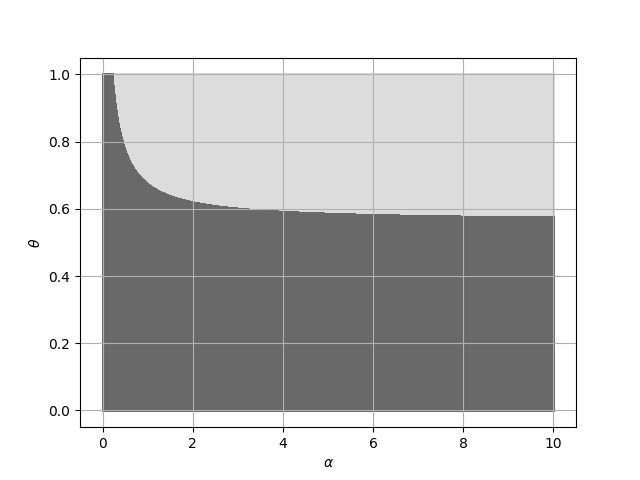}
    \caption{Dependence of $\theta$ on $\alpha$ in the optimal transportation problem with cost $c_\alpha$ and distribution from Figure~\ref{fig:runningexample}. The optimal choice of $\theta$ for each $\alpha$ is given by the curve where the two grey regions meet.}
    \label{fig:theta_alpha}
\end{figure}
By taking the limit as $\alpha\to +\infty$, we obtain that the stable matching corresponds to~$\theta={4}/{7}$.
In this stable matching, each type~$x$ is matched with~$y=h(x)$ where 
\begin{center}
   $ h(x) = \left\{\begin{array}{cc}
                            1-x, & x\in [-2,\ -1-\frac{3}{7}] \\
                              -2-x, & x\in(-1-\frac{3}{7},\ -1]\\
                            -2x, & x \in (0, \ \frac{2}{7}] \\
                            3-2x, & x \in (\frac{2}{7}, \ 1] \\
                \end{array}\right.$. 
\end{center}
Relative to the assortative matching, the stable matching again induces inequality of outcomes. In the assortative matching, the worst-off couple is matched with a partner at a distance of $2$ away. In the stable matching, any agent at $-2$ is matched with a partner at $3$, giving utility $-5$. 

This example reveals a stark fact: \emph{stable matchings can produce dramatically unfair outcomes}. Agents who are nearly indistinguishable can end up with radically different partners and utilities.

Take two agents, $x$ and $x'$, with types $-\frac{10}{7}-\varepsilon$ and $-\frac{10}{7}+\varepsilon$ for some $\varepsilon>0$. Agent $x$ is matched with a partner more than $3$ units away, while $x'$ is matched within $1$ unit. The result is that $x$ envies $x'$. In the language of stable matching theory, this is ``unjustified'' envy; but the label does not erase the underlying unfairness. Here, the inequity is driven by supply–demand imbalances rather than the usual considerations in matching theory. Agent $x$ is forced into a non-local match because of a scarcity of local partners; agents \(y\in[-1,0]\) have already formed matches with more preferred agents such as \(x'\). In this way, high-utility couples generate negative externalities for low-utility ones, amplifying inequality within an otherwise stable matching.

By taking the limit $\alpha\to 0$, we obtain that the no-crossing matching with $\theta=1$ maximizes welfare. The change in the optimal no-crossing matching as $\alpha$ ranges between $0$ and $+\infty$ reflects the tension between welfare and stability. 
Indeed, for the market from Figure~\ref{fig:runningexample}, the stable matching has a welfare of $-220/147\approx -1.5$, while the optimal welfare is $-4/3\approx -1.33$.

A simple sufficient condition for the absence of a stability-welfare tension is that \(\mu-\nu\) changes sign at most twice. Here, sign changes are counted after removing common mass and partitioning the real line into maximal intervals on which the signed measure \(\mu-\nu\) is nonnegative or nonpositive. For example, in the market from Figure~\ref{fig:stablevfair}, where the stable matching has optimal welfare, $\mu-\nu$ changes sign once at $0$, while in the market from Figure~\ref{fig:runningexample}, where there is a tension, there are three sign changes.

More generally, the stability-welfare tension is absent if $\mu-\nu$ changes sign at most twice. Indeed, \cite*{mccann1999exact} showed that for such $\mu$ and $\nu$, a no-crossing matching is unique (as in Figure~\ref{fig_two_times}).
Thus, the same no-crossing matching solves the transportation problem with the cost $c_\alpha$ from~\eqref{eq_c_alpha_R} for any $\alpha\in (0,+\infty)$. Letting $\alpha \to +\infty$, we conclude that this matching is stable. Letting $\alpha\to 0$, we obtain that it is also welfare-maximizing.
\begin{figure}[ht]
    \centering
		\begin{tikzpicture}[scale=0.45]
			\draw[black,very thick] (-10,0) -- (6,0);
			\filldraw[fill=lightgray, draw=black, very thick] (-4.5,0) rectangle (-9.0,3);
			\filldraw[fill=lightgray, draw=black, very thick] (2.25,0) rectangle (5.1,3);
			\filldraw[fill=lightgray, draw=black, very thick] (-4.5,0) rectangle (2.25,-3);
			
			\draw[very thick, dashed, ->] (-5,0) arc (180:0:0.5);    
			\draw[very thick, dashed, ->] (-6,0) arc (180:0:1.5);
			\draw[very thick, dashed, ->] (-7,0) arc (180:0:2.5);
			\draw[very thick, dashed, ->] (-8,0) arc (180:0:3.5);
			\draw[very thick, dashed, ->] (-9,0) arc (180:0:4.5);
			
			\draw[very thick, dashed, ->] (3.1,0) arc (0:180:0.5);    
			\draw[very thick, dashed, ->] (4.1,0) arc (0:180:1.5);
			\draw[very thick, dashed, ->] (5.1,0) arc (0:180:2.5);
			
		\end{tikzpicture}
    \caption{For $\mu-\nu$ changing sign two times, no-crossing matching is unique and thus is simultaneously stable and welfare-maximizing.}
    \label{fig_two_times}
\end{figure}

Problem complexity increases with the number of times $\mu-\nu$ changes sign. For at most two changes, a no-crossing matching is unique. For three changes, as in Figure~\ref{fig:theta_alpha}, we obtain a single one-parameter family. More generally, we obtain multiple parametric families whose number grows exponentially with the number of sign changes.\footnote{No-crossing matchings are parameterized by partitioning the real line into intervals where $\mu-\nu\geq 0$ or $\mu-\nu\leq 0$. For each interval of positivity, we specify the fraction of agents matched non-locally to the right ($\theta$ in Figure~\ref{fig:runningexample}) and the negativity intervals these matches come from. Thus, for each interval of positivity, we obtain a parameter and combinatorial data. The combinatorial data grows exponentially with the number of intervals.}

A designer pursuing stability can avoid the exploding search space by combining insights about no-crossing matchings with properties specific to stability. In Supplementary Appendix~\ref{sec_algorithm}, we describe an efficient algorithm that constructs a stable matching by sequentially identifying sub-markets similar to Figure~\ref{fig_two_times}, which are to be matched internally regardless of the rest of the population. If $\mu$ and $\nu$ have piecewise-constant density with at most $m$ intervals of constancy, the algorithm runs in time $O(m^2)$.

\begin{corollary}\label{cor_stability}
For \(X=Y=\mathbb{R}\) and non-atomic \(\mu,\nu\) such that \(\mu-\nu\) changes sign only finitely many times, there is a unique stable matching. If, in addition, \(\mu\) and \(\nu\) have piecewise-constant densities with at most \(m\) intervals of constancy, the stable matching can be computed in time \(O(m^2)\). For each \(x\in X\), there are at most two distinct types \(y,y'\in Y\) such that \((x,y)\) and \((x,y')\) are in the support. The welfare-stability tension---while present in general---disappears if \(\mu-\nu\) has at most two sign changes.
\end{corollary}

Corollaries~\ref{cor_no_tension} and~\ref{cor_stability} imply that for $\mu-\nu$ with at most two sign changes, there are egalitarian and stable matchings with optimal welfare. However, these two matchings are never the same unless $\mu$ coincides with $\nu$. More generally, for distinct non-atomic $\mu,\nu$, egalitarian and stable matchings cannot coincide since the former is increasing, while the latter contains decreasing regions (as any no-crossing matching).
\begin{corollary}\label{cor_stability_fairness_tension}
    For $X=Y=\mathbb{R}$ and distinct non-atomic $\mu,\nu$, there is always a tension between stability and fairness. That is, no matching is both stable and egalitarian.
\end{corollary}
\medskip

We now consider the multidimensional setting. For $\R^d$ with~$d>1$, there may be a tension between individual and collective welfare, absent in the one-dimensional case. 
\begin{example}[Fairness-welfare tension]
Let~$X=Y=\mathbb{R}^2$ and $U(x,y)=-\|x-y\|$. Denote by~$B_{\delta}(z)$ the ball of radius~$\delta$ centered at a point~$z\in \R^2$. Fix a small $\delta>0$. Let~\(\mu\) be uniform on \(B_{\delta}(0,0)\cup B_{\delta}(0,1)\), and~\(\nu\) be uniform on \(B_{\delta}(0,0)\cup B_{\delta}(1,0)\). In the welfare maximizing matching, the agents~$X$ and~$Y$ in~$B_{\delta}(0,0)$ are matched together and the agents in~$X$ from~$B_{\delta}(0,1)$ are matched with those from~$B_{\delta}(1,0)$ in~$Y$. As \(\delta\to0\), the welfare-maximizing matching has average distance approaching \(\frac{1}{2}\sqrt{2}\), and its longest matched distance approaches \(\sqrt{2}\). By contrast, the matching that pairs the \(X\)-agents in \(B_\delta(0,1)\) with the \(Y\)-agents in \(B_\delta(0,0)\) and the \(X\)-agents in \(B_\delta(0,0)\) with the \(Y\)-agents in \(B_\delta(1,0)\) has average distance approaching \(1\) and longest matched distance at most \(1+2\delta\).
\end{example}

Concave and convex optimal transport do not generally admit explicit solutions beyond~$d=1$, but some structural results can be obtained for~$d>1$. Now the cost function is
\begin{equation}\label{eq_c_alpha_R_d}
    c_\alpha(x,y)=\frac{1-\exp(-\alpha\cdot \|x-y\|)}{\alpha} \quad \text{for $\alpha\ne 0$}\qquad \text{and}\qquad  c_0(x,y)= \|x-y\|.
\end{equation} 
Say that a matching~$\pi$ is deterministic if it is supported on the graph of an invertible map~$s\colon X\to Y$. A matching~$\pi$ is diagonal if~$x=y$ for~$\pi$-almost all couples~$(x,y)$.
Off-the-shelf results from optimal transport (Theorems~1.2 and~1.4 of \cite*{gangbo1996geometry}) address multidimensional optimal transport with concave or convex costs. Their results imply the following corollary.
\begin{corollary}\label{cor:Rdstructure}
Consider a market with~$X=Y=\R^d$, compactly supported~$\mu$ and~$\nu$ absolutely continuous with respect to the Lebesgue measure, and potential~$U(x,y)=-\|x-y\|$. 
The optimal transportation problem with cost~\eqref{eq_c_alpha_R_d} admits an optimal matching~$\pi^*$ and the following assertions hold:
\begin{itemize}
    \item if~$\alpha<0$, then $\pi^*$ is unique and deterministic;
    \item if $\alpha>0$, then $\pi^*$ is unique and can be written as a
    convex combination of a deterministic matching and a diagonal one;
    \item if $\alpha=0$, then $\pi^*$ maximizes welfare, and there exists an optimal deterministic matching.
\end{itemize}
\end{corollary}
We note that the condition of absolute continuity can be weakened to requiring that~$\mu$ and~$\nu$ place zero mass on~$(d-1)$-dimensional surfaces. By imposing this on~$\mu$ only, we obtain a version of Corollary~\ref{cor:Rdstructure} where the conclusion that~$\pi^*$ is deterministic is replaced with the Monge property ($\pi^*$ is supported on a graph of~$s\colon X\to Y$, but~$s$ may not be invertible).

The egalitarian matching obtained as
the limit~$\alpha\to -\infty$ corresponds to the $L^\infty$-transportation problem \cite*{champion2008wasserstein,brizzi2023optimal}. The transportation literature has mostly focused on questions of existence, uniqueness, and cyclic monotonicity.
 For a distance-based cost, the~$L^\infty$-problem is interesting only for $d>1$ since, on the real line, it corresponds to the assortative matching by Corollary~\ref{cor_no_tension}. To the best of our knowledge, the limit problem for~$\alpha\to +\infty$ has not been studied prior to our paper.

\subsection{Discussion}

As we have argued, the spatial structure of a market is important in school choice. It has also been emphasized by a growing literature in international trade on buyer-seller networks; see, for instance, \cite*{chaney2014network,antras2017margins,panigrahi2021endogenous}. Here we discuss two other applications.

\medskip

\emph{Ride Sharing:} Ride-sharing platforms like Uber and Lyft operate large two-sided matching markets where thousands of riders are matched to drivers every day. While both riders and drivers, in principle, care about specific details of their match partner, the most important and salient feature is their physical distance. Drivers and riders both prefer to be matched with partners close to them to avoid wasteful travel time. Initially, Uber's matching algorithm was greedy: they simply matched each rider with the closest available driver. The resulting matching is stable in an aligned market with
$U(x,y)=-\text{distance}(x,y)$ as in~\eqref{eq_distance_based}. Quickly, however, Uber noticed that this greedy algorithm had some unappealing features.  The following is a quote from Uber's website describing the evolution of their matching algorithm:
\begin{quote}
{\normalsize{
    In the early days, a rider was immediately matched with the closest available driver. It worked well for most riders but sometimes led to long wait times for others. Across a whole city, those longer wait times really added up [...] In the seconds after a rider requests a ride, we evaluate nearby drivers and riders in one batch. We then pair riders and drivers in the distribution, aiming to reduce the average wait time for everyone, not just the closest pair. This helps keep things moving and rides reliable across the network \citep{Uber}.
    }}
\end{quote}
There are a few points worth emphasizing. First, Uber claims that the greedy match led to long wait times for some riders, while it worked well for most. This is consistent with the message that stability and fairness are in conflict (see Corollary~\ref{cor_stability_fairness_tension} and the long tail of low-utility matches in simulations reported in Figure~\ref{fig:simulations_welfare}). 
Second, Uber noticed that, not only did the greedy matching result in inequality, but the longer wait times ``added up.'' This suggests that the greedy algorithm produced overall longer total waiting times, which is consistent with the message in Theorem~\ref{th_equivalence}. The objective implicitly being optimized in a stable matching does not minimize total wait times. Theorem \ref{th_welfare} implies a bound on how poorly the stable matching can perform in terms of total wait time.

Uber now matches riders in batches. 
Within batches, the algorithm aims to reduce average wait times. That is, Uber does not use a stable matching, even within batches. 
\medskip 

\emph{Markets with transfers and limited commitment:}
Aligned preferences are a natural assumption in models of matching with transfers and limited commitment. We argue that alignment arises when agents cannot commit to utility transfers at the stage where they bargain over matches, and the post-match surplus is shared according to Nash bargaining. 

Consider the matching market with transferable utility introduced by \cite*{shapley1971assignment} and used, for example, in marriage-market models of \cite*{becker1973theory} and \cite*{galichon2022cupid}. If a couple~$(x,y)$ forms a match, then they generate a surplus~$s(x,y)$ that they may share:~$x$ getting~$ u(x,y)$ and~$y$ getting~$ v(x,y)$, with~$ u(x,y)+ v(x,y)=s(x,y)$. The model with transferable utilities assumes that the shares~$ u(x,y)$ and~$ v(x,y)$ are determined at the same time as the match. 
Effectively, the model assumes that transfers between couples are negotiated, agreed upon, and committed to, as part of the bargaining over who matches with whom.

We consider instead the possibility that agents cannot commit to a specific share of
the surplus when they agree to form a pair with another agent. Specifically, we
assume that, once a match has been formed, it cannot be broken without significant
cost, and that the two members of a couple bargain over how to share the surplus
according to the Nash bargaining model with fixed weights,~$\beta$ and~$1-\beta$, for the~$X$ and~$Y$ sides, respectively. This means
that if a couple~$(x,y)$ forms, then~$ u(x,y)=\beta \cdot s(x,y)$ while~$ v(x,y) =
(1-\beta)\cdot  s(x,y)$.

Thus a matching~$\pi$ is stable in the model with Nash bargaining
surplus-sharing if and only if it is stable in a market with aligned preferences represented by the potential $U(x,y)=s(x,y)$. The connection extends to approximate stability: $\pi$ is~$\varepsilon$-stable in the model with Nash bargaining
 if it is~$\varepsilon'$-stable in the market with aligned preferences and $\varepsilon'=\frac{\varepsilon}{\max\{\beta,\ 1-\beta\}}$.

\bibliography{main}

\appendix

\section{Proofs}\label{app_proof_equivalence}

\subsection{Extensions of Theorem~\ref{th_equivalence} and their proofs}

We prove two general results applicable to 
\begin{equation}\label{eq_general_cost_app}
c(x,y)=-h\left(U(x,y)\right),
\end{equation} 
and deduce Theorem~\ref{th_equivalence} as a corollary.
The first result addresses the case of positive~$\alpha$.
\begin{theorem}\label{th_equivalence_general}
    For a market with aligned preferences, assume that the utility function~$U$ takes values in an open interval~$I\subset \R$, possibly infinite but bounded from above. Let~$h\colon I\to (0,+\infty)$ be a differentiable function such that there exists~$\alpha>0$ satisfying
     $$\frac{h'(t)}{h(t)}\geq \alpha \quad \mbox{for all }\quad t\in I.$$
    Then any solution to the transportation problem with cost~\eqref{eq_general_cost_app} is~$\varepsilon$-stable with 
    $$\varepsilon=\frac{\ln 2}{\alpha}.$$
\end{theorem}
An example of~$h$ satisfying the requirements of Theorem~\ref{th_equivalence_general} is~$h(t)=\exp(\alpha\cdot t)$ with $\alpha>0$.

We use the following result by \cite*{beiglbock2009optimal}. Consider an optimal transportation problem
$\min_\pi \int_{X\times Y} c(x,y)\, \diff\pi(x,y)$
with Polish spaces~$X$ and~$Y$, measurable cost~$c$. Assume that the value of this transportation problem is finite and is attained at some~$\pi^*$.
\cite*{beiglbock2009optimal} establish the existence of a~$c$-cyclic monotone~$\Gamma$ such that~$\pi^*(\Gamma)=1$, i.e., the optimal transportation plan is supported on a cyclically-monotone set.
Recall that~$c$-cyclic monotonicity 
of~$\Gamma$ means that 
$$\sum_{i=1}^n c(x_i,y_i)\leq \sum_{i=1}^n c(x_i,y_{i+1})$$
for all~$n\geq 2$ and pairs of points~$(x_1,y_1),\ldots,(x_n,y_n)\in\Gamma$  with the convention~$y_{n+1} =y_1$. 
\begin{proof}[Proof of Theorem~\ref{th_equivalence_general}]
Consider now the optimal transportation problem~\eqref{eq_transport} with 
$c$ from~\eqref{eq_general_cost_app},
and let~$\pi^*$ be its solution. By the result of \cite*{beiglbock2009optimal},~$\pi^*$ is supported on some~$c$-cyclically monotone set~$\Gamma$.
The requirement of cyclic monotonicity with~$n=2$ implies that for any~$(x_1,y_1),(x_2,y_2)\in \Gamma$
$$h\big( U(x_1,y_2)\big)+h\big( U(x_2,y_1)\big)\leq h\big( U(x_1,y_1)\big)+h\big( U(x_2,y_2)\big).$$
Dropping the second term on the left-hand side and replacing both terms on the right-hand side with their maximum, we obtain
\begin{equation}\label{eq_h_cyclic_monotone_appendix}
h\big( U(x_1,y_2)\big)\leq 2 \cdot  h\Big(\max\{U(x_1,y_1), U(x_2,y_2)\}\Big).
\end{equation}
Consider a pair of points~$t,t'$ from~$I$. Integrating the bound $h'(s)/h(s)\geq \alpha$ from~$t$ to~$t'$, we get
\begin{equation}\label{eq_integrated_log_derivative}
\ln \left(\frac{h(t')}{h(t)}\right) \geq \alpha\cdot(t'-t)\quad \text{for}\quad t\leq t'.
\end{equation}
Pick~$t=\max\{U(x_1,y_1), U(x_2,y_2)\}$ and~$t'=U(x_1,y_2)$. Consider two cases depending on whether~$t\leq t'~$ or~$t> t'$. If~$t\leq t'$, plugging~$t$ and~$t'$ into~\eqref{eq_integrated_log_derivative} and taking into account the bound~\eqref{eq_h_cyclic_monotone_appendix}, we obtain
\begin{equation}\label{eq_h_bound}
\ln 2 \geq \alpha\cdot\left(U(x_1,y_2)-\max\{U(x_1,y_1), U(x_2,y_2)\}\right).
\end{equation}
If~$t>t'$, this inequality~\eqref{eq_h_bound} holds trivially as the right-hand side is negative. We conclude that 
$$U(x_1,y_2)\leq \max\{U(x_1,y_1), U(x_2,y_2)\}+\frac{\ln 2}{\alpha}$$
for a set of~$(x_1,y_1)$ and~$(x_2,y_2)$ of full~$\pi^*\times \pi^*$-measure. Thus~$\pi^*$ is~$\varepsilon$-stable with~$\varepsilon=\frac{\ln 2}{\alpha}$. 
\end{proof}

The following theorem deals with negative~$\alpha$.
\begin{theorem}\label{th_egalitarian_general}
    For a market with aligned preferences, assume that the utility function~$U$ takes values in an open interval~$I\subset \R$, possibly infinite but bounded from below. Let~$h\colon I\to (-\infty,0)$ be a differentiable increasing function  such that there exists~$\alpha<0$ satisfying
    \begin{equation*}
     \frac{h'(t)}{h(t)}\leq \alpha \quad \mbox{for all }\quad t\in I.
     \end{equation*}
    Then any solution to the transportation problem with cost~\eqref{eq_general_cost_app} is~$\varepsilon$-egalitarian with 
    $$\varepsilon=\frac{\max\{1,\, \ln |\alpha|\}}{|\alpha|}.$$
\end{theorem}
For example, $h(t)=-\exp(\alpha\cdot t)$ with $\alpha<0$ satisfies the requirements of Theorem~\ref{th_egalitarian_general}.

\begin{proof}[Proof of Theorem~\ref{th_egalitarian_general}]
Recall that~$\mathcal{U}_{\min}(\pi)$ is the essential infimum of~$U(x,y)$ with respect to measure~$\pi$, i.e.,
$\mathcal{U}_{\min}(\pi)=\inf\big\{\lambda\in \R \ \colon \  \pi(\{U(x,y)<\lambda\})>0 \big\}.$
The egalitarian lower bound is given by 
\begin{equation}\label{eq_U_star_supremum}
\mathcal{U}_{\min}^*(\mu,\nu)=\sup_{\pi \in \Pi(\mu,\nu)}\mathcal{U}_{\min}(\pi).
\end{equation}
Consider the set~$C$ of all hypothetical couples whose utility is below the egalitarian lower bound by more than~$\varepsilon$, i.e.,
$C=\left\{(x,y)\in X\times Y\colon U(x,y)< \mathcal{U}_{\min}^*(\mu,\nu)-\varepsilon \right\}.$
Let~$\pi^*$ be a solution to an optimal transportation problem with cost~\eqref{eq_general_cost_app}. Our goal is to show that~$\pi^*(C)$ cannot be too big. Fix small~$\delta>0$ and find a matching~$\pi'\in \Pi(\mu,\nu)$ such that 
$\mathcal{U}_{\min}(\pi')\geq \mathcal{U}_{\min}^*(\mu,\nu)-\delta.$
Note that we may not be able to find such a matching for~$\delta=0$ since the supremum in~\eqref{eq_U_star_supremum} may not be attained. We get
$$\int_{X\times Y} h(U(x,y))\diff\pi'(x,y)\leq \int_{X\times Y} h(U(x,y))\diff\pi^*(x,y)$$
since~$\pi^*$ is the optimal matching for the transportation problem with cost~$-h(U(x,y))$. Since~$h$ is increasing,~$h(U(x,y))\geq h(\mathcal{U}_{\min}^*(\mu,\nu)-\delta)$ for~$\pi'$-almost all pairs~$(x,y)$. Thus, the left-hand side admits the following bound
$$h(\mathcal{U}_{\min}^*(\mu,\nu)-\delta)\leq \int_{X\times Y} h(U(x,y))\diff\pi'(x,y)$$
and thus
$$h(\mathcal{U}_{\min}^*(\mu,\nu)-\delta)\leq \int_{X\times Y} h(U(x,y))\diff\pi^*(x,y).$$
Since~$h$ is continuous and~$\delta>0$ was arbitrary, we get
\begin{equation}\label{eq_bound_h(Umin)}
h(\mathcal{U}_{\min}^*(\mu,\nu))\leq \int_{X\times Y} h(U(x,y))\diff\pi^*(x,y).
\end{equation}
Using monotonicity and negativity of~$h$ and the definition of~$C$, we obtain
\begin{align*}
\int_{X\times Y} h(U(x,y))\diff\pi^*(x,y)& = \int_{C} h(U(x,y))\diff\pi^*(x,y)  + \int_{(X\times Y)\setminus C} h(U(x,y))\diff\pi^*(x,y)\\
& \leq \int_{C} h(U(x,y))\diff\pi^*(x,y) \\
& \leq h(\mathcal{U}_{\min}^*(\mu,\nu)-\varepsilon)\cdot \pi^*(C).
\end{align*}
Combining this bound with~\eqref{eq_bound_h(Umin)} gives
$$h(\mathcal{U}_{\min}^*(\mu,\nu))\leq h(\mathcal{U}_{\min}^*(\mu,\nu)-\varepsilon)\cdot \pi^*(C)$$
and thus
$$\pi^*(C)\leq \frac{h(\mathcal{U}_{\min}^*(\mu,\nu))}{h(\mathcal{U}_{\min}^*(\mu,\nu)-\varepsilon)}.$$
Note that the inequality changes direction because of the negativity of~$h$. Similarly to~\eqref{eq_integrated_log_derivative}, integrating the bound on the logarithmic derivative of~$h$, we obtain 
$$\ln \left(\frac{|h(t')|}{|h(t)|}\right) \leq \alpha\cdot(t'-t)\quad \text{for}\quad t\leq t'$$
and conclude that 
$$\pi^*(C)\leq \exp(\alpha\cdot \varepsilon)=\exp(-|\alpha|\cdot \varepsilon).$$
Plugging in~$\varepsilon=\frac{\max\{1,\, \ln |\alpha|\}}{|\alpha|}$, we get 
$$\pi^*(C)\leq \min \left\{\exp(-1),\, \frac{1}{|\alpha|}\right\}$$
and thus 
$\pi^*(C)\leq \varepsilon.$
We conclude that~$\pi^*$ is~$\varepsilon$-egalitarian with~$\varepsilon=\frac{\max\{1,\, \ln |\alpha|\}}{|\alpha|}$.
\end{proof}

Theorem~\ref{th_equivalence} follows directly from Theorems~\ref{th_equivalence_general} and~\ref{th_egalitarian_general}.
\begin{proof}[Proof of Theorem~\ref{th_equivalence}]
Consider~$h(t)=\mathrm{sign}(\alpha)\cdot\exp(\alpha\cdot t)$. For~$\alpha>0$, Theorem~\ref{th_equivalence_general} implies that the solution to the optimal transportation problem with the cost~$c(x,y)=-h\big(U(x,y)\big)$ is~$\varepsilon$-stable with~$\varepsilon=\frac{\ln 2}{\alpha}$. For~$\alpha<0$, Theorem~\ref{th_egalitarian_general} gives~$\varepsilon$-egalitarianism with~$\varepsilon=\frac{\max\{1,\, \ln |\alpha|\}}{|\alpha|}$.  
Multiplying a cost function by a positive factor and adding a constant does not affect the optimum. Thus a solution to the transportation problem with cost
$$c_\alpha(x,y)=\frac{1-\exp(\alpha\cdot U(x,y))}{\alpha}$$ has the same properties.
For $\alpha=0$, the cost is $c_0(x,y)=-U(x,y)$, so minimizing the total transport cost is exactly maximizing $W(\pi)=\int U\diff\pi$.
\end{proof}

\subsection{Proof of Theorem~\ref{th_welfare}.}
Let~$\pi$ be an~$\varepsilon$-stable matching with marginals~$\mu$ and~$\nu$. Since~$U$ is continuous, Lemma~\ref{lm_deterministic_equivalence} implies that, for any~$(x_1,y_1),(x_2,y_2)\in \supp(\pi)$, 
$$U(x_1,y_2)\leq \max\left\{U(x_1,y_1),\, U(x_2,y_2)\right\}+ \varepsilon.$$
By non-negativity of $U$, we get
$$U(x_1,y_2)\leq U(x_1,y_1)+U(x_2,y_2)+ \varepsilon.$$
Let~$\pi'$ be any other matching with marginals~$\mu$ and~$\nu$. Consider a distribution~$\lambda\in \Delta\big((X\times Y)\times (X\times Y)\big)$ such that
the marginals of~$\lambda$ on~$(x_1,y_1)$ and on~$(x_2,y_2)$ are equal to~$\pi$ and the marginal on~$(x_1,y_2)$ is~$\pi'$.\footnote{Such a $\lambda$ exists by the standard gluing construction: sample $(x_1,y_2)$ from $\pi'$, then sample $y_1$ from a conditional distribution of $\pi$ given $x_1$ and sample $x_2$ from a conditional distribution of $\pi$ given $y_2$.}
We  get
\begin{align*}
    W(\pi')&=\int_{X\times Y} U(x_1,y_2)\diff\pi'(x_1,y_2)=\int_{(X\times Y)\times (X\times Y)} U(x_1,y_2)\diff\lambda(x_1,y_1,x_2,y_2)\\
    &\leq \int_{(X\times Y)\times (X\times Y)} \left(U(x_1,y_1)+U(x_2,y_2)+ \varepsilon\right)\diff\lambda(x_1,y_1,x_2,y_2)\\
    &=\int_{X\times Y} U(x_1,y_1)\diff\pi(x_1,y_1)+ \int_{X\times Y} U(x_2,y_2)\diff\pi(x_2,y_2)+\varepsilon\\
    &=2W(\pi)+\varepsilon. 
    \end{align*}
    We thus obtain
$$W(\pi)\geq \frac{1}{2}\left(W(\pi')-\varepsilon\right)$$
     for any matching~$\pi'$. In particular, this inequality holds for~$\pi'$ maximizing welfare. Thus~$W(\pi)\geq \frac{1}{2}\left(W^*(\mu,\nu)-\varepsilon\right)$.

     Now we show that a substantial fraction of agents in an~$\varepsilon$-stable matching~$\pi$ have utilities above the egalitarian lower bound~$\mathcal{U}_{\min}^*(\mu,\nu)$. Consider the set of hypothetical couples whose utility is more than~$\varepsilon$ below~$\mathcal{U}_{\min}^*(\mu,\nu)$
$$C=\left\{(x,y)\in X\times Y\colon U(x,y)< \mathcal{U}_{\min}^*(\mu,\nu)-\varepsilon \right\}.$$
     Our goal is to show that~$\pi(C)$ cannot be too big. 
     Let~$\pi'$ be the egalitarian matching, which exists by Corollary~\ref{cor_existence}.
     Take~$\lambda$ as in the construction above for the pair~$\pi$ and~$\pi'$. In other words,~$\lambda$ is a distribution on~$(X\times Y)\times (X\times Y)$  with marginals~$\pi$ on~$(x_1,y_1)$ and~$(x_2,y_2)$ and~$\pi'$ on~$(x_1,y_2)$. 
      Thus~$U(x_1,y_2)\geq \mathcal{U}_{\min}^*(\mu,\nu)$ on a set of full~$\lambda$-measure.
     By~$\varepsilon$-stability,~$U(x_1,y_2)\leq \max\left\{U(x_1,y_1),\, U(x_2,y_2)\right\}+ \varepsilon$ and thus
$$\max\{U(x_1,y_1),\, U(x_2,y_2)\}+ \varepsilon\geq \mathcal{U}_{\min}^*(\mu,\nu).$$
     Towards a contradiction, assume that~$\pi(C)>\frac{1}{2}.$ 
     Since both marginals of~$\lambda$ are~$\pi$, the union bound gives $\lambda(C\times C)\ge 2\pi(C)-1>0$. On the other hand,
$$\max\{U(x_1,y_1),\, U(x_2,y_2)\}+ \varepsilon < \mathcal{U}_{\min}^*(\mu,\nu)$$
on~$C\times C$. This contradiction implies that~$\pi(C)\leq 1/2$ and thus any~$\varepsilon$-stable matching~$\pi$ is~$\varepsilon'$-egalitarian with~$\varepsilon'=\max\{1/2,\, \varepsilon\}$.

\subsection{Proof of Theorem~\ref{th_idiosyncractic}.}
We shall prove a more general result that applies to finite populations $X_n$ and $Y_n$ that are close to the continuous distributions $\mu$ and $\nu$, but may not be i.i.d.\  samples from these distributions. For example, $X_n$ and $Y_n$ can be given by the collections $\frac{k}{n}$-quantiles, $k=1,\ldots, n$ of $\mu$ and $\nu$. First, we formalize what is meant by ``close.'' 

\begin{definition} Probability measures $\tau$ and  $\tau'$ in $\Delta(\R^d)$ are $\varepsilon$-close if, for any $z\in \R^d$, the absolute value of the difference between their distribution functions $F(z)=\tau(\{w\in\R^d\colon w\leq z\})$ and $F'(z)=\tau'(\{w\in\R^d\colon w\leq z\})$ does not exceed $\varepsilon$.\footnote{In other words, the Kolmogorov-Smirnov distance between $\tau$ and $\tau'$ is at most $\varepsilon.$}
\end{definition}

We identify a finite collection of points $z_1,\ldots,z_n$ in  $\R^d$ with its ``empirical'' distribution $\tau_n=\frac{1}{n}\sum_{i=1}^n \delta_{z_i}$. This allows us to say that $z_1,\ldots,z_n$ is $\varepsilon$-close to a distribution $\tau\in \Delta(\R^d)$, or to another collection of points $z_1',\ldots,z_n'$. Finally, when $X_n$ and $Y_n$ are finite and of the same cardinality $n$, we say that a matching $\pi$ is deterministic if it is a matching in the usual sense. In an abuse of notation, $\Pi(X_n,Y_n)$ denotes the set of such matchings.

\begin{theorem}\label{th_idiosyncratic_appendix} Consider non-atomic $\mu,\nu\in\Delta(\R)$, let $\pi \in \Pi(\mu,\nu)$, 
and let $X_n$ and $Y_n$ be two finite populations of size $n$ that are $\varepsilon$-close to $\mu$ and $\nu$, respectively. Let $\gamma$ be a parameter such that $1\leq \gamma \leq \frac{\sqrt{n}}{2\sqrt{2}\ln n}$. Then, with probability at least 
    $$1-5 n^{1-2\gamma},$$
     there exists $\pi_n\in\Pi(X_n,Y_n)$ such that, for all matched $x_i$ and $y_j$ in $\pi_n$,  
     $$\min \left\{F_i(\xi_{i}(j)), \ G_j(\eta_{j}(i))\right\}\geq 1- 2\sqrt{2}\cdot \gamma\cdot  \frac{\ln n}{\sqrt{n}}$$ and 
     $\pi_n$ is $\varepsilon'$-close to $\pi$
     with 
     $$\varepsilon'=9\max\left\{\varepsilon, \ \sqrt{\frac{\ln (4(n+1))}{2n}} \right\}+\frac{1}{\gamma \ln n}.$$
\end{theorem}

Recall the Dvoretzky–Kiefer–Wolfowitz inequality \citep{dvoretzky1956asymptotic}, which states that the empirical distribution of a sample from a distribution is close to the distribution with high probability. An independent sample $z_1,\ldots,z_n$ from a distribution $\tau\in\Delta(\R^d)$ is $\varepsilon$-close to $\tau$ with probability at least 
$1-2\cdot \exp\left(-2n\varepsilon^2\right)$ for $d=1$ \citep{massart1990tight} and  $1-2d(n+1)\cdot \exp\left(-2n\varepsilon^2\right)$ for $d\geq2$  \citep{naaman2021tight}. 

By the Dvoretzky–Kiefer–Wolfowitz inequality, if $X_n$ and $Y_n$ are i.i.d.\ samples from $\mu$ and $\nu$, then, for large $n$, these samples become arbitrarily close to $\mu$ and $\nu$ with high probability. Theorem~\ref{th_idiosyncractic} follows from Theorem~\ref{th_idiosyncratic_appendix} applied to such i.i.d.\ samples, with $\delta_n$ equal to the maximum of all the approximation errors.

We prove Theorem~\ref{th_idiosyncratic_appendix} in two steps. We first show that any continuous matching $\pi\in \Pi(\mu,\nu)$ can be approximated by a matching of finite populations $X_n$ and $Y_n$, without providing any guarantee on agents' utilities with respect to their idiosyncratic components. Second, we demonstrate that any matching of $X_n$
and $Y_n$, with high probability, can be modified so that the new matching is close to the original one and each agent is close to getting their best utilities with respect to the idiosyncratic component.

\begin{proposition}\label{prop:approximation}
    Let $\pi\in \Pi(\mu,\nu)$ with non-atomic $\mu,\nu\in \Delta(\R)$. Let $X_n$ and $Y_n$ be~$\varepsilon$-close to $\mu$ and $\nu$, respectively. Then there exists a deterministic matching $\pi_n$ of $X_n$ and $Y_n$ that is $\delta$-close to $\pi$ with 
    $$\delta=9\max\left\{\varepsilon, \ \sqrt{\frac{\ln (4(n+1))}{2n}} \right\}.$$
\end{proposition}
The first step in proving Proposition~\ref{prop:approximation} is to show that any matching $\pi$ can be approximated with an auxiliary deterministic matching of \emph{some} finite populations $X_n'$ and $Y_n'$, i.e., in contrast to Proposition~\ref{prop:approximation} the finite populations are not given.
\begin{lemma}\label{lem:approximation_free_marginals}
Let $\pi\in \Pi(\mu,\nu)$ with\footnote{Note that this lemma does not require $\mu$ and $\nu$ to be non-atomic. However, for atomic $\mu$ and $\nu$, we are not guaranteed that $X_n'$ and  $Y_n'$ contain $n$ \emph{distinct} points.} $\mu,\nu\in \Delta(\R)$. Then there exist finite populations $X_n'$ and $Y_n'$ of size $n$ and  $\pi_n'\in\Pi(X'_n,Y'_n)$ such that $\pi_n'$ is ${\varepsilon'}$-close to $\pi$, $X_n'$ is ${\varepsilon'}$-close to $\mu$, and $Y_n'$ is ${\varepsilon'}$-close to $\nu$ with
$${\varepsilon'}=\sqrt{\frac{\ln (4(n+1))}{2n}}.$$
\end{lemma}
\begin{proof}
Sample $n$ points $z_1',\ldots,z_n'$ independently from $\pi$. By the Dvoretzky–Kiefer–Wolfowitz inequality, the sample $z_1',\ldots,z_n'$ is ${\varepsilon'}$-close to $\pi$ with probability at least 
$1-4(n+1)\cdot \exp\left(-2n{\varepsilon'}^2\right)$. This probability is positive for any ${\varepsilon'}>\sqrt{\frac{\ln (4(n+1))}{2n}}$. 
Thus, for any such ${\varepsilon'}$, there exists a collection of $n$ points that is ${\varepsilon'}$-close to $\pi$. 
Letting $\varepsilon'$ go to $\sqrt{\frac{\ln (4(n+1))}{2n}}$ from above, and choosing a convergent subsequence, there exists such  $z_1',\ldots,z_n'$ for ${\varepsilon'}=\sqrt{\frac{\ln (4(n+1))}{2n}}$ as well.\footnote{Closeness is invariant under monotone reparameterization, so we may assume that $\mu$ and $\nu$ are supported on $[0,1]$, which justifies the compactness argument.}
Each point $z_i'$ is a pair $(x_i',y_i')$. Since $z_1',\ldots,z_n'$ is ${\varepsilon'}$-close to $\pi$, the marginal $X_n'=\{x_1',\ldots,x_n'\}$ is ${\varepsilon'}$-close to $\mu$ and $Y_n'=\{y_1',\ldots,y_n'\}$ is ${\varepsilon'}$-close to $\nu$. We demonstrate this for $X_n'$; the argument for $Y_n'$ is analogous. Define the empirical distribution of $X_n'$ as $\mu_n'=\frac{1}{n}\sum_{i=1}^n \delta_{x_i'}$. Since $\pi_n'$ is ${\varepsilon'}$-close to $\pi$, we have that
$\left|\pi(\{(x,y)\colon (x,y)\leq (t_x,t_y)\})-\pi_n'(\{(x,y)\colon (x,y)\leq (t_x,t_y)\})\right|\leq {\varepsilon'}$ for any $(t_x,t_y)$. This implies
$$\left|\mu(\{x\leq t_x\})-\mu_n'(\{x\leq t_x\})\right|\leq {\varepsilon'}$$
and thus $X_n'$ is ${\varepsilon'}$-close to $\mu$.
\end{proof}
To prove Proposition~\ref{prop:approximation}, without loss of generality, we can assume that $\mu$ and $\nu$ are uniform on $[0,1]$, which is equivalent to working in the space of quantiles for general $\mu$ and $\nu$. The next lemma shows that if a finite population $w_1,\ldots, w_n$ is close to the uniform distribution $[0,1]$, then the points are close to the equidistant points $i/n$.
\begin{lemma}\label{lem:close_uniform}
Let $w_1,\ldots, w_n$ be numbers in $[0,1]$ ordered so that $w_i\leq w_{i+1}$. If they are $\varepsilon$-close to the uniform distribution on $[0,1]$, then for any $i=1,\ldots,n$ we have
$$\left|w_i-\frac{i}{n}\right|\leq \varepsilon.$$
\end{lemma}
\begin{proof}
Let $\mu_n=\frac{1}{n}\sum_{k=1}^n \delta_{w_k}$ and let $\mu$ be uniform on $[0,1]$, so $\mu([0,t])=t$ for $t\in[0,1]$.
Since $\mu_n$ is $\varepsilon$-close to $\mu$, we have $|\mu([0,t])-\mu_n([0,t])|\le\varepsilon$ for all $t$. We fix $i$ and plug in $t=w_i+\delta$ and $t=w_i-\delta$ with $\delta>0.$
Let $\underline{i}=\min\{k:w_k=w_i\}$ and $\overline{i}=\max\{k:w_k=w_i\}$.
For all sufficiently small $\delta>0$ we have $\mu_n([0,w_i-\delta])=(\underline{i}-1)/n$ and $\mu_n([0,w_i+\delta])=\overline{i}/n$. Letting $\delta\to 0$ gives
$$\left|w_i-\frac{\underline{i}-1}{n}\right|\le \varepsilon \qquad\text{and}\qquad \left|w_i-\frac{\overline{i}}{n}\right|\le \varepsilon.$$
Since $\underline{i}\le i\le \overline{i}$, we obtain $|w_i-i/n|\le\varepsilon$.
\end{proof}
The following result allows us to show that a deterministic matching of $X_n$ and $Y_n$ and a deterministic matching of $X_n'$ and $Y_n'$ are close to each other if matched couples are close.
\begin{lemma}\label{lem:close_matchingsclose}
Let $X_n, Y_n, X_n'$, and $Y_n'$ be finite collections of $n$ points in $[0,1]$ in non-decreasing order. Suppose $X_n, X_n', Y_n,$  and $Y_n'$ are $\alpha$-close to the uniform distribution on $[0,1]$. 
If $\pi_n\in\Pi(X_n,Y_n)$ and $\pi'_n\in\Pi(X'_n,Y'_n)$ are such that $x_i$ is matched with $y_j$ if and only if $x_i'$ is matched with $y_j'$, then $\pi_n$ and $\pi_n'$ are $8\alpha$-close to each other.
\end{lemma}
\begin{proof}
By Lemma~\ref{lem:close_uniform}, we have that $|x_i-\frac{i}{n}|\le \alpha$ and $|x_i'-\frac{i}{n}|\le \alpha$, and similarly
$|y_j-\frac{j}{n}|\le \alpha$ and $|y_j'-\frac{j}{n}|\le \alpha$. Hence $|x_i-x_i'|\le 2\alpha$ and $|y_j-y_j'|\le 2\alpha$.

Fix $z=(z_1,z_2)\in [0,1]^2$ and let $z^+=(z_1^+,z_2^+)$ where $z_k^+=\min\{z_k+2\alpha,1\}$.
If $(x_i,y_j)\le z$ is a matched pair under $\pi_n$, then $(x_i',y_j')$ is the corresponding matched pair under $\pi_n'$ and satisfies
$(x_i',y_j')\le z^+$. Therefore
$$
\pi_n\left(\{w\in [0,1]^2\colon w\le z \}\right)\leq \pi_n'\left(\{w\in [0,1]^2\colon w\le z^+ \}\right),
$$
and similarly,
$$
\pi_n'\left(\{w\in [0,1]^2\colon w\le z \}\right)\leq \pi_n\left(\{w\in [0,1]^2\colon w\le z^+ \}\right).
$$

Denote $\{z<w\le z^+\}=\{w\in[0,1]^2:\ w\le z^+\ \text{and}\ w\nleq z\}$ and  let $\mu_n,\nu_n$ be the marginals of $\pi_n$. We obtain
$$
\pi_n\left(\{z<w\le z^+\}\right)\le \mu_n((z_1,z_1^+])+\nu_n((z_2,z_2^+]).
$$
Because $\mu_n$ is $\alpha$-close to the uniform distribution on $[0,1]$, for any $t\in[0,1]$ we have
$\mu_n([0,t])\le t+\alpha$ and $\mu_n([0,t])\ge t-\alpha$, hence
$\mu_n((z_1,z_1^+])=\mu_n([0,z_1^+])-\mu_n([0,z_1])\le (z_1^+-z_1)+2\alpha\le 4\alpha$,
where the last inequality uses $z_1^+-z_1\le 2\alpha$. The analogous bound holds for $\nu_n((z_2,z_2^+])$, and thus
$$\pi_n(\{z<w\le z^+\})\le 8\alpha.$$ The same argument applied to $\pi_n'$ gives $\pi_n'(\{z<w\le z^+\})\le 8\alpha$.  Thus
$$
\left|\pi_n\left(\{w\in [0,1]^2\colon w\le z \}\right) - \pi_n'\left(\{w\in [0,1]^2\colon w\le z \}\right)\right|\le 8\alpha
$$
for any $z$, and thus $\pi_n$ and $\pi_n'$ are $8\alpha$-close to each other.
\end{proof}
We are now ready to prove Proposition~\ref{prop:approximation}.
\begin{proof}[Proof of Proposition~\ref{prop:approximation}]
As discussed above, we can assume that $\mu$ and $\nu$ are uniform on $[0,1]$ by working in the space of quantiles. Let $X'_n$, $Y'_n$, and $\pi'_n\in\Pi(X'_n,Y'_n)$ be obtained from  Lemma~\ref{lem:approximation_free_marginals}.
Suppose points in $X_n,X_n',Y_n,$ and $Y_n'$ are all ordered in non-decreasing order. 
Define a matching $\pi_n$ of $X_n$ and $Y_n$ as follows: $x_i$ and $y_j$ are matched if and only if $x_i'$ and $y_j'$ are matched under $\pi_n'$. Taking $\alpha=\max\{\varepsilon, \varepsilon'\}$, we have that $X_n$ and $X_n'$ are $\alpha$-close to the uniform distribution on $[0,1]$ and $Y_n$ and $Y_n'$ are $\alpha$-close to the uniform distribution on $[0,1]$. By Lemma~\ref{lem:close_matchingsclose}, we have that $\pi_n$ and $\pi_n'$ are $8\alpha$-close to each other and thus $\pi_n$ is $9\alpha$-close to $\pi$, completing the proof.
\end{proof}

The next step in proving Theorem~\ref{th_idiosyncratic_appendix} is to show that any matching of $X_n$ and $Y_n$ can be modified so that the new matching is close to the original one and each agent is close to getting their best utilities with respect to the idiosyncratic component.
\begin{proposition}\label{prop:high_idiosyncractic}
    Let $\pi_n$ be a deterministic matching of two finite populations $X_n$ and $Y_n$ of size $n\geq 241$, and let $\xi_{i}(j)$ and $\eta_{j}(i)$ be independent random variables with continuous distribution functions $F_i$ and $G_j$, respectively. For any $\gamma$ such that $1\leq \gamma \leq \frac{\sqrt{n}}{2\sqrt{2}\ln n}$, with probability at least 
    $$1-5 n^{1-2\gamma}$$
     there exists $\pi'_n\in\Pi(X_n,Y_n)$ such that for all matched $x_i$ and $y_j$ we have 
     $$\min \left\{F_i(\xi_{i}(j)), \ \  G_j(\eta_{j}(i))\right\}\geq 1- \delta,\qquad \text{where}\qquad \delta=2\sqrt{2}\cdot \gamma\cdot  \frac{\ln n}{\sqrt{n}}$$ and 
     $\pi_n'$ is $\varepsilon$-close to $\pi_n$
     with 
     $$\varepsilon=\frac{1}{\gamma \ln n}.$$
\end{proposition}
Note that the requirement $n\geq 241$ is equivalent to requiring that the range $1\leq \gamma \leq \frac{\sqrt{n}}{2\sqrt{2}\ln n}$ is non-empty. To prove the proposition,
we will need the following lemma resembling Lemma~\ref{lem:close_matchingsclose}.
\begin{lemma}\label{lem:close_matchingsclose2}
Let $X_n=\{x_1,\ldots, x_n\}$ and $Y_n=\{y_1,\ldots, y_n\}$ be two collections of $n$ distinct numbers in non-decreasing order. Let $\pi_n,\pi'_n\in\Pi(X_n,Y_n)$, and $t$ be a positive integer such that, if $x_i$ is matched with $y_j$ under $\pi_n$ and to $y_k$ under $\pi_n'$, then $|j-k|\leq t-1$. Then $\pi_n$ and $\pi_n'$ are $\frac{t-1}{n}$-close to each other.
\end{lemma}
\begin{proof}
Applying a monotone reparameterization to $X_n$ and $Y_n$, we can assume $x_i=y_i=i/n$ without loss of generality. Denote $\delta=(t-1)/n$ and fix $z=(z_1,z_2)\in [0,1]^2$ and let $z^+=(z_1,z_2+\delta)$. We have
$$\pi_n\left(\{w\in [0,1]^2\colon w\leq z \}\right)\leq \pi_n'\left(\{w\in [0,1]^2\colon w\leq z^+ \}\right)$$
and similarly
$$\pi_n'\left(\{w\in [0,1]^2\colon w\leq z \}\right)\leq \pi_n\left(\{w\in [0,1]^2\colon w\leq z^+ \}\right).$$
Note that
$$\pi_n\left(\{w\in [0,1]^2\colon z< w\leq z^+ \}\right)\leq \pi_n\left(\{w\in [0,1]^2\colon z_2< w_2\leq z_2+\delta \}\right).$$
The right-hand side is equal to the fraction of $i=1,\ldots n$ such that $i/n\in (z_2,z_2+\delta]$. Thus
$$\pi_n\left(\{w\in [0,1]^2\colon z< w\leq z^+ \}\right)\leq \delta.$$
Analogous calculations for $\pi_n'$ yield the same bound, and thus 
$$\left|\pi_n\left(\{w\in [0,1]^2\colon w\leq z \}\right) - \pi_n'\left(\{w\in [0,1]^2\colon w\leq z \}\right)\right|\leq \delta.$$
Since $z$ was arbitrary, we have that $\pi_n$ and $\pi_n'$ are $\delta$-close to each other.
\end{proof}

We are now ready to prove Proposition~\ref{prop:high_idiosyncractic}. The proof adapts the approach of \cite{erdHos1964random}, who studied the existence of a perfect matching in a random bipartite graph where each edge is included independently with probability $p$, to graphs where some edges are never traced. Leveraging insights of \cite{Petrov}, we obtain the explicit bound that applies to any fixed~$n$ rather than the asymptotic result for~$n\to\infty$ as in the classical paper.

\begin{proof}[Proof of Proposition~\ref{prop:high_idiosyncractic}]
Suppose $X_n$ and $Y_n$ are ordered in non-decreasing order.
Consider a bipartite graph $\Gamma=(X_n,Y_n,E)$ with vertices $X_n$ and $Y_n$ obtained by the edges in $\pi_n$ and additional edges. Specifically, $E$ contains an edge  between all $x_i$ and $y_k$ such that $x_i$ is matched with some $y_j$ in $\pi_n$ and $|k-j|\leq t-1$.  Here~$t$ is  
$$
t=\left\lceil\frac{\sqrt{{(2\ln 4)\cdot n}}}{\delta}\right\rceil,
$$
where $\delta=2\sqrt{2}\cdot \gamma\cdot  \frac{\ln n}{\sqrt{n}}$, and $\lceil z\rceil$ denotes the ceiling of $z$. 

Consider a random subgraph $\Gamma_\delta=(X_n,Y_n, E_{q})$ of $\Gamma$ where only those
edges $e=(i,k)\in E$ are left for which $F_i(\xi_{i}(k))\geq 1-\delta$ and $G_k(\eta_{k}(i))\geq 1-\delta$. In other words, $\Gamma_\delta$ is a subgraph of $\Gamma$ where each edge is eliminated with probability $q=1-\delta^2$. We aim to show 
that, with high probability, $\Gamma_\delta$ contains a perfect matching. Conditional on
this high-probability event, we define a matching $\pi_n'$ of $X_n$ and $Y_n$ as
follows: $x_i$ is matched with $y_l$ if $e=(x_i,y_l)$ enters the perfect matching of $\Gamma_\delta$. By Lemma~\ref{lem:close_matchingsclose2}, $\pi_n$ and~$\pi_n'$ are $\frac{t-1}{n}$-close to each other. Since $t=\left\lceil\frac{\sqrt{{(2\ln 4)\cdot n}}}{\delta}\right\rceil$, we have $t-1<\frac{\sqrt{{(2\ln 4)\cdot n}}}{\delta}$ and thus
$$
\frac{t-1}{n}<\frac{\sqrt{{(2\ln 4)\cdot n}}}{\delta\,n}
=\frac{\sqrt{\ln 4}}{2\gamma \ln n}\leq \frac{1}{\gamma \ln n}.
$$
We conclude that $\pi_n$ and~$\pi_n'$ are $\varepsilon$-close with 
$$
\varepsilon=\frac{1}{\gamma \ln n}.
$$
By the definition of $\Gamma_\delta,$ $F_i(\xi_{i}(l))\geq 1-\delta$ and $G_l(\eta_{l}(i))\geq 1-\delta$ for any matched pair~$(x_i,y_l).$ 

To complete the proof, we must bound the probability that $\Gamma_\delta$ admits a perfect matching. By Hall's theorem, $\Gamma_\delta$ contains a perfect matching if and only if for any subset $S\subset X_n$, the number of neighbors of $S$ in $Y_n$ is at least $|S|$. Equivalently, for any $S\subset X_n$ and $T\subset Y_n$ such that $|S|+|T|=n+1$, there must be an edge between $S$ and $T$ in $\Gamma_\delta$. Consider $S$ and $T$ such that $|S|\leq |T|$ and let $m=|S|$. For $t\leq n/2$,  the number of edges $|E_G(S,T)|$ between such $S$ and $T$ in $\Gamma$ is at least
\begin{equation}\label{eq_bound_on_E(S,T)}
    |E_G(S,T)|\geq 
\begin{cases}       
    m\cdot \frac{2t-m+1}{2}, & \text{if } m\leq t,\\
    t\cdot \frac{t+1}{2}, & \text{if } m> t
\end{cases}
.
\end{equation}
This lower bound corresponds to the ``diagonal'' matching $\pi_n$---i.e., $x_i$ is matched with $y_i$---and sets  $S=\{x_1,\ldots, x_m\}$ and $T=\{y_m,y_{m+1},\ldots, y_n\}$. By definition of $\Gamma$, there is an edge between $x_i$ and each $y_k$ with  index $k$ within $t-1$ from $x_i$'s match. Thus for $m\leq t$, there are edges between each $x_i\in S$ and $y_m,\ldots, y_{t+i-1}$. For $m>t$, there are edges between $x_i\in S$ with $i\geq m-t+1$ and  $y_m,\ldots, y_{t+i-1}$. Counting the numbers of such edges, we obtain~\eqref{eq_bound_on_E(S,T)}.

Each edge of $\Gamma$ is eliminated in $\Gamma_\delta$ with probability $q$. Thus there are no edges between $S$ and $T$ in $\Gamma_\delta$ with probability $q^{|E_G(S,T)|}$, and the union bound applied to the probability that $\Gamma_\delta$ does not contain a perfect matching gives the following
\begin{align*}
\P\left(\text{$\Gamma_\delta$ has no perfect matching}\right)\leq \sum_{m=1}^n \ \ \ 
\sum_{S,T\colon 
    |S|=m,\  |T|=n-m+1}
q^{|E_G(S,T)|}.
\end{align*}
Denote the right-hand side of~\eqref{eq_bound_on_E(S,T)} by $E^*(m,t)$. Thus the number of edges between $S$ and $T$ in $\Gamma$ is at least $E^*(m,t)$ if $m=|S|\leq |T|=n-m+1$. Since the roles of $S$ and $T$ are symmetric, the number of edges is at least $E^*(n+1-m,t)$ for $m\geq n-m+1$. 
Taking into account that the number of subsets $S$ and $T$ with $|S|=m$ is given by $\binom{n}{m}\binom{n}{n+1-m}$,
we obtain
\begin{align*}
\P\left(\text{$\Gamma_\delta$ has no perfect matching}\right)
&\leq \sum_{m\leq \frac{n+1}{2}}\binom{n}{m}\binom{n}{n+1-m}q^{E^*(m,t)}\\
&+ \sum_{m\geq \frac{n+1}{2}}\binom{n}{m}\binom{n}{n+1-m}q^{E^*(n+1-m,t)}\\
&\leq 2 \sum_{m\leq \frac{n+1}{2}}\binom{n}{m}\binom{n}{n+1-m}q^{E^*(m,t)}.
\end{align*}
Plugging in the explicit value of $E^*(m,t)$, we get
\begin{align}\label{eq_the_two_sums}
\P\left(\text{$\Gamma_\delta$ has no perfect matching}\right)
&\leq 2\sum_{m=1}^t \binom{n}{m}\binom{n}{n+1-m}q^{m\cdot \frac{2t-m+1}{2}}\\
&+ 2\sum _{t+1\leq m\leq \frac{n+1}{2}} \binom{n}{m}\binom{n}{n+1-m}q^{ t\cdot \frac{t+1}{2}}.\notag
\end{align}
Consider the first sum in~\eqref{eq_the_two_sums}. We have $m\cdot \frac{2t-m+1}{2}\geq m\cdot \frac{t}{2}$. Using a rough bound $\binom{n}{m}\leq n^m$ and
$\binom{n}{n+1-m}=\binom{n}{m-1}\leq n^{m-1}$, and assuming $n^2\cdot q^\frac{t}{2}<1$, we get
\begin{align}\notag
\sum_{m=1}^t \binom{n}{m}\binom{n}{n+1-m}q^{ m\cdot \frac{2t-m+1}{2}}
&\leq \sum_{m=1}^t n^{2m-1}q^{m\cdot \frac{t}{2}}\\
&\leq n\cdot q^\frac{t}{2}\sum_{l=0}^\infty \left(n^2\cdot q^\frac{t}{2}\right)^l
= \frac{nq^\frac{t}{2}}{1-n^2\cdot q^\frac{t}{2}}.\label{eq_the_1st_sum_simplified}
\end{align}
We now ensure that $n^2\cdot q^\frac{t}{2}<1$, and thus the infinite geometric series converge. Indeed, 
\begin{align*}
n^2 q^\frac{t}{2}
&= n^2 (1-\delta^2)^\frac{t}{2}\leq n^2 \left(\exp(-\delta^2)\right)^{\frac{t}{2}}
= \exp\left(2\ln n -\delta^2\cdot \frac{t}{2} \right),
\end{align*}
where we used the fact that $1+z\leq \exp(z)$ for any $z\in \mathbb{R}$. 
Since $t \geq \frac{\sqrt{{(2\ln 4)\cdot n}}}{\delta}$ and $\delta=2\sqrt{2}\cdot \gamma\cdot  \frac{\ln n}{\sqrt{n}}$, we get 
$\delta^2\cdot \frac{t}{2} \geq 2 \sqrt{\ln 4}\cdot \gamma\cdot \ln n$ and thus 
\begin{equation}\label{eq_n^2q^t/2_bound}
n^2 q^{\frac{t}{2}}
\leq \exp\left(2\ln n -\delta^2\cdot \frac{t}{2} \right)
\leq n^{2(1-\gamma\cdot \sqrt{\ln 4})}.
\end{equation}
For $n=241$ and $\gamma=1$, we get $n^{2(1-\gamma\cdot \sqrt{\ln 4})}\approx 0.143< 1/2$. Therefore, for any $n\geq 241$ (and $\gamma\ge 1$), we obtain $n^2 q^\frac{t}{2}\leq 1/2$, and thus the infinite geometric series converges.
Moreover, since $nq^\frac{t}{2}= \frac{1}{n}\cdot n^2 q^\frac{t}{2}$, we get
\begin{equation}\label{eq_nq^t/2_bound}
nq^\frac{t}{2}\leq  n^{1-2\gamma}
\end{equation}
from~\eqref{eq_n^2q^t/2_bound}, where we replaced $\sqrt{\ln 4}$ with~$1$. 
Thus the first sum in~\eqref{eq_the_two_sums} can be bounded as follows:
$$
\sum_{m=1}^t \binom{n}{m}\binom{n}{n+1-m}q^{m\cdot \frac{2t-m+1}{2}}
\leq 2 n^{1-2\gamma},
$$ 
where we combined the bound~\eqref{eq_the_1st_sum_simplified} with~\eqref{eq_nq^t/2_bound} and $n^2 q^\frac{t}{2}\leq 1/2$.

Consider now the second sum in~\eqref{eq_the_two_sums}. The product of binomial coefficients is maximized at $m=\lfloor n/2\rfloor$:
$$
\binom{n}{m}\binom{n}{n+1-m}\leq \binom{n}{\lfloor n/2\rfloor}\cdot \binom{n}{\lceil n/2\rceil}
=\binom{n}{\lfloor n/2\rfloor}^2.
$$
Applying the Stirling formula, the second sum admits the following bound
\begin{align*}
\sum _{t+1\leq m\leq \frac{n+1}{2}} \binom{n}{m}\binom{n}{n+1-m}q^{t\cdot \frac{t+1}{2}}
&\leq \sum _{t+1\leq m\leq \frac{n+1}{2}} \binom{n}{\lfloor n/2\rfloor}^2 q^{t\cdot \frac{t+1}{2}}\\
&\leq \frac{n}{2}\cdot \frac{2\cdot 2^{2n}}{\pi n}q^{t\cdot \frac{t+1}{2}}.
\end{align*}
Since $t \geq \frac{\sqrt{{(2\ln 4)\cdot n}}}{\delta}$, we have that 
$$
2^{2n}\cdot q^{\frac{t^2}{2}}
= 2^{2n}\cdot(1-\delta^2)^{\frac{t^2}{2}}
\leq 2^{2n}\cdot\exp\left(-\delta^2\cdot\frac{t^2}{2}\right)
\leq 2^{2n}\cdot e^{-n\cdot \ln 4}= 1.
$$
Thus
$$
\frac{n}{2}\cdot \frac{2\cdot 2^{2n}}{\pi n}q^{t\cdot \frac{t+1}{2}}
\leq \frac{1}{\pi}\cdot q^\frac{t}{2}.
$$
By~\eqref{eq_nq^t/2_bound}, $q^\frac{t}{2}\leq  n^{-2\gamma}$, and thus
$$
\sum _{t+1\leq m\leq \frac{n+1}{2}} \binom{n}{m}\binom{n}{n+1-m}q^{t\cdot \frac{t+1}{2}}
\leq \frac{1}{\pi }\cdot n^{-2\gamma}.
$$
Putting the bounds for the first and the second sum together, we obtain that
$$
\P\left(\text{$\Gamma_\delta$ has no perfect matching}\right)
\leq \left(4+\frac{2}{\pi\cdot n}\right) n^{1-2\gamma}
\leq 5\cdot  n^{1-2\gamma},
$$
completing the proof.
\end{proof}

\begin{proof}[Proof of Theorem~\ref{th_idiosyncratic_appendix}]
By Proposition~\ref{prop:approximation}, there is a deterministic matching of $X_n$ and $Y_n$ that is $9\max\{\varepsilon,\sqrt{\ln(4(n+1))/(2n)}\}$-close to $\pi$. Proposition~\ref{prop:high_idiosyncractic} modifies this matching, with probability at least $1-5n^{1-2\gamma}$, to one in which every matched pair attains idiosyncratic quantiles at least $1-2\sqrt{2}\gamma\ln n/\sqrt n$, while changing the matching by at most $1/(\gamma\ln n)$. The triangle inequality for the Kolmogorov-Smirnov distance gives the stated bound on $\varepsilon'$.
\end{proof}

\clearpage
\thispagestyle{empty}
\vfill

\begin{center}
   \textbf{ {\Large Supplementary Appendix }}
\end{center}
\vfill

\clearpage
\setcounter{page}{1}

\section{Stability in markets with a continuous potential}\label{app_lemma_pointwise_stability}

\begin{lemma}\label{lm_deterministic_equivalence}
For a market with a continuous potential~$U$, a matching~$\pi$ is~$\varepsilon$-stable if and only if, for all~$(x_1,y_1),(x_2,y_2)\in\supp(\pi)$, the $\varepsilon$-stability condition~\eqref{eq_epsilon_stable_common} holds.
\end{lemma}
\begin{proof}
One direction is straightforward: the pointwise property implies the almost-everywhere property. We prove the opposite direction.
Let~$\pi$ be an~$\varepsilon$-stable matching with continuous utilities~$U$. Our goal is to show that,  for any~$(x_1,y_1),(x_2,y_2)\in \supp(\pi)$, the inequality~\eqref{eq_epsilon_stable_common} holds. Towards a contradiction, suppose that there are~$(x_1^*,y_1^*),(x_2^*,y_2^*)\in \supp(\pi)$ such that this inequality is violated. By continuity of~$U$, we can find open neighborhoods~$V_1\subseteq X\times Y$ of~$(x_1^*,y_1^*)$ and~$V_2\subseteq X\times Y$ of~$(x_2^*,y_2^*)$ such that the inequality is violated for all~$(x_1,y_1),(x_2,y_2)\in V_1\times V_2$. Since~$(x_1^*,y_1^*),(x_2^*,y_2^*)\in \supp(\pi)$, we have~$\pi(V_1)>0$ and~$\pi(V_2)>0$. Thus~$V_1\times V_2$ has a positive~$\pi\times \pi$-measure, which contradicts the~$\varepsilon$-stability of~$\pi$. 
We conclude that, for a continuous potential,~$\varepsilon$-stability of~$\pi$ implies that~\eqref{eq_epsilon_stable_common} holds pointwise on the support of~$\pi$.
\end{proof}

\section{Exact algorithm for stable matching on \texorpdfstring{$\R$}{R}}\label{sec_algorithm}
When~$X$ and~$Y$ are finite, a simple greedy algorithm will result in a stable matching: let~$(x,y)$ be the pair with the highest utility from matching. Match~$(x,y)$, remove the matched agents, and repeat. In this section, we construct an analogous procedure for the case where~$X$ and~$Y$ are possibly continuous. 
\medskip

We denote by $\mathcal M_+(Z)$ the set of finite nonnegative measures on $Z$.
Let $X=Y=\mathbb R$ and consider two populations represented by measures
$\mu\in\mathcal M_+(X)$ and $\nu\in\mathcal M_+(Y)$ with the same total mass.
Agents of types $x\in X$ and $y\in Y$ have utility $U(x,y)=-|x-y|$.
We refer to the pair $(\mu,\nu)$ as a \emph{market}.
In this section we do not normalize the total mass to $1$, since our construction
will repeatedly remove and match submarkets. The notions of a matching and of
stability extend in a straightforward way to any $\mu,\nu$ with $\mu(X)=\nu(Y)$.

We assume that $\mu$ and $\nu$ are non-atomic and that the signed measure
$\rho=\mu-\nu$ changes sign only finitely many times. We say that $\rho$
changes sign $K$ times if there exists a collection of intervals
$I_0,\ldots,I_K$ such that $\rho$ puts no mass outside of their union and, on each $I_k$, the restriction of $\rho$ is
either a nonnegative measure or a nonpositive measure, and $K$ is the minimal non-negative integer with
this property. We refer to the intervals $I_k$ as the \emph{imbalance regions}.

\begin{proposition}\label{th_algorithm_appendix}
There is a unique stable matching $\pi$, and it can be constructed via an
algorithm. Moreover, for each $x\in X$ there are at most two distinct types
$y,y'\in Y$ such that both $(x,y)$ and $(x,y')$ belong to $\supp(\pi)$.
If $\mu$ and $\nu$ have piecewise-constant densities with at most $m$ intervals
of constancy, the algorithm runs in time on the order of $m^2$.
\end{proposition}

A stable matching for $(\mu,\nu)$ is constructed by sequentially simplifying the
market. The idea is to identify submarkets $(\mu',\nu')$ that must be matched
together in any stable matching in exactly the same way as they would be
matched if the remainder of the population $(\mu-\mu',\nu-\nu')$ did not exist.
As we will see, one can always choose such a submarket so that it is matched in
a simple monotone manner. Iterating this procedure yields the stable matching
for $(\mu,\nu)$. The number of steps is finite because the number of imbalance
regions is finite and decreases monotonically under the sequential simplification of the market.
For piecewise-constant densities with $m$ intervals of constancy, the number of
sign changes is at most $m$, and each simplification step can be implemented
using $O(m)$ operations, giving overall quadratic runtime.

\medskip

We next revisit the link between stability and the no-crossing property
discussed in Section~\ref{sec:applications}. There we showed that a stable
matching satisfying no-crossing exists. We now show that, in fact, {every}
stable matching must satisfy this property.

For $z_1,z_2\in\mathbb R$, let $O(z_1,z_2)$ be the smallest circle in
$\mathbb R^2$ containing the points $(z_1,0)$ and $(z_2,0)$ (equivalently, the
circle with diameter endpoints $(z_1,0)$ and $(z_2,0)$). Consider a matching
$\pi$ on $\mathbb R$. We say that $\pi$ satisfies {no-crossing} if, for any
two pairs $(x,y)$ and $(x',y')$ in $\supp(\pi)$, the circles $O(x,y)$ and
$O(x',y')$ do not intersect unless $x=x'$ or $y=y'$.

\begin{figure}[ht]
    \centering
    \begin{subfigure}{0.31\textwidth}
    \centering
    \begin{tikzpicture}[scale=0.4]
        \draw[black,very thick] (0,0) -- (10.5,0);
        \draw[very thick, dashed, ->] (1,0) arc (180:0:2);
        \draw[very thick, dashed, ->] (3.5,0) arc (180:0:3);
        \node at (1,-0.5) {\footnotesize~$x$};
        \node at (5,-0.5) {\footnotesize~$y$};
        \node at (3.5,-0.5) {\footnotesize~$x'$};
        \node at (9.5,-0.5) {\footnotesize~$y'$};
    \end{tikzpicture}
        \caption{blocked by $(x',y)$}
            \label{fig:blocking1}
    \end{subfigure}
  \begin{subfigure}{0.31\textwidth}
    \centering
    \begin{tikzpicture}[scale=0.4]
        \draw[black,very thick] (0,0) -- (10.5,0);
        \draw[very thick, dashed, ->] (1,0) arc (180:0:2);
        \draw[very thick, dashed, ->] (9.5,0) arc (0:180:3);
        \node at (1,-0.5) {\footnotesize~$x$};
        \node at (5,-0.5) {\footnotesize~$y$};
        \node at (3.5,-0.5) {\footnotesize~$y'$};
        \node at (9.5,-0.5) {\footnotesize~$x'$};
    \end{tikzpicture}
        \caption{blocked by $(x,y')$}
            \label{fig:blocking2}
    \end{subfigure}
\\
    \begin{subfigure}{0.31\textwidth}
    \centering
\begin{tikzpicture}[scale=0.4]
        \draw[black,very thick] (0,0) -- (10.5,0);
        \draw[very thick, dashed, ->] (1,0) arc (180:0:1.5);
        \draw[very thick, dashed, ->] (5,0) arc (180:0:2.5);
        \node at (1,-0.5) {\footnotesize~$x$};
        \node at (4,-0.5) {\footnotesize~$y$};
        \node at (5,-0.5) {\footnotesize~$x'$};
        \node at (10,-0.5) {\footnotesize~$y'$};
    \end{tikzpicture}
        \caption{blocked by $(x',y)$}
            \label{fig:blocking3}
    \end{subfigure}
    \begin{subfigure}{0.31\textwidth}
    \centering
    \begin{tikzpicture}[scale=0.4]
        \draw[black,very thick] (0,0) -- (8,0);
        \draw[very thick, dashed, ->] (2,0) arc (180:0:2);
        \draw[very thick, dashed, ->] (7,0) arc (0:180:3);
        \node at (1,-0.5) {\footnotesize~$y'$};
        \node at (2,-0.5) {\footnotesize~$x$};
        \node at (6,-0.5) {\footnotesize~$y$};
        \node at (7,-0.5) {\footnotesize~$x'$};
    \end{tikzpicture}
        \caption{blocked by $(x,y')$, $(x',y)$}
            \label{fig:blocking4}
    \end{subfigure}
    \caption{Forbidden patterns in stable matchings.}
    \label{fig:forbidden}
\end{figure}

\begin{lemma}\label{lm_stable_is_no_crossing}
    Any stable matching satisfies no-crossing.
\end{lemma}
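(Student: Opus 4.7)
The approach is to translate the geometric no-crossing condition into a combinatorial condition on the relative order of the four endpoints on the line, and then exhibit a blocking pair whenever two circles cross.

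First, I would observe that $O(x,y)$ is the circle whose diameter is the segment from $(\min(x,y),0)$ to $(\max(x,y),0)$, so two such circles intersect as curves in the plane iff the corresponding closed intervals on $\mathbb{R}$ overlap but neither strictly contains the other. In the generic case where the four coordinates $x,y,x',y'$ are distinct, this means that after relabeling them in increasing order as $a<b<c<d$, the two matched pairs must be $\{a,c\}$ and $\{b,d\}$ (``interlacing'' on the line).

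Second, I would argue by contradiction: suppose $\pi$ is stable but contains two such interlacing pairs in its support with $x\neq x'$ and $y\neq y'$. I would split into two cases according to which side of the market the middle points $b$ and $c$ belong to. If $b$ and $c$ lie on opposite sides, then $\{b,c\}$ is directly a blocking pair, since $|b-c|=c-b$ is strictly smaller than both $|b-d|=d-b$ (the distance from $b$ to its current partner $d$) and $|c-a|=c-a$ (the distance from $c$ to its current partner $a$). If instead $b$ and $c$ lie on the same side, then $a$ and $d$ must lie on the opposite side, and I would consider the candidate blocking pairs $\{a,b\}$ and $\{c,d\}$. For each of these, the outer endpoint's preference for the swap is automatic from $b<c$ (giving $b-a<c-a$ and $d-c<d-b$); the remaining inequalities are $2b<a+d$ for $\{a,b\}$ and $a+d<2c$ for $\{c,d\}$, and since $b<c$ these cannot both fail, so at least one of the two candidate pairs blocks.

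Finally, I would dispatch the non-generic cases where some of $x,y,x',y'$ coincide. The assumption $x\neq x'$ and $y\neq y'$ in the definition already excludes the same-side coincidences, so only cross-side coincidences such as $x=y'$ remain, and the argument reduces to a three-point variant of the two cases above (the fully degenerate case $x=y$ produces a circle of radius zero that cannot meet another circle except at a shared endpoint, which is already allowed by the ``unless'' clause). The main obstacle is not a single deep step but keeping the bookkeeping clean: turning circle-intersection into interval-interlacing, and extracting a blocking pair in each subcase without missing a configuration.
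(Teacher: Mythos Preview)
Your proposal is correct and follows essentially the same approach as the paper. The paper also reduces (``up to symmetry'') to exactly your two cases --- middle points on opposite sides versus same side --- exhibits the pair $(b,c)$ as blocking in the first case, and in the second derives a contradiction from assuming neither of the two cross-pairs blocks, which is just the contrapositive of your ``$2b<a+d$ or $a+d<2c$'' dichotomy. Your version is a bit more systematic (explicit translation from circle intersection to interval interlacing, abstract labeling $a<b<c<d$, and separate treatment of degenerate coincidences), but the underlying argument is the same.
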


\begin{proof}
Let $\pi$ be a stable matching. Towards a contradiction, suppose that $(x,y)$ and
$(x',y')$ are in the support of $\pi$, where $x\neq x'$, $y\neq y'$, and the
circles $O(x,y)$ and $O(x',y')$ intersect. There are, up to symmetry, two cases
to consider, depicted in Figures~\ref{fig:blocking1} and~\ref{fig:blocking2}.

First, $x<x'<y<y'$ in which case $\vert x'-y\vert<\vert x-y\vert$ and
$\vert x'-y\vert<\vert x'-y'\vert$, and so $\pi$ is blocked by $(x',y)$.
Second, we could have $x<y'<y<x'$. Now $\vert x-y'\vert < \vert x-y\vert$ and
$\vert x'-y\vert < \vert x'-y'\vert$. If $\pi$ were stable then neither
$(x',y)$ nor $(x,y')$ could block, so we would have
$\vert x-y \vert \leq \vert x'-y \vert$ and $\vert x'-y' \vert \leq \vert x-y' \vert$.
But then
\[
\vert x-y \vert \leq \vert x'-y \vert < \vert x'-y' \vert \leq \vert x-y' \vert
< \vert x-y \vert,
\]
a contradiction. Thus $\pi$ is not stable.
\end{proof}

We now describe two benchmark classes of markets for which the stable matching
can be obtained in a single step; in the general case, these will correspond to
the submarkets peeled off by the algorithm.

If~$\mu=\nu$, we call~$(\mu,\nu)$ a \emph{diagonal market}. In such markets, everyone can be matched with their ideal partner~$x=y$.
The opposite extreme is when nobody can be matched with their hypothetical ideal partner~$x=y$. Populations~$(\mu,\nu)$ form a \emph{disjoint market} if~$\mu$ and~$\nu$ are mutually singular, i.e., there is a measurable subset~$A\subset \R$ such that~$\mu(\R\setminus A)=0$ and~$\nu(A)=0$. An important special case corresponds to~$A=(-\infty, t]$ or~$A=[t,\infty)$ for some~$t\in \R$. In this case, the supports of~$\mu$ and~$\nu$ are ordered on the line, i.e., either~$\supp (\mu) \leq \supp (\nu)$ or~$\supp (\nu) \leq \supp (\mu)$. We call such~$(\mu,\nu)$ an \emph{anti-diagonal market}. Equivalently,~$(\mu,\nu)$ is an anti-diagonal market if~$\mu$ and~$\nu$ are mutually singular and 
$\rho$ changes sign only once.

\begin{lemma}\label{lm_benchmark_matching}
If~$(\mu,\nu)$ is  a diagonal or anti-diagonal market, then the stable matching~$\pi$ is unique and is assortative or anti-assortative, respectively. 
\end{lemma}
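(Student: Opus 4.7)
The plan is to handle the diagonal and anti-diagonal cases separately.

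For the diagonal case $\mu=\nu$, the identity matching $\pi_{\mathrm{id}}=(\mathrm{id},\mathrm{id})_*\mu$ is trivially stable because $u(x,x)=0$ is the pointwise maximum of $u$, so every matched agent attains the best possible partner. Uniqueness follows from the ``exhaust common mass'' principle stated at the start of Section~\ref{sec: matching on R}: any stable matching must pair the full shared mass $\mu\wedge\nu$ on the diagonal, because otherwise a point $z$ whose own copy on the other side is matched to some partner away from $z$ would produce a blocking pair near $(z,z)$. Since $\mu=\nu$ makes all the mass common, $\pi$ is supported on the diagonal and therefore equals $\pi_{\mathrm{id}}$.

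For the anti-diagonal case, assume without loss of generality that $\supp\mu\leq\supp\nu$. The anti-assortative matching characterized by $F_\mu(x)+F_\nu(y)=\mu(X)$ is stable: for any two pairs $(x_1,y_1),(x_2,y_2)$ in its support with $x_1<x_2$, anti-assortativity forces $y_1>y_2$ and the anti-diagonal assumption gives $x_i\leq y_j$, yielding the chain $x_1<x_2\leq y_2<y_1$. Under this ordering neither cross pair is a blocking pair: $(x_1,y_2)$ fails the $y_2$-side strict preference since $x_2$ is closer to $y_2$ than $x_1$ is, and $(x_2,y_1)$ fails the $x_2$-side since $y_2$ is closer to $x_2$ than $y_1$ is. Conversely, if a stable $\pi$ were not anti-assortative, there would exist $(x_1,y_1),(x_2,y_2)\in\supp\pi$ with $x_1<x_2$ and $y_1<y_2$; combined with $x_2\leq y_1$ from anti-diagonality, this gives the chain $x_1<x_2\leq y_1<y_2$, making $(x_2,y_1)$ a strict blocking pair---$x_2$ strictly prefers $y_1$ over $y_2$ and $y_1$ strictly prefers $x_2$ over $x_1$---contradicting stability.

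The main obstacle is executing the ``$(z,z)$-blocks'' argument rigorously in the diagonal case, where $\mu$ is atomless and no single point carries positive mass. I would handle this by first noting that off-diagonal mass in $\pi$ must concentrate at displacement at least some $c>0$ (by continuity of measure applied to $\{(x,y):|x-y|>c\}$); then fixing a pair $(x_1,y_1)\in\supp\pi$ with $|x_1-y_1|>c$ and using mass conservation under $\mu=\nu$ to produce pairs $(x_2,y_2)\in\supp\pi$ with $y_2$ in a small neighborhood of $x_1$ and $x_2$ outside it; and finally verifying that the candidate $(x_1,y_2)$ gives $u(x_1,y_2)\approx 0$, strictly dominating both $u(x_1,y_1)<-c$ and $u(x_2,y_2)$ on a set of positive $\pi\times\pi$-measure, contradicting the almost-everywhere form of stability.
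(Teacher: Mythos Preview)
Your anti-diagonal argument is correct and more elementary than the paper's route, which invokes Lemma~\ref{lm_stable_is_no_crossing} together with McCann's uniqueness result for no-crossing matchings when $\rho=\mu-\nu$ changes sign at most twice. By checking blocking pairs directly, you avoid that external reference; the trade-off is that you must separately verify that an anti-monotone support pins down the matching uniquely for atomless marginals, which you leave implicit.

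The diagonal case, however, has a genuine gap. Your mass-conservation step aims to produce $(x_2,y_2)\in\supp\pi$ with $y_2$ in a small ball $N=B(x_1,r)$ and $x_2\notin N$. The balance $\pi(N\times N^{c})=\pi(N^{c}\times N)$ does yield such a pair, but it does \emph{not} force $|x_2-y_2|>|x_1-y_2|$: if $y_2$ sits near $\partial N$ and $x_2$ lies just outside, one can have $|x_2-y_2|<|x_1-y_2|$, so $y_2$ prefers $x_2$ to $x_1$ and $(x_1,y_2)$ fails to block. Shrinking $r$ does not help, because $(x_2,y_2)$ changes with $r$ and its off-diagonal displacement can vanish in the limit.

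The repair is to use a sharper consequence of $\mu=\nu$. The restriction $\pi|_{\{x=y\}}$ trivially has equal $X$- and $Y$-marginals, hence so does the off-diagonal part $\pi_{\mathrm{off}}=\pi|_{\{x\neq y\}}$: both marginals equal the same nonzero measure. It follows that the $X$- and $Y$-projections of $S=\supp\pi\setminus\{x=y\}$ are each of full measure for this common measure, so they intersect. Any point $t$ in the intersection gives $(t,y_1),(x_2,t)\in\supp\pi$ with $y_1\neq t$, $x_2\neq t$, and now the cross pair $(t,t)$ blocks outright since $u(t,t)=0>-|t-y_1|$ and $u(t,t)=0>-|x_2-t|$. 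This is exactly the argument the paper runs later in the proof of Lemma~\ref{lm_independent_diagonal_submarket}; for the present lemma the paper sidesteps it by citing McCann.
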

 In other words, for diagonal markets,~$\pi$ is supported on the solution~$F_\mu(x)=F_\nu(y)$, where~$F_\mu$ and~$F_\nu$ are the cumulative distribution functions of~$\mu$ and~$\nu$. Since~$\mu=\nu$,~$\pi$ is supported on the~$45$-degree line. In the anti-diagonal case,~$\pi$ is supported on the solution to~$F_\mu(x)=M-F_\nu(y)$, where $M=\mu(X)=\nu(X)$.  

\begin{proof}
For diagonal and anti-diagonal markets,~$\rho=\mu-\nu$ changes sign at most once. As shown by \cite*{mccann1999exact}, a matching satisfying the no-crossing condition is unique whenever the sign changes at most two times.
Assortative and anti-assortative matchings satisfy the no-crossing condition and, thus, are unique matchings with this property.
By Lemma~\ref{lm_stable_is_no_crossing}, they are unique stable ones.
\end{proof}   
A market~$(\mu',\nu')$ is a \emph{submarket} of~$(\mu,\nu)$ if~$\mu'\leq \mu$ and~$\nu'\leq \nu$. Note that 
$\mu'(X)=\nu'(Y)$ by the assumption that~$(\mu',\nu')$ is a market. For a submarket~$(\mu',\nu')$, the \emph{residual submarket} is defined by~$(\mu'',\nu'')=(\mu-\mu',\nu-\nu')$.

\begin{lemma}\label{lm_independent_diagonal_submarket}
For any market~$(\mu,\nu)$, there exists a unique diagonal submarket~$(\mu',\nu')$ such that the residual submarket~$(\mu'',\nu'')=(\mu-\mu',\nu-\nu')$ is disjoint.

A matching~$\pi$ is a stable matching of~$(\mu,\nu)$ if and only if~$\pi=\pi'+\pi''$, where~$\pi'$ is the (unique) stable matching of~$(\mu',\nu')$ and~$\pi''$ is a stable matching of~$(\mu'',\nu'')$.
\end{lemma}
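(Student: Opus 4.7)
The plan is to split the lemma into two parts: the existence and uniqueness of the diagonal submarket, and the stable-matching decomposition. The natural candidate for the diagonal submarket is $\mu' = \nu' = \mu \wedge \nu$, the measure-theoretic infimum of $\mu$ and $\nu$. With this choice, the residuals $\mu'' = (\mu - \nu)_+$ and $\nu'' = (\nu - \mu)_+$ are concentrated on the disjoint Hahn sets of the signed measure $\rho=\mu-\nu$, so $(\mu'', \nu'')$ is disjoint. For uniqueness, any other diagonal submarket with common marginal $\tilde\sigma$ and disjoint residual must satisfy $\tilde\sigma \leq \mu$ and $\tilde\sigma \leq \nu$, hence $\tilde\sigma \leq \mu \wedge \nu$; conversely, mutual singularity of $\mu - \tilde\sigma$ and $\nu - \tilde\sigma$ forces $(\mu - \tilde\sigma) \wedge (\nu - \tilde\sigma) = 0$, i.e., $\tilde\sigma = \mu \wedge \nu$.

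For the ``if'' direction of the stable-matching decomposition, given a stable $\pi'$ of $(\mu', \nu')$ and a stable $\pi''$ of $(\mu'', \nu'')$, set $\pi = \pi' + \pi''$. By Lemma~\ref{lm_benchmark_matching}, $\pi'$ is the diagonal assortative matching, so every pair in $\supp(\pi')$ has the form $(x, x)$. To verify stability of $\pi$ on two pairs $(x_1, y_1), (x_2, y_2) \in \supp(\pi)$, I check the inequality $|x_1 - y_2| \geq \min(|x_1 - y_1|, |x_2 - y_2|)$. If both pairs come from $\pi''$, this is the stability of $\pi''$; in all other cases at least one of the right-hand distances vanishes, so the inequality is trivial.

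For the ``only if'' direction, given a stable $\pi$ for $(\mu, \nu)$, I split $\pi = \pi_\Delta + \pi_{\Delta^c}$ into its restrictions to the diagonal $\Delta = \{(x,x) : x \in \R\}$ and its complement. Once I establish that both marginals of $\pi_\Delta$ equal $\mu' = \mu \wedge \nu$, it will follow that $\pi_\Delta$ is the diagonal matching of $(\mu', \nu')$, that $\pi_{\Delta^c}$ has marginals $(\mu'', \nu'')$, and that $\pi_{\Delta^c}$ is stable for the residual market, since any pair that blocked $\pi_{\Delta^c}$ in $(\mu'', \nu'')$ would also lie in $\supp(\pi)$ and block $\pi$. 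The main obstacle is the common-mass-exhaustion claim, which I would prove by contradiction using non-atomicity. Writing $\tilde\sigma$ for the common marginal of $\pi_\Delta$, we have $\tilde\sigma \leq \mu \wedge \nu$; if the inequality is strict on a Borel set $B$ of positive $(\mu \wedge \nu - \tilde\sigma)$-mass, then both marginals of $\pi_{\Delta^c}$ charge $B$. Non-atomicity then allows me to localize to an arbitrarily small neighborhood of a common accumulation point $b \in B$ and extract $(x_1, y_1), (x_2, y_2) \in \supp(\pi)$ with $x_1, y_2$ both near $b$ while $|x_1 - y_1|$ and $|x_2 - y_2|$ remain bounded away from $0$. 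The hypothetical pair $(x_1, y_2)$ would then block $\pi$, contradicting stability; equivalently, the configuration violates the no-crossing property established in Lemma~\ref{lm_stable_is_no_crossing}.
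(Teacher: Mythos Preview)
Your proposal follows essentially the same route as the paper: the diagonal submarket comes from the Hahn decomposition of $\rho=\mu-\nu$ (equivalently, $\mu'=\nu'=\mu\wedge\nu$), the ``if'' direction is immediate because every pair in $\supp(\pi')$ has distance zero, and the ``only if'' direction proceeds by contradiction via the no-crossing property. The first two parts are fine and match the paper's argument.

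There is, however, a genuine gap in your ``only if'' argument. You take a \emph{common accumulation point} $b$ of the two off-diagonal marginals and then assert that you can pick $(x_1,y_1),(x_2,y_2)\in\supp(\pi)\setminus\Delta$ with $x_1,y_2$ arbitrarily close to $b$ while keeping $|x_1-y_1|$ and $|x_2-y_2|$ bounded away from zero. The last clause is not justified: being in the support of the marginal only says that there are off-diagonal pairs with a \emph{nearby} coordinate, not that those pairs stay uniformly away from the diagonal. Concretely, it is entirely possible that every off-diagonal pair $(x_1,y_1)\in\supp(\pi)$ with $x_1$ near $b$ also has $y_1$ near $b$, so $|x_1-y_1|\to 0$ as $x_1\to b$; in that regime the hypothetical pair $(x_1,y_2)$ need not block anything.

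The paper closes this gap by working with the \emph{projections} $X_S$ and $Y_S$ of $S=\supp(\pi)\setminus\Delta$ rather than with the supports of the marginals. Since $\mu_{\Delta^c}$ is concentrated on $X_S$ and $\nu_{\Delta^c}$ on $Y_S$, and since $\mu_{\Delta^c}\wedge\nu_{\Delta^c}\geq (\mu\wedge\nu)-\tilde\sigma>0$, the set $X_S\cap Y_S$ has positive $(\mu_{\Delta^c}\wedge\nu_{\Delta^c})$-measure and is therefore nonempty. Picking $t\in X_S\cap Y_S$ yields points $(t,y_1),(x_2,t)\in\supp(\pi)$ with $y_1\neq t$ and $x_2\neq t$ \emph{exactly}, so $|x_1-y_2|=0<\min\{|t-y_1|,|x_2-t|\}$ and no limiting argument is needed. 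Replacing your accumulation-point step with this projection step would make your proof complete and essentially identical to the paper's.
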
   
\begin{proof}
We first construct a diagonal submarket~$(\mu',\nu')$ such that the residual submarket~$(\mu'',\nu'')$ is disjoint. Let~$\rho=\mu-\nu$. We get that~$\rho$ can also be expressed as~$\mu''-\nu''$. 
By the Jordan decomposition theorem, there is a unique way to represent a signed measure~$\tau$ as~$\tau_+-\tau_-$, where~$\tau_\pm$ are positive measures that are mutually singular. We get that~$(\mu'',\nu'')$ must provide the Jordan decomposition of~$\rho$, i.e.,~$\mu''=\rho_+$ and~$\nu''=\rho_-$. Thus~$(\mu',\nu')$ is given by~$\mu'=\mu-\rho_+$ and~$\nu'=\nu-\rho_-$. It is a diagonal submarket since 
$$\mu'-\nu'=(\mu-\nu)-(\rho_+-\rho_-)=\rho-\rho=0.$$
Its uniqueness follows from the uniqueness of the Jordan decomposition.

Let~$\pi'$ be the stable matching of the diagonal submarket~$(\mu',\nu')$. By Lemma~\ref{lm_benchmark_matching}, matching~$\pi'$ is unique and is given by the assortative matching. If~$\pi''$ is a stable matching of~$(\mu'',\nu'')$, then~$\pi'+\pi''$ is a stable matching of~$(\mu,\nu)$. Indeed, by combining the two markets, we do not create any cross-market blocking pairs as any agent in~$\pi'$ is matched with their best partner.

It remains to show that any stable matching~$\pi$ of~$(\mu,\nu)$ can be represented as~$\pi'+\pi''$ with stable~$\pi'$ and~$\pi''$,
where~$\pi'$ is the stable matching of the diagonal submarket~$(\mu',\nu')$.  We adapt an argument from the optimal transport literature with metric costs \citep*[see][Proposition~2.9]{gangbo1996geometry}. 
Let~$\pi_\diag$ be the restriction of~$\pi$ to the diagonal~$\{(x,y)\colon x=y\}$, i.e.,~$\pi_\diag(A)=\pi(A\cap\{(x,y)\colon x=y\})$ for any measurable~$A$. Our goal is to show that~$\pi'=\pi_\diag$. Towards contradiction, assume this equality does not hold. Hence, the marginals~$\mu_\diag$ and~$\nu_\diag$ of~$\pi_\diag$ satisfy~$\mu_\diag\ne \mu'$ and~$\nu_\diag\ne \nu'$. By the uniqueness of the Jordan decomposition,~$\mu''=\mu-\mu_\diag$ and~$\nu''=\nu-\nu_\diag$ do not form a disjoint market, i.e., there is a set~$B\subset \R$ with~$\mu''(B)>0$ and~$\nu''(B)>0$. 
Consider a set~$S=\supp (\pi)\setminus \{(x,y)\colon x=y\}$. Let~$X_S$ and~$Y_S$ be the projections of the set~$S$. These are sets of full measure with respect to~$\mu''$ and~$\nu''$, and thus, the intersection~$X_S\cap Y_S$ is non-empty.  Pick~$t\in X_S\cap Y_S$. There are two couples~$(x_1,y_1),(x_2,y_2)\in\supp(\pi)$ with~$x_1=t$ and~$y_2=t$ and~$x_1\ne y_1$ and~$x_2\ne y_2$ which violate the no-crossing condition. By Lemma~\ref{lm_stable_is_no_crossing},~$\pi$ cannot be stable. This contradiction implies that~$\pi_\diag=\pi'$. We conclude that any stable~$\pi$ can be represented as~$\pi'+\pi''$, where~$\pi'$ is a stable matching of the diagonal submarket and~$\pi''$ is a stable matching of a disjoint submarket.
\end{proof}    

Lemma~\ref{lm_independent_diagonal_submarket} reduces the problem of constructing and showing the uniqueness of a stable matching for general~$(\mu,\nu)$ to these questions for disjoint markets.

A submarket~$(\mu',\nu')$ is \emph{independent} if 
members of~$(\mu',\nu')$ top-rank each other within~$(\mu,\nu)$, i.e., 
$|x-y'|>|x-y|~$ for~$\mu'$-almost all~$x$,~$\nu'$-almost all~$y$, and~$(\nu-\nu')$-almost all~$y'$ and
$|x'-y|>|x-y|$ for~$\mu'$-almost all~$x$,~$\nu'$-almost all~$y$, and~$(\mu-\mu')$-almost all~$x'$. 
Independent submarkets are important because they can be matched myopically, i.e., without worrying about the residual populations.

\begin{lemma}\label{lm_independent_submarkets}
    Let~$(\mu',\nu')$ be an independent submarket of~$(\mu,\nu)$ and let~$(\mu'',\nu'')=(\mu-\mu',\nu-\nu')$ be the residual submarket. A matching~$\pi$ is a stable matching of~$(\mu,\nu)$ if and only if~$\pi=\pi'+\pi''$, where~$\pi'$ is a stable matching of~$(\mu',\nu')$ and~$\pi''$ is a stable matching of~$(\mu'',\nu'')$.
\end{lemma}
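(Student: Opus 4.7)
My plan is to prove each direction using the pointwise characterization of stability for the aligned, distance-based preferences: a matching~$\pi$ is stable if and only if, for~$\pi\times\pi$-almost all pairs~$(x_1,y_1),(x_2,y_2)$,
$$|x_1-y_2|\geq \min\{|x_1-y_1|,\,|x_2-y_2|\}.$$
The independence hypothesis will be used to rule out blocking pairs crossing the submarket boundary.

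For the ``if'' direction, I take stable~$\pi'$ on~$(\mu',\nu')$ and stable~$\pi''$ on~$(\mu'',\nu'')$ and set~$\pi=\pi'+\pi''$. The marginals add correctly, so~$\pi\in\Pi(\mu,\nu)$. To verify stability, I split the $\pi\times\pi$-a.s.\ check into four cases based on the submarket of each pair. Within-submarket pairs satisfy the no-block inequality by inherited stability of~$\pi'$ or~$\pi''$. For a cross pair with~$(x_1,y_1)\in\supp(\pi')$ and~$(x_2,y_2)\in\supp(\pi'')$, the independence hypothesis applied with~$x_1\in\supp(\mu')$,~$y_1\in\supp(\nu')$,~$y_2\in\supp(\nu'')$ gives~$|x_1-y_2|>|x_1-y_1|$, so the no-block inequality holds. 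The opposite cross case~$(x_1,y_1)\in\supp(\pi''),\,(x_2,y_2)\in\supp(\pi')$ is handled analogously using the~$\mu$-side of the independence condition, which yields~$|x_1-y_2|>|x_2-y_2|$.

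For the ``only if'' direction, suppose~$\pi$ is stable. The strict independence inequalities force~$\mu'$ and~$\mu''$ to be essentially mutually singular---a point lying in both supports would yield~$|x-y'|>|x-y|>|x-y'|$---and likewise for~$\nu'$ and~$\nu''$. Choose disjoint Borel sets~$A,B$ carrying~$\mu',\mu''$ and~$C,D$ carrying~$\nu',\nu''$, and set~$\pi'=\pi|_{A\times C}$ and~$\pi''=\pi|_{B\times D}$. The nontrivial step is to rule out ``cross'' mass: I claim~$\pi(A\times D)=\pi(B\times C)=0$. Once this holds, a short mass-balance computation shows that the marginals of~$\pi'$ and~$\pi''$ are exactly~$(\mu',\nu')$ and~$(\mu'',\nu'')$, and the inherited stability of each piece is immediate, since any blocking pair for~$\pi'$ or~$\pi''$ would automatically be a blocking pair for~$\pi$.

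To establish the claim, note that mass balance already gives~$\pi(A\times D)=\pi(B\times C)$, so if these quantities were positive, the product measure~$\pi\times\pi$ would assign positive measure to~$(A\times D)\times(B\times C)$. Combining this with the independence hypothesis---interpreted as an almost-sure statement on~$\mu'\otimes\nu'\otimes\nu''$ and its symmetric counterpart---I can locate~$(x_1,y_1)\in A\times D$ and~$(x_2,y_2)\in B\times C$ in~$\supp(\pi\times\pi)$ that simultaneously satisfy~$|x_1-y_1|>|x_1-y_2|$ and~$|x_2-y_2|>|x_1-y_2|$, violating the stability characterization. The main obstacle is precisely this measure-theoretic bookkeeping: the independence condition mixes almost-all statements with respect to three distinct measures, and these have to be chained carefully with the~$\pi\times\pi$-almost-sure stability condition to extract a concrete blocking pair.
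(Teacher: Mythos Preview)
Your proposal is correct and follows essentially the same route as the paper's proof: both directions hinge on mutual singularity of~$\mu',\mu''$ (and~$\nu',\nu''$) from the strict independence inequalities, the mass-balance identity~$\pi(A\times D)=\pi(B\times C)$, and the observation that any cross-mass would produce a blocking pair~$(x_1,y_2)$ with~$x_1$ on the~$\mu'$-side and~$y_2$ on the~$\nu'$-side. The only stylistic difference is that the paper leans on continuity of~$u$ (via Lemma~\ref{lm_deterministic_equivalence}) to pick points directly from~$\supp(\pi)$, whereas you keep the argument at the level of~$\pi\times\pi$-almost-sure statements; this makes your bookkeeping slightly heavier but is otherwise the same argument.
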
   
\begin{proof}
Let~$\pi'$ be a stable matching of an independent submarket~$(\mu',\nu')$ and~$\pi''$ be a stable matching of the residual submarket~$(\mu'',\nu'')$. Then~$\pi=\pi'+\pi''$ is a stable matching of~$(\mu,\nu)$ since combining the two markets cannot create cross-market blocking pairs by the independence of~$(\mu',\nu')$. 

Now, let~$\pi$ be a stable matching of~$(\mu,\nu)$. We show that~$\pi$ can be represented as~$\pi'+\pi''$, where~$\pi'$ is a stable matching of~$(\mu',\nu')$ and~$\pi''$ is a stable matching of~$(\mu'',\nu'')$. Note that the stability of a submarket follows from the stability of a market, and thus we only need to check that~$\pi=\pi'+\pi''$, where~$\pi'$ and~$\pi''$ are matchings.

Strict inequalities in the definition of independence imply that~$\mu'$ and~$\mu''$ cannot have any mass in common. In other words,~$\mu',\mu''$ and~$\nu',\nu''$ are mutually singular. By Hahn's theorem, there are disjoint sets~$X'$ and~$X''$ such that~$\mu'$ and~$\mu''$ are given by restricting~$\mu$ on~$X'$ and~$X''$, respectively. The disjoint sets~$Y'$ and~$Y''$ are constructed analogously. 

We now show that~$\pi$ cannot place positive mass on~$X'\times Y''$ and~$X''\times Y'$.
First, observe that~$\pi(X'\times Y'')=\pi(X''\times Y')$. Indeed,
$$\pi(X'\times Y'')-\pi(X''\times Y')=\pi(X'\times (Y'\cup Y''))-\pi((X'\cup X'')\times Y')=\mu(X')-\nu(Y')=0.$$
Second, we show that the common value is zero.
Towards contradiction, assume that~$\pi(X'\times Y'')=\pi(X''\times Y')>0$. Thus~$\supp(\pi)$ contains~$(x_1,y_2)\in X'\times Y''$ and 
$(x_2,y_1)\in X''\times Y'$. By independence of~$(\mu',\nu')$, the couple~$(x_1,y_1)$ is a blocking pair which contradicts stability of~$\pi$. We conclude that~$\pi$ has no mass outside of 
$X'\times Y'$ and~$X''\times Y''$. Consequently,~$\pi=\pi'+\pi''$, where~$\pi'$ is the restriction of~$\pi$ to~$X'\times Y'$ and~$\pi''$ is the restriction to~$X''\times Y''$. By the construction,~$\pi'$ is a matching of~$(\mu',\nu')$ and~$\pi''$ is a matching of~$(\mu'',\nu'')$, which are both stable by the stability of~$\pi$.
\end{proof}

We say that a matching $\pi$ is Monge if $\pi$ is supported on a graph of some function $f\colon \R\to\R$.
The following lemma is the key step in the proof of Proposition~\ref{th_algorithm_appendix}. Starting from disjoint~$(\mu,\nu)$ such that~$\rho=\mu-\nu$ changes sign~$K\geq 2$ times, this lemma allows us to reduce~$K$ sequentially until we reach an anti-diagonal market~($K=1$). 
\begin{lemma}\label{lm_disjoint_submarket}
Let~$(\mu,\nu)$
be a disjoint market such that~$\rho$ changes sign~$K\geq 2$ times. Then~$\mu$ and~$\nu$ can be represented as~$\mu=\mu_1'+\mu_2'+\mu''$ and~$\nu=\nu_1'+\nu_2'+\nu''$ so that 
\begin{enumerate}
\item\label{assert_diag}~$(\mu_1',\nu_1')$ and~$(\mu_2',\nu_2')$ are independent anti-diagonal submarkets of~$(\mu,\nu)$;
\item\label{assert_stable}  a matching~$\pi$ is a stable matching of~$(\mu,\nu)$ if and only if~$\pi=\pi_1'+\pi_2'+\pi''$, where~$\pi_i'$ is the (unique) stable matching of~$(\mu_i',\nu_i')$,~$i=1,2$, and~$\pi''$ is a stable matching of~$(\mu'',\nu'')$;
\item\label{assert_monge}   such a stable matching~$\pi$ is Monge if and only if~$\pi''$ is Monge; 
\item\label{assert_residual}   the residual submarket~$(\mu'',\nu'')$ is a disjoint market with~$\rho''=\mu''-\nu''$ changing sign at most~$K-1$ times;
\item \label{assert_algo} if~$\mu$ and~$\nu$ have piecewise constant density with~$m$ intervals of constancy, then~$(\mu_1',\nu_1')$ and~$(\mu_2',\nu_2')$ can be constructed in time of the order of~$m$.

\end{enumerate}
\end{lemma}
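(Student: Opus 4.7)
The plan is to exploit the alternating sign pattern of $\rho = \mu - \nu$, together with the no-crossing property of stable matchings, to extract two anti-diagonal submarkets at the endpoints of the imbalance structure and reduce to a residual with fewer sign changes. Since $(\mu,\nu)$ is disjoint with $K \geq 2$ sign changes, the real line decomposes into imbalance regions $I_0,\ldots,I_K$ of alternating signs, each carrying only $\mu$- or only $\nu$-mass (of mass $m_j>0$). I build $(\mu_1',\nu_1')$ from the leftmost pair $(I_0,I_1)$: assuming WLOG that $I_0$ is a $\mu$-region, take either all of $\mu|_{I_0}$ and the leftmost $m_0$-mass of $\nu|_{I_1}$ when $m_0 \leq m_1$, or the rightmost $m_1$-mass of $\mu|_{I_0}$ and all of $\nu|_{I_1}$ otherwise. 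By construction the supports of $\mu_1'$ and $\nu_1'$ lie on opposite sides of a single point, so the submarket is anti-diagonal. An analogous construction at $(I_{K-1},I_K)$ yields $(\mu_2',\nu_2')$, and the masses involved in the two submarkets overlap in $\mu|_{I_0\cup\ldots\cup I_K}$ only when $K=2$, where the balance condition $m_0+m_2=m_1$ makes the partition of $I_1$ between the two submarkets unambiguous.

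The key step is to verify that each $(\mu_i',\nu_i')$ is an independent submarket of $(\mu,\nu)$. By construction every $x \in \supp(\mu_1')$ and $y \in \supp(\nu_1')$ lies within $I_0 \cup I_1$, whereas every agent in $\supp(\nu-\nu_1')$ lives in $I_2 \cup I_3 \cup \cdots$, separated from $\supp(\mu_1')$ by at least the width of the next intervening imbalance region; a direct distance comparison then yields $|x-y'| > |x-y|$ almost everywhere, and the symmetric estimate on the $\mu$-side gives the other independence condition. Once both submarkets are shown independent, two applications of Lemma~\ref{lm_independent_submarkets} together with the uniqueness of stable matchings for anti-diagonal markets from Lemma~\ref{lm_benchmark_matching} give assertion~\ref{assert_stable}: any stable matching $\pi$ of $(\mu,\nu)$ decomposes as $\pi_1'+\pi_2'+\pi''$ with each $\pi_i'$ the unique anti-assortative matching of $(\mu_i',\nu_i')$. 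Assertion~\ref{assert_monge} follows because each $\pi_i'$ is supported on the graph of a strictly decreasing map (the anti-assortative pairing), and since $\pi_1', \pi_2', \pi''$ have pairwise disjoint supports in $X\times Y$, the sum $\pi$ is Monge if and only if $\pi''$ is.

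For assertion~\ref{assert_residual}, the choice $m^* = \min(m_0,m_1)$ guarantees that at least one of $I_0, I_1$ is entirely consumed, collapsing the leftmost sign change; an analogous effect at $(I_{K-1},I_K)$ may remove a second sign change, so in all cases $\rho''=\mu''-\nu''$ changes sign at most $K-1$ times. The residual remains disjoint because $\mu''\leq \mu$ and $\nu''\leq \nu$ inherit the mutual singularity of $\mu$ and $\nu$. The runtime claim~\ref{assert_algo} is then immediate: for piecewise-constant density with $m$ intervals of constancy, locating the endpoint regions $I_0, I_1, I_{K-1}, I_K$, computing their masses $m_0, m_1, m_{K-1}, m_K$, and determining the endpoints of the submarket supports all reduce to a single scan over the $m$ intervals, giving $O(m)$ operations. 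I expect the most delicate step to be the independence verification in the second paragraph: when a consumed portion of $I_1$ reaches close to $I_2$, an agent $y$ at that edge of $\supp(\nu_1')$ may be nearly equidistant to a $\mu$-agent at the left end of $I_2$ as to its prescribed partner in $I_0$, so some care in the construction---possibly using gaps between imbalance regions or selecting the endpoint pair that achieves the strictest preference margin---is needed to ensure strict independence holds almost everywhere.
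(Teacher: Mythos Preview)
Your approach has a genuine gap that cannot be patched: extracting submarkets from the \emph{endpoint} pairs $(I_0,I_1)$ and $(I_{K-1},I_K)$ does not in general produce independent submarkets, and the resulting decomposition can fail to be stable. Take $K=3$ with $I_0=[0,10]$ carrying $\mu$-mass $10$, $I_1=[10,11]$ carrying $\nu$-mass $1$, $I_2=[11,12]$ carrying $\mu$-mass $1$, and $I_3=[12,22]$ carrying $\nu$-mass $10$ (all densities equal to $1$). Your rule gives $(\mu_1',\nu_1')=(\mu|_{[9,10]},\,\nu|_{[10,11]})$ and $(\mu_2',\nu_2')=(\mu|_{[11,12]},\,\nu|_{[12,13]})$, with residual matched anti-assortatively between $[0,9]$ and $[13,22]$. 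But then $y=11-\varepsilon\in\supp(\nu_1')$ is paired with $x=9+\varepsilon$ at distance $2-2\varepsilon$, while $x'=11+\varepsilon'\in\supp(\mu_2')$ is paired with $y'=13-\varepsilon'$ at distance $2-2\varepsilon'$; the pair $(x',y)$ has distance $\varepsilon+\varepsilon'$ and blocks. So your matching is not stable, and the worry you flag in your last paragraph is not a boundary nuisance but a structural failure.

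The paper's construction is different in an essential way: rather than fixing the endpoints, it introduces for each sign-change point $a_k$ a symmetric interval $[a_k^-,a_k^+]$ of width $\delta$ with $\rho([a_k^-,a_k^+])=0$, imposes the constraint $a_k^+\le a_{k+1}^-$ for every $k$, and then takes the \emph{maximal} feasible $\delta$. At this maximal $\delta$ there is some interior index $k^*$ where $a_{k^*}^+=a_{k^*+1}^-$; the two submarkets are cut by the adjacent intervals $[a_{k^*}^-,a_{k^*}^+]$ and $[a_{k^*+1}^-,a_{k^*+1}^+]$. The point of maximizing $\delta$ globally is that every agent in either submarket is within $\delta$ of its intended partner, while every residual agent lies outside all the $\delta$-windows and hence at distance at least $\delta$ from the submarket agents---this is what yields independence. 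In the example above the paper would select $\delta=1$ and cut out $(\mu|_{[9.5,10]},\nu|_{[10,10.5]})$ and $(\mu|_{[11,11.5]},\nu|_{[10.5,11]})$, which exhausts the short middle interval $I_1$ rather than the long outer ones. The moral is that the ``tightest'' pair of adjacent sign changes---not the outermost pair---must be peeled off first; your endpoint heuristic misses this, and no local fix at the boundaries recovers it.
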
    
\begin{proof}
    Since~$(\mu,\nu)$ is a non-atomic disjoint market with~$K\geq 2$, we can find points~$a_0<a_1<a_2<\ldots<a_{K}<a_{K+1}$ with~$a_0=-\infty$ and~$a_{K+1}=+\infty$ such that either~$\mu$ is supported on intervals~$I_k=(a_k,a_{k+1})$ with even~$k$ and~$\nu$ is supported on~$I_k$ with odd~$k$, or the other way around. 
    By the minimality of~$K$, each interval~$I_k$ carries strictly positive~$\mu$-mass  or strictly positive~$\nu$-mass. 

Consider~$\rho=\mu-\nu$.  We aim to pick a number~$\delta\geq 0$ and  a pair of points~$a_k^+\leq a_{k+1}^-$ in each closed interval~$\bar I_k=[a_k,a_{k+1}]$ so that the following conditions are satisfied:
\begin{enumerate}
    \item equal-weight condition:~$\rho([a_k^-,a_k^+])=0$ for~$k=1,\ldots, K$; 
    \item equal-distance condition:~$|a_k^+-a_k^-|=\delta$ for~$k=1,\ldots, K$.
\end{enumerate}  
The conditions on the collection of points~$a_k^\pm$ and~$\delta$ are closed since $\rho$ is non-atomic. Consider the maximal~$\delta\geq 0$ such that points satisfying the conditions exist and let~$a_k^\pm$ be the corresponding collection of points. 

As~$\delta$ cannot be increased further, there is an interval~$I_{k^*}=(a_{k^*},a_{k^*+1})$ with~$k^*=1,\ldots, K-1$ such that~$a_{k^*}^+=a_{k^*+1}^-$. In other words, the two points~$a_{k^*}^+$ and~$ a_{k^*+1}^-$ hit each other, which does not allow us to increase~$\delta$. 

Consider the two submarkets~$(\mu_1',\nu_1')$ and~$(\mu_2',\nu_2')$ cut from~$(\mu,\nu)$ by the intervals~$J_1^*=[a_{k^*}^-, a_{k^*}^+]$ and~$J_2^*=[a_{k^*+1}^-, a_{k^*+1}^+]$, i.e.,  
$$\mu_i'(A)=\mu(A\cap J_i^*),\qquad  \nu_i'(A)=\nu(A\cap J_i^*)\qquad i=1,2$$
for any measurable~$A\subset \R$. Denote by~$(\mu'',\nu'')$ the residual submarket~$(\mu'',\nu'')=(\mu-\mu_1'-\mu_2',\nu-\nu_1'-\nu_2')$.

The submarket~$(\mu_1',\nu_1')$ is an independent anti-diagonal submarket of~$(\mu,\nu)$. To show this,  assume without loss of generality that the interval
$I_{k^*}$ carries a positive~$\mu$-weight. Hence,~$\mu_1'$ is supported on~$[a_{k^*},a_{k^*}^+]$ and~$\nu_1'$ on~$[a_{k^*}^-,a_{k^*}]$. By the equal-weight condition,~$\mu_1'(\R)=\nu_1'(\R)$ and so~$(\mu_1',\nu_1')$ is an anti-diagonal submarket of~$(\mu,\nu)$. The equal-distance condition implies that any~$x\in [a_{k^*},a_{k^*}^+]$ and~$y\in [a_{k^*}^-,a_{k^*}]$ are within~$\delta$ from each other. On the other hand, by the maximality of $\delta$, the residual points are at least $\delta$ away; that is, the distances between~$x$ and~$y' \in \supp (\nu'')$ and between~$x'\in \supp (\mu'')$ and~$y$ are at least~$\delta$.
Hence,~$(\mu_1',\nu_1')$ is independent. Similarly,~$(\mu_2',\nu_2')$ is an independent anti-diagonal submarket of~$(\mu,\nu)$. We get assertion~\ref{assert_diag}.

Applying Lemma~\ref{lm_independent_submarkets}, 
we conclude that~$\pi$ is a stable matching of~$(\mu,\nu)$ if and only if~$\pi=\pi_1'+\pi_2'+\pi''$, where~$\pi_i'$ is the unique stable matching of~$(\mu_i',\nu_i')$,~$i=1,2$, and~$\pi''$ is a stable matching of~$(\mu'',\nu'')$. We obtain assertion~\ref{assert_stable}. 

The stable matching for anti-diagonal markets is Monge (Lemma~\ref{lm_benchmark_matching}). Since~$\mu'_1,\mu'_2$, and~$\mu''$ are mutually singular (and similarly for~$\nu$), we conclude that a stable matching~$\pi$ is Monge if and only if~$\pi''$ is Monge. Assertion~\ref{assert_monge} is proved.

The residual submarket~$(\mu'',\nu'')$ is a disjoint market, and~$\rho''=\mu''-\nu''$ changes sign at most~$K-1$ times. Indeed, the interval~$I_{k^*}$ is removed from both~$\mu$ and~$\nu$, and so the number of imbalance regions is reduced by at least~$1$. Thus assertion~\ref{assert_residual} holds.

Finally, suppose that~$\mu$ and~$\nu$ have piecewise constant densities with at most~$m$ intervals of constancy. Without loss of generality, these intervals are common and consecutive, i.e., there is a collection of points~$b_0<b_1<\ldots< b_m$ such that the densities of~$\mu$ and~$\nu$ on an interval~$J_i=(b_i,b_{i+1})$ are constants~$f_i$ and~$g_i$, respectively. 
We can also assume that~$(b_i)_{i=0,\ldots,m}$ is a subsequence of the sequence~$(a_k)_{k=0,\ldots, K+1}$, which defines intervals~$I_k$.

To construct the submarket~$(\mu_1',\nu_1')$ and~$(\mu_2',\nu_2')$, we need to identify an interval~$I_{k^*}$ defined above. This can be done as follows. For each three consecutive intervals~$I_{k-1},I_k,I_{k+1}$, we look for a triplet of points~$\alpha_k^-\in \bar{I_{k-1}}$,~$\alpha_k\in \bar{I_k}$ and~$\alpha_{k}^+\in \bar{I_{k+1}}$ such that 
\begin{equation}\label{eq_I_k_test}
\rho([\alpha_k^-,\alpha_k])=\rho([\alpha_k,\alpha_k^+])=0\qquad \text{and}
\qquad |\alpha_k-\alpha_k^-|=|\alpha_k-\alpha_k^+|.
\end{equation}
If~$I_{k-1},\ I_k,\ I_{k+1}$ contain~$m_k$ intervals of constancy~$J_i$, then finding a solution~$(\alpha_k^-,\alpha_k, \alpha_k^+)$ or checking that no such solution exists can be done in~$O(m_k)$ operations. 
Indeed, to solve the system~\eqref{eq_I_k_test}, it is enough to test each of the subintervals of constancy for whether it contains a solution.
To see this, note  that the cumulative distribution function~$R_k(x)=\rho([a_{k-1},x])$ for~$x\in I_{k-1}\cup I_k\cup I_{k+1}$ is piecewise linear with~$m_k$ intervals of linearity.  Testing corresponds to solving a linear system with three unknowns. We conclude that solving~\eqref{eq_I_k_test} boils down to 
solving of the order of~$m_k$ linear systems of given size and thus requires~$O(m_k)$ operations. 
After finding a solution for each triplet of consecutive intervals, we pick~$k=k^*$ that minimizes~$\delta_k=|\alpha_k-\alpha_k^-|$. The interval~$I_{k^*}$  and points~$(a_{k^*}^-,a_{k^*}^+,a_{k^*+1}^-,a_{k^*+1}^+)=(\alpha_{k^*}^-, \alpha_{k^*}, \alpha_{k^*}, \alpha_{k^*}^+)$ determine~$(\mu_1',\nu_1')$ and~$(\mu_2',\nu_2')$. 

We need~$O(m_k)$ operations per each interval~$I_k$, and thus, the total number of operations is of the order~$\sum_k m_k$, i.e., of the order of the number of intervals of constancy~$m$. Assertion~\ref{assert_algo} is proved.
\end{proof}

We are now ready to prove Proposition~\ref{th_algorithm_appendix}.

\paragraph{Proof of Proposition~\ref{th_algorithm_appendix}.}

Consider~$(\mu,\nu)$ and assume that~$\rho=\mu-\nu$ changes sign~$K$ times. 
By Lemma~\ref{lm_independent_diagonal_submarket}, any stable matching~$\pi=\pi_\diag+\pi_1$, where~$\pi_\diag$ is a unique stable matching of a diagonal market~$(\mu-\rho_+, \nu-\rho_-)$ and~$\pi_1$ is a stable matching of the residual submarket~$(\mu_1,\nu_1)$ with~$\mu_1=\rho_+$ and~$\nu_1=\rho_-$.

The residual submarket~$(\mu_1,\nu_1)$ is disjoint and~$\rho_1=\mu_1-\nu_1$ changes sign~$K$ times since~$\rho_1=\rho$.  We construct a sequence of submarkets~$(\mu_k,\nu_k)$ inductively starting from~$(\mu_1,\nu_1)$. Suppose~$(\mu_k,\nu_k)$ is already constructed and let~$(\mu_k',\nu_k')$ be its independent submarket from Lemma~\ref{lm_disjoint_submarket}. We then define 
$(\mu_{k+1},\nu_{k+1})$ by 
$$\mu_{k+1}=\mu_k-\mu_k'\quad \text{and}\quad \nu_{k+1}=\nu_k-\nu_k'.$$
Let~$\pi_k'$ be a stable matching of~$(\mu_k',\nu_k')$ which is unique since~$\rho_k'=\mu_k'-\nu_k'$ changes sign at most two times. Thus any stable matching of~$(\mu,\nu)$ can be represented as
$$\pi=\pi_\diag+\pi_1'+\pi_2'+\ldots+\pi_{k-1}'+\pi_{k},$$
where~$\pi_{k}$ is a stable matching of~$(\mu_{k},\nu_{k})$. By Lemma~\ref{lm_disjoint_submarket},~$\rho_{k}$ changes sign at most~$K-k+1$ times, so there is~$L\leq K$ such that~$\rho_{L}$ changes sign at most one time.
Abusing the notation, denote the stable matching of~$(\mu_{L},\nu_{L})$ by 
$\pi_{L}'$. This stable matching is unique by Lemma~\ref{lm_benchmark_matching}.

We conclude that a stable matching~$\pi$  of~$(\mu,\nu)$ is unique and has the following form
$$\pi=\pi_{\diag}+\underbrace{\pi_1'+\pi_2'+\ldots+\pi_{L-1}'+\pi_L'}_{\mbox{$L\leq K$ terms}}.$$
By Lemma~\ref{lm_disjoint_submarket}, $\pi_1'+\pi_2'+\ldots+\pi_{L-1}'+\pi_L'$ is a Monge matching. Thus $\pi$ is a combination of a diagonal matching and a Monge matching. Consequently, for each $x$ there are at most two distinct $y$ such that $(x,y)$ is in the support of $\pi$. Moreover, if there are two such~$y$, one of them necessarily equals~$x$. 
\smallskip

We now consider the computational complexity of constructing~$\pi$ for~$\mu$ and~$\nu$ having piecewise constant densities with at most~$m$ intervals of constancy. Without loss of generality, these intervals are common and consecutive, i.e., there is a collection of points~$b_0<b_1<\ldots< b_m$ such that the densities of~$\mu$ and~$\nu$ on an interval~$J_i=(b_i,b_{i+1})$ are constants~$f_i$ and~$g_i$, respectively. The collection of all these numbers is the input of the algorithm.

Constructing the diagonal~$\pi_\diag$ requires a linear number of operations in~$m$. Indeed,~$\mu-\rho_+=\nu-\rho_-$ have density~$\min\{f_i,g_i\}$ on~$J_i$ and~$\pi_{\diag}$ has the corresponding density on the diagonal of~$J_i\times J_i$.

The complexity bottleneck corresponds to finding independent submarkets from Lemma~\ref{lm_disjoint_submarket}. By Assertion~\ref{assert_algo}, this requires of the order of~$m$ operations for each~$(\mu_k,\nu_k)$. The number of steps~$k$ is bounded by the number of times~$\rho$ changes its sign, and this number cannot exceed~$m-1$. We conclude that the stable matching can be constructed in~$O(m^2)$ operations.

\section{Proof of Corollary~\ref{cor_existence}}\label{app_corr_existence}
We first show that the sets~$\Pi_{+\infty}^{U}(\mu,\nu)$ and ~$\Pi_{-\infty}^{U}(\mu,\nu)$, corresponding to~$\alpha\to \pm\infty$, are non-empty. The argument for the two cases is identical, and so we focus on~$\Pi_{+\infty}^{U}(\mu,\nu)$. Consider a sequence~$\alpha_n\to+\infty$ and let~$\pi_n$ be a solution to the optimal transportation problem from Theorem~\ref{th_equivalence} with~$\alpha=\alpha_n$. Such a solution is guaranteed to exist under our assumptions on~$U$; see the discussion after the theorem. The set of all transportation plans~$\Pi(\mu,\nu)$ for~$\mu\in \Delta(X)$ and~$\nu\in\Delta(Y)$ with Polish~$X$ and~$Y$ is sequentially compact in the topology of weak convergence; this is a corollary of Prokhorov's theorem, see Lemma~4.4 in \cite*{villani2009optimal}. Thus, possibly passing to a subsequence, we conclude that the sequence~$\pi_n$ converges weakly to some~$\pi_{+\infty}\in \Pi(\mu,\nu)$. Thus~$\Pi_{+\infty}^{U}(\mu,\nu)$ and~$\Pi_{-\infty}^{U}(\mu,\nu)$ are non-empty.

We now show that~$\Pi_{+\infty}^{U}(\mu,\nu)$ consists of stable matchings. In other words, we demonstrate that~$\pi_{+\infty}$ is stable. Consider a continuous function~$f\colon (X\times Y)^2\to \R$ given by
$$f(x_1,y_1,x_2,y_2)=\max\Big\{0, \ \ U(x_1,y_2)-\max\left\{U(x_1,y_1),\, U(x_2,y_2)\right\}\Big\}.$$
A matching~$\pi$ is stable if and only if 
$\int f \diff(\pi\times \pi)=0$. Note that for an~$\varepsilon$-stable matching, this integral does not exceed~$\varepsilon\cdot \mu(X)\cdot \nu(Y)$. We obtain
$$\int f \diff(\pi_{+\infty}\times \pi_{+\infty})=\lim_{n\to\infty}\int f \diff(\pi_n\times \pi_n) \leq \lim_{n\to\infty} \frac{\ln 2}{\alpha_n}\cdot \mu(X)\cdot \nu(Y)=0.$$
Since the left-hand side is non-negative, we get that it equals zero. Thus 
$\pi_{+\infty}$ is stable. We conclude that~$\Pi_{+\infty}^{U}(\mu,\nu)$ consists of stable matchings, and so a stable matching exists.

To show that~$\Pi_{-\infty}^{U}(\mu,\nu)$ consists of egalitarian matchings, consider a weak limit~$\pi_{-\infty}$ of a sequence of matchings~$\pi_{\alpha_n}$, where~$\alpha_n\to -\infty$. Fix~$\varepsilon>0$ and  let~$C_\varepsilon$ be the set of hypothetical couples whose utility is below the egalitarian lower bound by more than~$\varepsilon$
$$C_\varepsilon=\left\{(x,y)\in X\times Y\colon U(x,y)< \mathcal{U}_{\min}^*(\mu,\nu)-\varepsilon\right\}.$$
By the continuity of~$U$, the set~$C_\varepsilon$ is open and thus
$\pi_{-\infty}(C_\varepsilon)\leq \liminf_{n\to\infty} \pi_{\alpha_n}(C_\varepsilon)$.
By Theorem~\ref{th_equivalence}, the right-hand side goes to zero, and thus~$\pi_{-\infty}(C_\varepsilon)=0$ for any~$\varepsilon>0$. Sets~$C_\varepsilon$  are decreasing in~$\varepsilon$. Hence,
$$\pi_{-\infty}(C_0)= \pi_{-\infty}\left(\cup_{\varepsilon>0} C_\varepsilon \right)=\lim_{\varepsilon\to 0} \pi_{-\infty}(C_\varepsilon)=0.$$
Since~$\pi_{-\infty}(C_0)=0$, for~$\pi_{-\infty}$-almost all couples~$(x,y)$ the utility~$U(x,y)$ is at least~$\mathcal{U}_{\min}^*(\mu,\nu)$, i.e.,~$\pi_{-\infty}$ is an egalitarian matching. Therefore, all elements of~$\Pi_{-\infty}^{U}(\mu,\nu)$ are egalitarian matchings.

Finally, 
the weak closedness of~$\Pi_{+\infty}^{U}(\mu,\nu)$ and~$\Pi_{-\infty}^{U}(\mu,\nu)$ follows from the weak closedness of the set of solutions via the standard diagonal procedure.

\section{Multi-sided matching}\label{sec_multi_side}
Markets with more than two sides, referred to as multi-partner matching markets, are known to be challenging. Stable matchings may not exist, even in small finite settings. It is therefore notable that our results for markets with aligned preferences extend to this setting, opening up the theory to additional applications such as teams, clubs, and coalition formation.

First, we revisit the standard finite model. Let $X_1,\ldots,X_k$ be finite sets with the same cardinality, representing $k\geq 2$ sides of the market. For each side $i\in\{1,\ldots,k\}$ of the market and $x_i\in X_i$, agent $x_i$ has a complete and
transitive preference $\succeq_{x_i}$ over partner tuples
$x_{-i}\in\prod_{j\neq i}X_j$. A {matching} is a collection of $k$-tuples
$(x_1,\ldots,x_k)$ that partitions $X_1\cup\cdots\cup X_k$ so that every
$x_i\in X_i$ appears in exactly one tuple; write $M(x_i)$ for the tuple
containing $x_i$. A $k$-tuple $x=(x_1,\ldots,x_k)$ {blocks} a matching
$M$ if each participant strictly prefers $x$ to her assigned tuple,
$x_{-i}\succ_{x_i}M(x_i)_{-i}$ for all $i$; a matching is {stable} if there are no
blocking $k$-tuples.

The multi-partner matching problem is well known to be intractable. 
Even with three sides ($k = 3$) and preferences that are additively separable---so that each agent’s preference $\succ_{x_i}$ can be represented by an additively separable utility function $u_i(x) = \sum_{j \neq i} v_i(x_i, x_j)$---a stable matching may fail to exist \citep*{alkan1988nonexistence}. Moreover, determining whether any stable matching exists is an NP-hard problem \citep*{ng1991three}.
The case of aligned preferences turns out to be an exception: a stable matching is guaranteed to exist, and the theory we have developed for two-sided markets readily extends to multi-partner matching.

To parallel the two-sided model in Section~\ref{sec_epsiloin_stable_general}, we
move to a general measure-theoretic formulation. For each $i$, let $X_i$ be a
complete separable metric space with its Borel $\sigma$-algebra, and let
$\mu_i\in \Delta(X_i)$ represent the distribution of types on side $i$. Denote $X=X_1\times \ldots \times X_k$.
The market is balanced
since $\mu_1(X_1)=\cdots=\mu_k(X_k)=1$ but unbalanced markets can be
handled by adding “dummy’’ agents. Each $x_i\in X_i$ has a (complete,
transitive) preference $\succeq_{x_i}$ over $X_{-i}:=\prod_{j\neq i}X_j$. The
preferences are \textbf{aligned} if there exists
$U\colon X\to\R$ 
such that, for every $i$, every $x_i\in X_i$, and all $x_{-i},x'_{-i}\in X_{-i}$,
\begin{equation}\label{eq_potential_k}
U(x_i,x_{-i}) \;\ge\; U(x_i,x'_{-i})
\quad\Longleftrightarrow\quad
x_{-i}\succeq_{x_i} x'_{-i}.
\end{equation}
As in the two-sided case, once alignment holds we treat $U$ as primitive and
refer to it as the (common) utility or potential.

A~\textbf{matching}~$\pi$ is a probability distribution on~$X$
with marginals~$\mu_i$ on~$X_i$; the
set of all matchings is denoted by $\Pi(\mu_1,\ldots, \mu_k)$. To express stability in this setting, draw $k$ existing tuples independently:
$\pi^{\times k}$ denotes the product measure on $X^k$, with realizations
$x^1,\ldots,x^k\in X$. Consider the ``recombined'' tuple
$(x^1_1,\ldots,x^k_k)$. Stability requires that for $\pi^{\times k}$-almost all
such collections, at least one participant weakly prefers her current tuple to
the recombined one. Under aligned preferences this is
equivalent to the potential inequality
\begin{equation}\label{eq_stable_common_k}
U(x^1_1,\ldots,x^k_k)\;\le\;\max_{i=1,\ldots,k} U(x^i).
\end{equation}
We call a matching $\pi$ \textbf{stable} if \eqref{eq_stable_common_k} holds for
$\pi^{\times k}$-almost all $(x^1,\ldots,x^k)$. For $\varepsilon\ge0$, $\pi$ is
$\varepsilon$-\textbf{stable} if we allow an $\varepsilon$-slack in
\eqref{eq_stable_common_k}:
\begin{equation*}
U(x^1_1,\ldots,x^k_k)\;\le\;\max_{i=1,\ldots,k} U(x^i)+\varepsilon.
\end{equation*}
When $U$ is continuous, Lemma~\ref{lm_deterministic_equivalence} (extended
verbatim to the $k$-sided case) allows an equivalent pointwise formulation by
requiring these inequalities for all $x^i\in\supp(\pi)$.
The notions of $\varepsilon$-egalitarian matchings and optimal welfare $W^*$
extend to the $k$-sided case in the same way as for two-sided markets.

As in the two-sided case, we consider a CARA-based version of Atkinson’s inequality index
\begin{equation*}
\text{Atkinson}_{\alpha}(\pi)=\int_{X}\frac{1-\exp(\alpha\cdot U(x))}{\alpha}\diff\pi(x),
\end{equation*}
and assume that a social planner seeks to minimize this index across all matchings:
\begin{equation}\label{eq_Atkinson_minimization_k_sided}
    \min_{\pi\in \Pi(\mu_1,\ldots, \mu_k)}\text{Atkinson}_\alpha(\pi).
\end{equation}
This constitutes a particular $k$-marginal optimal transportation problem.\footnote{A general $k$-marginal optimal transportation problem is defined as follows: given a cost function $c\colon X\to\R$, find a matching $\pi$ that minimizes
$
\min_{\pi\in \Pi(\mu_1,\ldots, \mu_k)}\int_{X} c(x)\, \diff\pi(x).$} 
Using this connection to optimal transport, we now extend Theorems~\ref{th_equivalence} and \ref{th_welfare} to the~$k$-partner setting.

\begin{theorem}\label{th_k_sided}
       Consider a~$k$-partner matching market with aligned preferences and bounded continuous potential~$U\colon X_1\times\ldots\times X_k\to \R_+$. Let~$\pi^*$ be a solution to~\eqref{eq_Atkinson_minimization_k_sided}. The following assertions hold:
       \begin{enumerate}
       \item \label{asert_stable_k_side} If~$\alpha>0$, then~$\pi^*$ is~$\varepsilon$-stable with~$\varepsilon=\frac{\ln k}{\alpha}$.
       \item \label{asert_egalitarian_k_side} If~$\alpha<0$, then~$\pi^*$ 
is~$\varepsilon$-egalitarian with~$\varepsilon=\frac{\max\{1,\, \ln |\alpha|\}}{|\alpha|}.$
\item \label{asert_welfare_k_side} 
Any~$\varepsilon$-stable matching~$\pi$ satisfies
$W(\pi)\geq \frac{1}{k}\left(W^*-\varepsilon\right),$ and
moreover,~$\pi$ is~$\varepsilon'$-egalitarian with~$\varepsilon'=\max\left\{\frac{k-1}{k},\, \varepsilon\right\}.$
\end{enumerate}    
\end{theorem}
As we see, the factor~$2$ in Theorems~\ref{th_equivalence} and~\ref{th_welfare} corresponds to the number of sides of the market. Thus, $2$ is replaced by~$k$ in the first and the third claims of Theorem~\ref{th_k_sided}. Interestingly, the approximation quality for the egalitarian goal is not sensitive to the number of sides $k$.

Before proving the theorem, we outline an example illustrating how it can be applied beyond the standard settings of team, club, or coalition formation.  The following example concerns organ exchanges under medical compatibility constraints.
It illustrates how such constraints can be encoded by modeling the market as
multi-sided with a common potential. The formulation is intentionally stylized
and serves only to highlight what can be achieved in $k$-sided matching
environments \citep*{roth2005pairwise,roth2007efficient}.
\begin{example}[Organ exchanges]\label{ex_organ}
An agent is a patient-donor pair, where the donor is willing to donate an organ to the patient, but the donation is not feasible because the two are biologically incompatible. For practical reasons---see \citep*{roth2005pairwise,roth2007efficient}---we rule out complicated organ trades and focus on pairwise exchanges. At first glance this looks like a one-sided (roommate)
problem in which any pair could trade with any other, but compatibility
constraints can be captured cleanly by recasting the environment as a
multi-sided market with aligned preferences.

Suppose there are $k$ {types} of agents (e.g., blood-type
profiles such as $(\mathrm O,\mathrm A)$, $(\mathrm A,\mathrm B)$, etc.; other
markers can be incorporated similarly). For each type $i\in\{1,\ldots,k\}$,
let each pair be described by an organ-size vector $s^i=(s^i_1,s^i_2)\in\R_+^2$
(patient size, donor size).  Let $\mu_i$ be the distribution on $\R_+^2$ of 
size vectors within type $i$, normalized so that $\mu_i(\R_+^2)$ is the total
share of type $i$ agents. Let
$G=(\{1,\ldots,k\},E)$ be a {compatibility graph}: $(i,j)\in E$ if and only if a
donor from type $i$ is medically compatible with a patient from type $j$
(and vice versa). For every compatible pair $(i,j)\in E$ and size vectors
$s^i,s^j\in\R_+^2$, let $q_{i,j}(s^i,s^j)$ denote the match quality of a
two-way exchange between types $i$ and $j$.

To embed this as a $k$-sided market, treat each type $i$ as a separate side. Add a dummy point $d_i$ on side $i$ and assign it mass $1-\mu_i(\R_+^2)$, so that each side has total mass one.
 Formally, set $X_i=\R_+^2\cup\{d_i\}$ and redefine $\mu_i$ on $X_i$ by placing any extra mass at $d_i$. We then define a potential $U:\prod_{i=1}^k X_i\to\R$ by
$$
U(x_1,\ldots,x_k)=
\begin{cases}
q_{i,j}(x_i,x_j), & \mbox{if~$(i,j)\in E$, and~$x_i, x_j$ are the only non-dummy types}\\
0, & \mbox{if there is at most one non-dummy type}\\
-M, & \mbox{otherwise}
\end{cases},
$$
where~$M>0$ is a large number.
Thus $U$ assigns quality $q_{i,j}$ to feasible
two-way exchanges, gives $0$ to configurations in which a non-dummy agent
remains unmatched, and penalizes any configuration that is infeasible or
attempts a multi-way exchange.

This construction transforms the organ exchange problem into a $k$-sided matching market with a common match-quality potential. The general results established above then apply: one can compute stable or $\varepsilon$-stable matchings and analyze their egalitarian and welfare properties, as well as the associated trade-offs, within the same multi-marginal optimal-transport framework.

\end{example}
\begin{proof}[\textbf{\emph{Proof of Theorem~\ref{th_k_sided}}}]\label{app_proof_k_sided}
We extend the proof techniques used in the two-sided case. 

\noindent\textit{Proof of Assertion~\ref{asert_stable_k_side}.} 
     \cite*{kim2014general} generalize the two-marginal~$c$-cyclic monotonicity condition that we used in the proof of Theorem~\ref{th_equivalence}. They consider a~$k$-marginal optimal transportation problem with a bounded continuous cost~$c\colon X_1\times \ldots\times X_k\to \R$ and show that~$\pi$ is optimal if and only if~$\supp(\pi)$ is~$c$-cyclically monotone. Here, a set~$T\subset X_1\times \ldots\times X_k$ is~$c$-cyclically monotone if for any collection of~$k$-tuples 
~$(x_1^{i},\ldots, x_k^i)\in T$,~$i=1,\ldots, n$, and a collection of~$k$ permutations~$\sigma_1,\ldots, \sigma_k$ of the set~$\{1,\ldots, n\}$, the following inequality holds:
$$\sum_{i=1}^n c(x_1^i,\ldots, x_k^i)\leq \sum_{i=1}^n c(x_1^{\sigma_1(i)},\ldots, x_k^{\sigma_k(i)}).$$

      We now apply this result to~$c=c_\alpha$ with
$$ c_\alpha(x)=\frac{1-\exp(\alpha\cdot U(x))}{\alpha}.$$
Let $\pi^*$ be a solution to the optimal transportation problem with this cost.  
      Take the number of points~$n$ equal to the number of marginals~$k$ and put~$\sigma_i(1)=i$. We obtain that for  any collection~$x^i =(x_1^{i},\ldots, x_k^i)\in \supp(\pi^*)$,~$i=1,\ldots, k$
      \begin{align*}
        \sum_{i=1}^k \exp\left(\alpha \cdot U(x^i)\right) & \geq \sum_{i=1}^k \exp\left(\alpha \cdot U(x_1^{\sigma_1(i)},\ldots, x_k^{\sigma_k(i)})\right) \\ & = \exp\left(\alpha \cdot U(x_1^1,\ldots, x_k^k)\right) + \sum_{i=2}^k \exp\left(\alpha \cdot U(x_1^{\sigma_1(i)},\ldots, x_k^{\sigma_k(i)})\right).  
      \end{align*}
      Thus 
$$k\cdot \max_i \exp\left(\alpha \cdot U(x^i)\right) \geq \exp\left(\alpha \cdot U(x_1^1,\ldots, x_k^k)\right).$$
      Taking logarithm, we get 
$$\max_i U(x^i) +\frac{\ln k}{\alpha} \geq  U(x_1^1,\ldots, x_k^k)$$
and conclude that $\pi^*$ is~$\varepsilon$-stable with~$\varepsilon=\frac{\ln k}{\alpha}$. 
\medskip

\noindent\textit{Proof of Assertion~\ref{asert_egalitarian_k_side}.} 
Let $\mathcal{U}_{\min}^*(\mu_1,\ldots, \mu_k)$ be the egalitarian lower bound and $\pi'$ be a matching such that $U(x)\geq \mathcal{U}_{\min}^*-\delta$ where $\delta>0$ is small.
Consider a set
$$C=\left\{x\in X \colon U(x)< \mathcal{U}_{\min}^*-\varepsilon \right\}.$$
For $\alpha<0$, a solution~$\pi^*$ to the optimal transportation problem with cost~$c_\alpha$ maximizes
$\int -\exp({\al\cdot U(x)}) \diff \pi^*$ over all matchings and thus
$$\int -\exp({\al\cdot U(x)}) \diff \pi^*(x)\geq \int -\exp({\al\cdot U(x)}) \diff \pi'(x)\geq -\exp\big({\al\cdot (\mathcal{U}_{\min}^*-\delta)}\big).$$
Since $\delta$ is arbitrary, we obtain
\begin{align*}
-\exp\big({\alpha \cdot \mathcal{U}_{\min}^*}\big) &  \leq \int -\exp({\al\cdot U(x)}) \diff \pi^*(x) \\
& = \int_{C} -\exp({\al\cdot U(x)}) \diff \pi^*(x)+ \int_{X\setminus C} -\exp({\al\cdot U(x)}) \diff \pi^*(x)\\
& \leq \int_{C} -\exp({\al\cdot U(x)}) \diff \pi^*(x) \\
& \leq -\exp\big({\alpha \cdot (\mathcal{U}_{\min}^*-\ep)}\big)\cdot \pi^*(C).
\end{align*}
Thus
$$\al\cdot  \mathcal{U}_{\min}^*\geq \al\cdot  (\mathcal{U}_{\min}^*-\ep) + \ln (\pi^*(C)),$$
and so~$\al\cdot  \ep\geq \ln(\pi^*(C)).$ Plugging in~$\varepsilon=\frac{\max\{1,\, \ln |\alpha|\}}{|\alpha|}$, we get 
$$\pi^*(C)\leq \min \left\{\exp(-1),\, \frac{1}{|\alpha|}\right\}$$
and thus 
$\pi^*(C)\leq \varepsilon.$
We conclude that~$\pi^*$ is~$\varepsilon$-egalitarian with~$\varepsilon=\frac{\max\{1,\, \ln |\alpha|\}}{|\alpha|}$.

\medskip

\noindent\textit{Proof of Assertion~\ref{asert_welfare_k_side}.} 
Let~$\pi$ be an~$\varepsilon$-stable matching. Since~$U$ is continuous,~$\varepsilon$-stability implies that, for all collections of points~$x^i=(x_1^i,\ldots,x_k^i)\in \supp(\pi)$,~$i=1,\ldots, k$,
\[
U(x^1_1,\ldots,x^k_k) \leq \max_{i=1,\ldots,k}U(x^i)+\varepsilon \leq U(x^1)+\ldots+ U(x^k)+ \varepsilon.
\]
Let~$\pi'$ be any other matching.
Denote $X=X_1\times \ldots\times X_k$ and 
consider a distribution
$$\lambda\in \Delta\big(\underbrace{X\times\ldots \times X}_{\text{$k$ times}}\big)$$ such that
the marginals of~$\lambda$ on each~$X$ are equal to~$\pi$, and the marginal on the set~$\{(x_1^1,\ldots, x_k^k):(x^1,\ldots,x^k)\in X\times \ldots \times X\}$ coincides with~$\pi'$. As in the case of two-sided markets, such a $\lambda$ can be obtained by the standard gluing construction, using regular conditional distributions of $\pi$ and the matching $\pi'$. We  get
\begin{align*}
    W(\pi')&=\int_{X} U(x_1^1,\ldots,x_k^k)\diff\pi'(x_1^1,\ldots,x_k^k)=\int_{X\times \ldots \times X} U(x_1^1,\ldots,x_k^k)\diff\lambda(x^1,\ldots,x^k)\\
    &\leq \int_{X\times \ldots \times X} \left(U(x^1)+\ldots+ U(x^k)+ \varepsilon\right)\diff\lambda(x^1,\ldots,x^k)\\
    &=k\cdot W(\pi)+\varepsilon. 
    \end{align*}
    We thus obtain~$W(\pi)\geq \frac{1}{k}\left(W(\pi')-\varepsilon\right)$
     for any matching~$\pi'$. In particular, this inequality holds for~$\pi'$ maximizing welfare. Thus~$W(\pi)\geq \frac{1}{k}\left(W^*-\varepsilon\right)$.

     We now show that a substantial fraction of agents in an~$\varepsilon$-stable matching~$\pi$ have utilities above the egalitarian utility level~$\mathcal{U}_{\min}^*$. Consider
$$C_\delta=\left\{x\in X\colon U(x)< \mathcal{U}_{\min}^*-\varepsilon-\delta \right\},$$
where $\delta$ is a small positive number.
     Our goal is to bound ~$\pi(C_0)$, but we will bound $\pi(C_\delta)$ with $\delta>0$ first. Fix $\delta>0$. 
     As above, let~$\lambda$ be a distribution on~$X\times\ldots  \times X$  with marginals~$\pi$ on each copy of~$X$ and~$\pi'$ on~$(x_1^1,\ldots,x_k^k)$. Let~$\pi'$ in this construction be a matching such that $U(x)\geq \mathcal{U}_{\min}^*-\delta$ for $\pi'$-almost all~$x$.
     Thus~$U(x_1^1,\ldots, x_k^k)\geq \mathcal{U}_{\min}^*-\delta$ on a set of full~$\lambda$-measure.
     By~$\varepsilon$-stability,~$U(x_1^1,\ldots, x_k^k)\leq \max_i U(x^i)+ \varepsilon$ and thus
$$\max_i U(x^i)+ \varepsilon \geq \mathcal{U}_{\min}^*-\delta$$
on a set of full~$\lambda$-measure. However, on $C_\delta\times\ldots \times C_\delta$, the inequality is reversed
$$\max_i U(x^i)+ \varepsilon < \mathcal{U}_{\min}^*-\delta$$
and thus this set must have zero measure 
$$\lambda(C_\delta\times\ldots\times C_\delta)=0.$$
On the other hand, the measure of this set admits the following lower bound
     \begin{align*}
     \lambda(C_\delta\times\ldots\times C_\delta)&\geq 1 - \sum_i\lambda(\{(x^1,\ldots, x^k)\colon x^i\not\in C_\delta  \})\\
     &= 1- \sum_i (1-\pi(C_\delta))\\
     &=k\cdot \pi(C_\delta)-(k-1).
     \end{align*}
     Thus $k\cdot \pi(C_\delta)-(k-1)\leq 0$ or, equivalently, $\pi(C_\delta)\leq \frac{k-1}{k}$. The sequence of sets $C_\delta$ is decreasing in $\delta$ and $C_0=\cup_{\delta>0} C_\delta$. Hence,
$$\pi(C_0)=\lim_{\delta\to 0}\pi(C_\delta)\leq \frac{k-1}{k}$$
and thus any~$\varepsilon$-stable matching~$\pi$ is~$\varepsilon'$-egalitarian with~$\varepsilon'=\max\{\frac{k-1}{k}, \varepsilon\}$.
\end{proof}
\end{document}